\title{Ergodic Theorems for PSPACE functions and their converses}
\author[1]{Satyadev Nandakumar}
\author[1]{Subin Pulari}
\affil[1]{
  Department of Computer Science and Engineering\\
  Indian Institute of Technology Kanpur,
  Kanpur, Uttar Pradesh, India.
}
\affil[]{\{\textit{satyadev,subinp}\}@cse.iitk.ac.in}
\newcommand{\PROOF}{\begin{proof}}
\newcommand{\QED}{\end{proof}}
\newcommand{\N}{\mathbb{N}}
\newcommand{\Q}{\mathbb{Q}}
\newcommand{\R}{\mathbb{R}}
\newcommand{\SUBEXP}{\mathrm{SUBEXP}}
\newcommand{\PSPACE}{{\rm PSPACE}}
\newcommand{\EXP}{{\rm EXP}}
\def\<{\left\langle}
\def\>{\right\rangle}
\theoremstyle{plain}
\newtheorem{theorem}{Theorem}[section]
\newtheorem{lemma}[theorem]{Lemma}
\newtheorem{corollary}[theorem]{Corollary}
\newtheorem*{theorem*}{Theorem}
\theoremstyle{definition}
\newtheorem{definition}[theorem]{Definition}
\begin{document}

\maketitle
\begin{abstract}
We initiate the study of effective pointwise ergodic theorems in
resource-bounded settings. Classically, the convergence of the ergodic
averages for integrable functions can be arbitrarily slow
\cite{Kre78}. In contrast, we show that for a class of $\PSPACE$ $L^1$
functions, and a class of $\PSPACE$ computable measure-preserving
ergodic transformations, the ergodic average exists for all $\PSPACE$
randoms and is equal to the space average on every $\EXP$ random. We
establish a partial converse that $\PSPACE$ non-randomness can be
characterized as non-convergence of ergodic averages. Further, we
prove that there is a class of resource-bounded randoms, \emph{viz.}
$\SUBEXP$-space randoms, on which the corresponding ergodic theorem
has an exact converse - a point $x$ is $\SUBEXP$-space random if and
only if the corresponding effective ergodic theorem holds for $x$.
\end{abstract}

\section{Introduction}
In Kolmogorov's program to found information theory on the theory of
algorithms, we investigate whether individual ``random'' objects obey
probabilistic laws, \emph{i.e.}, properties which hold in sample
spaces with probability 1. Indeed, a vast and growing literature
establishes that \emph{every} Martin-L\"of random sequence (see for
example, \cite{downey} or \cite{Nies2009}) obeys the Strong Law of
Large Numbers \cite{vanL87a}, the Law of Iterated Logarithm
\cite{Vovk88}, and surprisingly, the Birkhoff Ergodic Theorem
\cite{Vyug97}, \cite{Nan08}, \cite{effprobtomltheory}, \cite{BDHMS10}
and the Shannon-McMillan-Breiman theorem \cite{Hochman09},
\cite{Hoy12}, \cite{RuteThesis}. In effective settings, the theorem
for Martin-L\"of random points implies the classical theorem since the
set of Martin-L\"of randoms has Lebesgue measure 1, and hence is
stronger. 

In this work, we initiate the study of ergodic theorems in
resource-bounded settings. This is a difficult problem, since
classically, the convergence speed in ergodic theorems is known to be
arbitrarily slow (\emph{e.g.} see Bishop \cite{Bish67}, Krengel
\cite{Kre78}, and V'yugin \cite{Vyug97}). However, we establish
ergodic theorems in resource-bounded settings which hold on every
resource-bounded random object of a particular class.  The main
technical hurdle we face is the lack of sharp tail bounds. The only
general tail bound in ergodic settings is the maximal ergodic
inequality, which yields only an inverse linear bound in the number of
sample points, in contrast to the inverse exponential bounds in the
Chernoff and the Azuma-Hoeffding inequalities.

Rapid $L^1$ convergence of subsequences of ergodic averages suffices
to establish that the ergodic average at all $\PSPACE$ randoms exist,
and is equal to the space average on all $\EXP$ randoms. A
non-trivial connection with the theory of uniform distribution of
sequences modulo 1 \cite{kuipersniederreiter},
\cite{maxfield1952short}, \cite{Pillai40},
\cite{nandakumar2019analogue} enables us to show that the canonical
example of the Bernoulli measure and the left shift satisfies our
convergence assumption. In general, such assumptions are unavoidable
since an adaptation of V'yugin's counterexample \cite{Vyug97} shows
that there are $\PSPACE$ computable ergodic Markov systems where the
convergence rate to the ergodic average is not even computable.

It is known that at the level of Martin-L\"of randomness, we can use
ergodicity to characterize randomness. Franklin and Towsner
\cite{franklin2014randomness} show that for every non-Martin-L\"of
random $x$, there is an effective ergodic system where the ergodic
average at $x$ does not converge to the space average. We first show
that our $\PSPACE$ effective ergodic theorem admits a partial converse
of this form. $\PSPACE$ non-randoms can be characterized as points
where the $\PSPACE$ ergodic theorem fails. Since the theorem holds on
the smaller set of $\EXP$ randoms, it is important to know
whether there is a class of resource-bounded randoms on which an
effective ergodic theorem holds with an exact converse. We show that
the class of $\SUBEXP$-space randoms is one such - on every
$\SUBEXP$-space randoms, the $\SUBEXP$-space ergodic theorem holds,
and on every $\SUBEXP$-space non-random, it fails. We summarize our
results in Table \ref{tbl}.

The proofs of these results are adapted from the techniques of Rute
\cite{RuteThesis}, Ko \cite{Ko2012}, Galatolo, Hoyrup \& Rojas
\cite{hoyruprojasgalatolo}, \cite{hoyrup2009}, and Huang \& Stull
\cite{HuaStu16}.\footnote{There are alternative approaches to the
proof in Martin-L\"of settings, like that of V'yugin
\cite{Vyug97}. However, the tool he uses for establishing the result
is a lower semicomputable test defined on infinite sequences - this is
difficult to adapt to resource-bounded settings requiring the output
value within bounded time or space. Moreover, the functions in
V'yugin's approach are continuous. We consider the larger class of
$L^1$ functions, which can be discontinuous in general. } Our proofs
involve several new quantitative estimates, which may of general
interest.

\begin{table}[!h]
\centering
\begin{tabular}{|>{\centering\arraybackslash} m{3cm}| >{\centering\arraybackslash} m{4cm}| >{\centering\arraybackslash} m{4cm}|}
\hline
\multirow{2}{*}{Class of functions} & \multicolumn{2}{c|}{Convergence of ergodic averages} \\ \cline{2-3}
& $\forall f (A^f_n \to \int f d\mu)$ & $\exists f (A^f_n \not\to \int f d\mu)$ \\
\hline 
$\PSPACE$ $L^1$ & $\EXP$ randoms (Theorem \ref{thm:pspaceergodictheorem}) & $\PSPACE$ non-randoms  (Theorem \ref{thm:conversetopspaceergodictheorem})\\
\hline
$\SUBEXP$-space $L^1$ & $\SUBEXP$-space randoms (Theorem \ref{thm:subexpergodictheorem}) & $\SUBEXP$-space non-randoms  (Theorem \ref{thm:conversetosubexpergodictheorem})\\
\hline
\end{tabular}
\caption{Summary of the results involving $\PSPACE$/$\SUBEXP$-space
  systems.}
\label{tbl}
\end{table}

\section{Preliminaries}
\label{sec:preliminaries}
Let $\Sigma = \{0,1\}$ be the binary alphabet. Denote the set of all
finite binary strings by $\Sigma^*$ and the set of infinite binary
strings by $\Sigma^\infty$. For $\sigma \in \Sigma^*$ and $y \in
\Sigma^* \cup \Sigma^\infty$, we write $\sigma \sqsubseteq y$ if
$\sigma$ is a prefix of $y$. For any infinite string $y$ and any
finite string $\sigma$, $\sigma[n]$ and $y[n]$ denotes the character
at the $n$\textsuperscript{th} position in $y$ and $\sigma$
respectively. For any infinite string $y$ and any finite string
$\sigma$, $\sigma[n,m]$ and $y[n,m]$ represents the strings
$\sigma[n]\sigma[n+1]\dots \sigma[m]$ and $y[n]y[n+1]\dots y[m]$
respectively. For any $x \in \Sigma^{\infty}$ and $n \in \N$, $x \upharpoonleft n$ denotes the string $x[1,n]$. We denote finite strings using small Greek letters like
$\sigma$, $\alpha$ etc. The length of a finite binary string $\sigma$
is denoted by $|\sigma|$.

For any finite string $\sigma$, the \emph{cylinder} $[\sigma]$ is the
set of all infinite sequences with $\sigma$ as a prefix.
$\chi_{\sigma}$ denotes the characteristic function of $[\sigma]$. For
any set of strings $S \subseteq \Sigma^*$, $[S]$ is the union of
$[\sigma]$ over all $\sigma \in S$. Extending the notation, $\chi_{S}$
denotes the characteristic function of $[S]$. The Borel
$\sigma$-algebra generated by the set of all cylinders is denoted by
$\mathcal{B}(\Sigma^\infty)$.

Unless specified otherwise, any $n\in \N$ is represented in the binary
alphabet. As is typical in resource-bounded settings, some integer
parameters are represented in unary. The set of unary strings is
represented as $1^*$, and the representation of $n \in \N$ in unary is
$1^n$, a string consisting of $n$ ones. For any $n_1,n_2 \in \N$,
$[n_1,n_2]$ represents the set $\{n \in \N: n_1 \leq n \leq n_2\}$.

Throughout the paper we take into account the number of cells used in
the output tape and the working tape when calculating the space
complexity of functions. We assume a finite representation for the set
of rational numbers $\Q$ satisfying the following: there exists a $c
\in \N$ such that if $r \in \Q$ has a representation of length $l$
then $r \leq 2^{l^c}$ . Following the works of Hoyrup, and Rojas
\cite{hoyrup2009}, we introduce the notion of a $\PSPACE$-computable
probability space on the Cantor space by endowing it with a
$\PSPACE$-computable probability measure.

\begin{definition}
  Consider the probability space $(\Sigma^\infty,
  \mathcal{B}(\Sigma^\infty))$.  A Borel probability measure $\mu:
  \mathcal{B}(\Sigma^\infty) \to [0,1]$, is a
  \emph{$\PSPACE$-probability measure} if there is a
  $\PSPACE$ machine $M: \Sigma^* \times 1^* \to \Q$ such
  that for every $\sigma \in \Sigma^*$, and $n \in \N$, we have that
  $|M(\sigma, 1^n) - \mu([\sigma])| \le 2^{-n}$. 
  
  
  A \emph{$\PSPACE$-probability Cantor space} is a pair $(\Sigma^\infty,\mu)$
  where $\Sigma^\infty$ is the Cantor space, and $\mu$ is a $\PSPACE$ probability
  measure.
\end{definition}

In order to define $\PSPACE$ ($\EXP$) randomness using $\PSPACE$ ($\EXP$) tests we require the following method for approximating sequences of open sets in $\Sigma^\infty$ in polynomial space (exponential time).

\begin{definition}[$\PSPACE$/$\EXP$ sequence of open sets
    \cite{HuaStu16}]  
  \label{def:pspaceopensets}
  A sequence of open sets $\<U_n\>_{n=1}^\infty$ is
  a \emph{$\PSPACE$ sequence of open sets} if there exists a sequence
  of sets $\<S^k_n\>_{k,n \in \N}$, where $S^k_n \subseteq \Sigma^*$
  such that 
  \begin{enumerate}
  \item $U_n = \cup_{k=1}^{\infty}[S^{k}_n]$, where for any
    $m>0$, $\mu\left(U_n-\cup_{k=1}^m [S^k_n]\right)\leq
    \frac{1}{2^m}$.
  \item There exists a \emph{controlling polynomial} $p$ such
    that $max\{|\sigma|:\sigma \in \cup_{k=1}^m S^k_n)\}\leq p(n+m)$.
  \item The function $g:\Sigma^* \times 1^*\times 1^* \to \{0,1\}$
    such that $g(\sigma,1^n,1^m)=1$ if $\sigma \in S^m_n$, and 0
    otherwise, 
    is decidable by a $\PSPACE$ machine.
  \end{enumerate}
  The definition of \emph{$\EXP$ sequence of open sets} is similar but
  the bound in condition 2 is replaced with $2^{p(n+m)}$ and the
  machine in condition 3 is an $\EXP$-time machine.
\end{definition}


Henceforth, we study the notion of resource bounded randomness on $(\Sigma^\infty, \mu)$. 

\begin{definition}[$\PSPACE$/$\EXP$ randomness \cite{stull}] 
\label{def:pspacetest}
  A sequence of open sets $\<U_n\>_{n=1}^\infty$ is a \emph{$\PSPACE$
    test} if it is a $\PSPACE$ sequence of open sets and for all $n
  \in \N$, $\mu(U_n)\leq \frac{1}{2^n}$.
  
  A set $A \subseteq \Sigma^\infty$ is \emph{$\PSPACE$ null} if there
  is a $\PSPACE$ test $\<U_n\>_{n=1}^\infty$ such that $A \subseteq
  \cap_{n=1}^{\infty} U_n$. A set $A\subseteq \Sigma^\infty$
  is \emph{$\PSPACE$ random} if $A$ is not $\PSPACE$ null.
  
  The $\EXP$ analogues of the above concepts are defined similarly
  except that $\<U_n\>_{n=1}^\infty$ is an $\EXP$ sequence of open
  sets.
\end{definition}

By considering the sequence $\<\cup_{i=1}^{k} S^i_n\>_{k,n
  \in \N}$ instead of $\<S^k_n\>_{k,n \in \N}$, without loss of
generality, we can assume that for each $n$, $\<S^k_n\>_{k=1}^{\infty}$
is an increasing sequence of sets.

In order to establish our ergodic theorem, it is convenient to define
a $\PSPACE$ version of Solovay tests, where the relaxation is that the
measures of the sets $U_n$ can be any sufficiently fast convergent
sequence. We later show that this captures the same set of randoms as
$\PSPACE$ tests.

\begin{definition}[$\PSPACE$ Solovay test]
  A sequence of open sets $\<U_n\>_{n=1}^\infty$ is
  a \emph{$\PSPACE$ Solovay test} if it is a $\PSPACE$ sequence of
  open sets and there is a polynomial $p$ such that $\forall m \geq 0$,
          $\sum_{n=p(m)+1}^{\infty}\mu(U_n) \leq
          \frac{1}{2^m}$\footnote{This implies that $\sum_{n=1}^{\infty}\mu(U_n) < \infty $}. A set $A \subseteq \Sigma^\infty$ is \emph{$\PSPACE$ Solovay null}
  if there exists a $\PSPACE$ Solovay test $\<U_n\>_{n=1}^\infty$ such
  that $A \subseteq
  \cap_{i=1}^{\infty}\cup_{n=i}^{\infty}
  U_n$. $A\subseteq \Sigma^\infty$ is \emph{$\PSPACE$ Solovay random}
  if $A$ is not $\PSPACE$ Solovay null.  
\end{definition}

\begin{theorem}
  \label{thm:pspacesolovayequivalence}
A set $A \subseteq \Sigma^\infty$ is $\PSPACE$ null if and only if $A$
is $\PSPACE$ Solovay null. 	 
\end{theorem}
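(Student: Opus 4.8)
The plan is to prove the two directions separately, noting that the forward implication is essentially immediate while the reverse requires a genuine reindexing argument. For the direction ``$\PSPACE$ null $\Rightarrow$ $\PSPACE$ Solovay null'', suppose $A$ is witnessed by a $\PSPACE$ test $\langle U_n\rangle$ with $\mu(U_n)\le 2^{-n}$ and $A\subseteq\bigcap_n U_n$. The very same sequence is already a $\PSPACE$ sequence of open sets, and it satisfies the Solovay condition with the identity polynomial $p(m)=m$, since $\sum_{n=m+1}^\infty \mu(U_n)\le\sum_{n=m+1}^\infty 2^{-n}=2^{-m}$. Moreover $\bigcap_n U_n\subseteq\bigcap_i\bigcup_{n\ge i}U_n$, so $A$ is $\PSPACE$ Solovay null. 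No obstacle arises here.

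The substantive direction is ``$\PSPACE$ Solovay null $\Rightarrow$ $\PSPACE$ null''. Given a $\PSPACE$ Solovay test $\langle U_n\rangle$ with Solovay polynomial $p$ and $A\subseteq\bigcap_i\bigcup_{n\ge i}U_n$, I would define the candidate $\PSPACE$ test by truncating the tail, namely $V_m=\bigcup_{n=p(m)+1}^{\infty}U_n$. The measure bound is then automatic from the Solovay hypothesis, $\mu(V_m)\le\sum_{n=p(m)+1}^{\infty}\mu(U_n)\le 2^{-m}$, and the inclusion $A\subseteq\bigcap_m V_m$ follows because any $x$ lying in infinitely many $U_n$ must, for each $m$, lie in some $U_n$ with $n>p(m)$, hence in $V_m$. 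Thus the only thing left to verify is that $\langle V_m\rangle$ can be presented as a $\PSPACE$ sequence of open sets in the sense of Definition~\ref{def:pspaceopensets}, and this is where the work lies.

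Writing the original generators as $U_n=\bigcup_{k}[S^k_n]$ with $\langle S^k_n\rangle_k$ increasing (so that $\mu(U_n-[S^k_n])\le 2^{-k}$), I would define the $M$-th approximating set for $V_m$ by $[T^M_m]=\bigcup_{n=p(m)+1}^{p(M+1)}[S^{k(M)}_n]$, where the approximation depth $k(M)=M+1+\lceil\log_2 p(M+1)\rceil$ is chosen to be logarithmically larger than the linear term so that the polynomially many tail errors can be absorbed. The measure estimate then splits as $\mu\bigl(V_m-[T^M_m]\bigr)\le\sum_{n>p(M+1)}\mu(U_n)+\sum_{n=p(m)+1}^{p(M+1)}\mu\bigl(U_n-[S^{k(M)}_n]\bigr)\le 2^{-(M+1)}+p(M+1)\cdot 2^{-k(M)}\le 2^{-M}$, where the first term uses the Solovay condition at level $M+1$ and the second uses the choice of $k(M)$. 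Since the depths and the index range both tend to infinity, $\bigcup_M[T^M_m]=V_m$, giving condition~1. For condition~2, every string in $T^M_m$ has length bounded by $p_0\bigl(p(M+1)+k(M)\bigr)$ using the original controlling polynomial $p_0$, which is polynomial in $m+M$. For condition~3, membership $\sigma\in T^M_m$ is decided by computing $p(m),p(M+1),k(M)$ and looping $n$ over the range $(p(m),p(M+1)]$, testing $\sigma\in S^{k(M)}_n$ with the original $\PSPACE$ machine while reusing space across iterations, which keeps the whole procedure in $\PSPACE$.

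I expect the main obstacle to be precisely the simultaneous balancing in condition~3 together with the geometric approximation of condition~1: one must choose the truncation point $p(M+1)$ in the index $n$ and the per-set approximation depth $k(M)$ so that the resulting double sum of errors is at most $2^{-M}$, while keeping $k(M)$ small enough that both the string lengths (condition~2) and the decision procedure remain polynomial. The logarithmic slack in $k(M)$ is the key device that makes the polynomially many approximation errors sum to a geometrically small quantity without blowing up the resource bounds; verifying that this choice is consistent across all three conditions is the crux of the argument.
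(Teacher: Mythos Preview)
Your proposal is correct and follows essentially the same approach as the paper: both define $V_m=\bigcup_{n>p(m)}U_n$, truncate the union at a point of the form $p(M+1)$ (the paper writes this as $r(n,k)=\max\{p(n)+1,p(k+1)\}$, which also covers the edge case $p(M+1)\le p(m)$), and approximate each remaining $U_n$ to a depth large enough that the polynomially many per-set errors sum geometrically. The only cosmetic difference is that you take the approximation depth to be $M+1+\lceil\log_2 p(M+1)\rceil$ whereas the paper takes it to be $r(n,k)-p(n)+k+1$; both are polynomial in $m+M$ and both yield the required $2^{-M}$ bound, so the arguments are interchangeable.
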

\begin{proof}
It is easy to see that if $A$ is $\PSPACE$ null then $A$ is $\PSPACE$
Solovay null. Conversely, let $A$ be $\PSPACE$ Solovay null and let
$\<U_n\>_{n=1}^{\infty}$ be any Solovay test which witnesses this
fact. Let $V_{n}=\cup_{i=p(n)+1}^{\infty} U_n$. We show that
$\<V_n\>_{n=1}^\infty$ is a $\PSPACE$ test. Let $\<S_n^k\>_{n,k \in
  \N}$ be any sequence of sets approximating $\<U_n\>_{n=1}^{\infty}$
as in definition \ref{def:pspaceopensets} such that
$\<S_n^k\>_{k=1}^{\infty}$ is increasing for each $n$. We define
a sequence of sets $\<T_n^k\>_{n,k \in \N}$ approximating $V_n$ as
follows.

Let $r(n,k)=\max \{p(n)+1,p(k+1)\}$. Define
\begin{align*}
T_n^k = \bigcup\limits_{i=p(n)+1}^{r(n,k)} S_i^{r(n,k)-p(n)+k+1}.	
\end{align*}

We can easily verify the first three conditions in definition
\ref{def:pspaceopensets}. Using the machine $M$ and controlling polynomial $p$ witnessing that
$\<U_n\>_{n=1}^{\infty}$ is a $\PSPACE$ sequence of open sets, we can
construct the corresponding machines for $\<V_n\>_{n=1}^{\infty}$ in the following way. Machine $N$ on input $(\sigma,1^n,1^k)$ does the following:
\begin{enumerate}
	\item For each $i \in [p(n)+1,r(n,k)]$ do the following:
	\begin{enumerate}
	\item Output $1$ if $M(\sigma,1^i,1^{r(n,k)-p(n)+k+1})=1$. 
	\end{enumerate}
	\item Output $0$ if none of the above computations results in $1$.
\end{enumerate}
It is straightforward to verify that $N$ is a $\PSPACE$ machine.
\end{proof}

The set of $\PSPACE$ Solovay randoms and $\PSPACE$ randoms are equal,
hence to prove $\PSPACE$ randomness results, it suffices to form
Solovay tests.

\section{$\PSPACE$ $L^1$ computability}
\label{sec:pspacel1computability}

The resource-bounded ergodic theorems in our work hold for
$\PSPACE$-$L^1$ functions, the $\PSPACE$ analogue of integrable
functions. In this section, we briefly recall standard definitions for
$\PSPACE$ computable $L^1$ functions and
measure-preserving transformations. The justifications and proofs of
equivalences of various notions are present in Stull's thesis
\cite{stull2017algorithmic} and \cite{stull}.  We initially define
$\PSPACE$ sequence of simple functions, and define $\PSPACE$ integrable
functions based on approximations using these functions.

\begin{definition}[$\PSPACE$ sequence of simple functions \cite{stull}]
\label{def:pspacesequenceofsimplefunctions}
A sequence of simple functions $\<f_n\>_{n=1}^{\infty}$ where
each $f_n : \Sigma^\infty \to \Q$ is a \emph{$\PSPACE$
  sequence of simple functions} if
\begin{enumerate}
\item There is a \emph{controlling polynomial} $p$ such that for each
  $n$, there exist $k(n)\in \N$, $\{d_1,d_2, \dots,
  d_{k(n)}\}\subseteq \Q$ and $\{\sigma_1,\sigma_2, \dots,
  \sigma_{k(n)}\} \subseteq \Sigma^{p(n)}$ satisfying $f_n =
  \sum_{i=1}^{k(n)}d_i \chi_{\sigma_i}$.
\item There is a $\PSPACE$ machine $M$ such that for each $n \in \N,
  \sigma \in \Sigma^{*}$
\begin{align*}
M(1^n,\sigma)=\begin{cases}
f_n(\sigma0^\infty) & \text{if } |\sigma| \geq p(n)\\
? & \text{otherwise.}
\end{cases}
\end{align*}
\end{enumerate}
\end{definition}

Note that since $M$ is a $\PSPACE$ machine, $\{d_1,d_2 \dots
d_{k(n)}\}$ is a set of $\PSPACE$ representable numbers. Now, we
define $\PSPACE$ $L^1$-computable functions in terms of limits of
convergent $\PSPACE$ sequence of simple functions.

\begin{definition}[$\PSPACE$ $L^1$-computable functions \cite{stull}]
\label{def:pspacel1computablefunction}
A function $f \in L^1(\Sigma^\infty,\mu)$ is $\PSPACE$
$L^1$-computable if there exists a $\PSPACE$ sequence of simple
functions $\<f_n\>_{n=1}^{\infty}$ such that for every $n \in \N$,
$\lVert f-f_n \rVert \leq 2^{-n}$. The sequence
$\<f_n\>_{n=1}^{\infty}$ is called a \emph{$\PSPACE$
$L^1$-approximation} of $f$.
\end{definition}

A sequence of $L^1$ functions $\<f_n\>_{n=1}^{\infty}$ converging to
$f$ in the $L^1$-norm need not have pointwise limits. Hence the
following concept (\cite{RuteThesis}) is important in studying the
pointwise ergodic theorem in the setting of $L^1$-computability
\begin{definition}[$\widetilde{f}$ for $\PSPACE$ $L^1$-computable $f$]
Let $f \in L^1(\Sigma^\infty,\mu)$ be $\PSPACE$ $L^1$-computable and let
$\<f_n\>_{n=1}^{\infty}$ be any $\PSPACE$ sequence of simple functions
in $L^1(\Sigma^\infty,\mu)$ approximating $f$ (as in Definition
\ref{def:pspacel1computablefunction}). Define
$\widetilde{f}:\Sigma^\infty \to \R \cup \{\text{undefined}\}$ by
$\widetilde{f}(x) = \lim_{n \to \infty} f_n(x)$ if this limit exists,
and is undefined otherwise.\footnote{The definition of $\widetilde{f}$ is dependent on the choice of the approximating sequence $\<f_n\>_{n=1}^{\infty}$. However, due to Lemma \ref{lem:convergencelemma}, we use $\widetilde{f}$ in a sequence independent manner.}
\end{definition}

%
%
%

To define ergodic averages, we restrict ourselves to the following
class of transformations.
\begin{definition}[$\PSPACE$ simple transformation]
A measurable function $T:(\Sigma^\infty,\mu) \to (\Sigma^\infty,\mu)$ is a
$\PSPACE$ simple transformation if there is a \textit{controlling constant} $c$ and a
$\PSPACE$ machine $M$ such that such that for any $\sigma \in \Sigma^*$
\begin{align*}
  T^{-1}([\sigma]) = \bigcup\limits_{i=1}^{k(\sigma)} [\sigma_i]	
\end{align*}
such that:
\begin{enumerate}
\item $\{\sigma_i\}_{i=1}^{k(\sigma)}$ is a prefix free set and for all $1\leq i \leq k(\sigma)$, $|\sigma_i|\leq |\sigma|+c$
\item For each $\sigma,\alpha \in \Sigma^*$,
  \begin{align*}
    M(\sigma,\alpha) =  \begin{cases}
      1 & \text{if } |\alpha|\geq |\sigma|+c \text{ and }
      \alpha 0^\infty \in T^{-1}([\sigma])\\
      0 & \text{if } |\alpha|\geq |\sigma|+c \text{ and }
      \alpha 0^\infty \not\in T^{-1}([\sigma])\\
    ? & \text{otherwise }
  \end{cases}	
		\end{align*}
	\end{enumerate}
\end{definition}
It is easy to verify that if $T$ is a $\PSPACE$ simple transformation
then for any $n \geq 2$, $T^n$ is also a $\PSPACE$ simple
transformation. We need the following stronger assertion in the proof of the ergodic theorem.
\begin{lemma}
\label{lem:pspacetransformationcomputation}
	Let $T:(\Sigma^\infty,\mu) \to (\Sigma^\infty,\mu)$ be a $\PSPACE$ simple transformation with controlling constant $c$. There exists a $\PSPACE$ machine $N$ such that for each $n \in \N$ and $\sigma,\alpha \in \Sigma^*$,
	\begin{align*}
    N(1^n,\sigma,\alpha) =  \begin{cases}
      1 & \text{if } |\alpha|\geq |\sigma|+cn \text{ and }
      \alpha 0^\infty \in T^{-n}([\sigma])\\
      0 & \text{if } |\alpha|\geq |\sigma|+cn \text{ and }
      \alpha 0^\infty \not\in T^{-n}([\sigma])\\
    ? & \text{otherwise }
  \end{cases}	
		\end{align*}
\end{lemma}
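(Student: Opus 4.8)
The plan is to decide whether $\alpha 0^\infty \in T^{-n}([\sigma])$ by computing a sufficiently long prefix of the image point $T^n(\alpha 0^\infty)$ and testing whether it begins with $\sigma$; indeed $\alpha 0^\infty \in T^{-n}([\sigma])$ if and only if $T^n(\alpha 0^\infty) \upharpoonleft |\sigma| = \sigma$. Note that the statement is genuinely stronger than the already-observed fact that each $T^n$ is a $\PSPACE$ simple transformation: there, for each fixed $n$ one gets a machine whose controlling polynomial may depend on $n$, whereas here we need a single machine $N$ that takes $n$ (in unary) as part of its input and runs in space polynomial in $n+|\sigma|+|\alpha|$. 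The machine $M$ for $T$ only reports on preimages of cylinders, so I would build the desired prefix forward, one application of $T$ at a time, exploiting that a $\PSPACE$ simple transformation is \emph{local}: the first $\ell$ symbols of $T(z)$ depend only on a bounded initial window of $z$.

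I would first record this locality as the key structural fact. For a cylinder $[\gamma]$ with $|\gamma| = \ell$, the definition of $T$ gives $T^{-1}([\gamma]) = \bigcup_i [\gamma_i]$ with each $|\gamma_i| \leq \ell + c$; hence membership of a point $z$ in $T^{-1}([\gamma])$ is determined by $z \upharpoonleft (\ell+c)$, and therefore so is $T(z) \upharpoonleft \ell$. Operationally, for any $\beta$ with $|\beta| \geq \ell + c$ the call $M(\gamma, \beta)$ returns a definitive bit deciding $\beta 0^\infty \in T^{-1}([\gamma])$; and since $T(\beta 0^\infty)$ is a genuine point, whenever $|\beta| \geq (\ell+1)+c$ exactly one of $M(\gamma 0, \beta)$ and $M(\gamma 1, \beta)$ equals $1$, which reveals the $(\ell+1)$-st symbol of $T(\beta 0^\infty)$.

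Using this, I would set $L_j = |\sigma| + c(n-j)$ for $0 \leq j \leq n$, take $y_0 = \alpha \upharpoonleft L_0$, and iteratively compute $y_j = T(y_{j-1} 0^\infty) \upharpoonleft L_j$ symbol by symbol: the $m$-th symbol of $y_j$ is read off from the single call $M(y_j \upharpoonleft (m-1)\,0,\, y_{j-1})$, which is in the definitive regime because $|y_{j-1}| = L_{j-1} = L_j + c \geq m + c$ for $m \leq L_j$. An induction on $j$ --- using locality so that replacing the true point $T^{j-1}(\alpha 0^\infty)$ by its truncation $y_{j-1} 0^\infty$ (they agree on their first $L_{j-1}$ symbols) leaves the first $L_j$ symbols of the next image unchanged --- establishes $y_j = T^j(\alpha 0^\infty) \upharpoonleft L_j$. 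In particular $y_n = T^n(\alpha 0^\infty) \upharpoonleft |\sigma|$, so after first checking $|\alpha| \geq |\sigma| + cn$ (and outputting ``?'' if it fails), $N$ outputs $1$ exactly when $y_n = \sigma$ and $0$ otherwise.

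The space analysis is the main point to get right. Although $T^{-1}([\sigma])$, and a fortiori $T^{-n}([\sigma])$, may be a union of exponentially many cylinders, this forward simulation never enumerates them: at any instant it stores only the two strings $y_{j-1}$ and $y_j$, each of length at most $|\sigma| + cn$, together with the working tape of a single invocation of $M$, and all of this space is reused across the at most $n$ outer rounds and the at most $L_0$ inner symbol computations. Since $M$ runs in space polynomial in its argument lengths, which are bounded by a polynomial in $|\sigma| + cn$, and since the combined input length $n + |\sigma| + |\alpha|$ is at least $|\sigma| + cn$ in the relevant case, the total space is polynomial in the input length and $N$ is a $\PSPACE$ machine. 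The one delicate accounting is that the $c$-symbol loss incurred at each of the $n$ applications of $T$ is exactly matched by the hypothesis $|\alpha| \geq |\sigma| + cn$; this is what keeps every call to $M$ in its definitive regime (never returning ``?'') while the loop runs, and it is also why the same slack $cn$ appears in the statement of the lemma.
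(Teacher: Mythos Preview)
Your argument is correct, and it takes a genuinely different route from the paper's. The paper proceeds \emph{backwards}: its machine $N$ recursively decides $\alpha 0^\infty \in T^{-n}([\sigma])$ by enumerating all strings $\alpha'$ of length $|\sigma|+c(n-1)$, checking for each whether $\alpha' 0^\infty \in T^{-(n-1)}([\sigma])$ (a recursive call) and whether $\alpha 0^\infty \in T^{-1}([\alpha'])$ (a call to $M$). This is an exponential-time, polynomial-space search through intermediate cylinders, with recursion depth $n$ and space $(2n-1)\,p(|\sigma|+|\alpha|+cn)+O(1)$. You instead go \emph{forwards}: you extract and exploit the locality of $T$ (the first $\ell$ symbols of $T(z)$ are determined by $z\upharpoonleft(\ell+c)$) to iteratively compute $T^j(\alpha 0^\infty)\upharpoonleft L_j$ symbol by symbol, never enumerating preimage cylinders at all. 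Your approach is more elementary---a simple loop rather than a recursion, no exponential enumeration---and uses less space (two strings of length $\le |\sigma|+cn$ plus one invocation of $M$, reused across iterations, versus $n$ stacked frames). The paper's approach stays closer to the preimage-based definition of a $\PSPACE$ simple transformation, while yours isolates the structural fact (locality) that makes the lemma true; both yield the same $\PSPACE$ bound, but yours makes the dependence on $n$ visibly linear rather than quadratic in the workspace.
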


\begin{proof}
Let $M$ be the machine witnessing the fact that $T$ is a $\PSPACE$ simple transformation with the polynomial space complexity bound $p(n)$. Let the machine $N$ do the following on input $(1^n,\sigma,\alpha)$:
\begin{enumerate}
	\item If $\alpha < \lvert \sigma \rvert+cn$, then output ?.
	\item If $n=1$ then, run $M(\sigma,\alpha)$ and output the result of this simulation.
	\item Else:
		\begin{enumerate}
		\item\label{talgorithmelsestep} For all strings $\alpha'$ of length $\lvert \sigma \rvert+c(n-1)$ do the following:
			\begin{enumerate}
			\item If $N(1^{n-1},\sigma,\alpha')=1$ then, output $1$ if $M(\alpha',\alpha)=1$. 
			\end{enumerate}
		\end{enumerate}
	\item If no output is produced in the above steps, output $0$.
\end{enumerate}
When $n=1$, $N$ uses at most $p(|\sigma|+|\alpha|+cn)+O(1)$ space. Inductively, assume that for $n=k$, $N$ uses at most $(2k-1)p(\lvert \sigma\rvert +\lvert \alpha\rvert+cn)+O(1)$ space. For $n=k+1$, the storage of $\alpha'$ and the two simulations inside step \ref{talgorithmelsestep} can be done in $2p(\lvert \sigma\rvert +\lvert \alpha\rvert+cn)+(2k-1)p(\lvert \sigma\rvert +\lvert \alpha\rvert+cn)+O(1)=(2(k+1)-1)p(\lvert \sigma\rvert +\lvert \alpha\rvert+cn)+O(1)$ space. Hence, $N$ is a $\PSPACE$ machine.
\end{proof}

$\PSPACE$ computability as defined above, relates naturally to
convergence of $L^1$ norms. But the pointwise ergodic theorem deals
with almost everywhere convergence, and its resource-bounded versions
deal with convergence on every random point.  We introduce the modes
of convergence we deal with in the present work.

\begin{definition}[$\PSPACE$-rapid limit point]
  A real number $a$ is a \emph{$\PSPACE$-rapid limit point} of the real
  number sequence $\<a_n\>_{n=1}^{\infty}$ if there exists a
  polynomial $p$ such that for all $m \in \N$, $ \exists k \leq 2^{p(m)}
  \text{ such that }|a_{k}-a| \leq \frac{1}{2^m}$. 
\end{definition}

Note that this requires rapid convergence only on a subsequence. We remark that the above is equivalent to the existence of a $\PSPACE$ machine computing the speed of convergence on input $1^m$. The
following definition is the $L^1$ version of the above.

\begin{definition}[$\PSPACE$-rapid $L^1$-limit point]
A function $f \in L^1 (\Sigma^\infty,\mu)$ is a \emph{$\PSPACE$-rapid
  $L^1$-limit point} of a sequence $\<f_n\>_{n=1}^{\infty}$ of
functions in $L^1 (\Sigma^\infty,\mu)$ if 0 is a $\PSPACE$-rapid limit point of
$\lVert f_n-f\rVert_1$.	 
\end{definition}

%

Now we define $\PSPACE$ analogue of almost everywhere
convergence (\cite{RuteThesis}).

\begin{definition}[$\PSPACE$-rapid almost everywhere convergence]
  A sequence of measurable functions $\<f_n\>_{n=1}^{\infty}$ is
  \emph{$\PSPACE$-rapid almost everywhere convergent} to a measurable
  function $f$ if there exists a polynomial $p$ such that for all
  $m_1$ and $m_2$,
  \begin{align*}
    \mu\left( \left\{ x : \sup\limits_{n \geq 2^{p(m_1+m_2)}}
    |f_n(x)-f(x)| \geq \frac{1}{2^{m_1}}\right\} \right) \leq
    \frac{1}{2^{m_2}}.
  \end{align*}
\end{definition}
{\bf Notation.} Let $A_n^{f,T}=\frac{f+f\circ T + f\circ T^2 +\dots
  f\circ T^{n-1}}{n}$ denote the $n$\textsuperscript{th} Birkhoff
average for any function $f$ and transformation $T$. We prove the
ergodic theorem in measure preserving systems where $\int f d\mu$ is a
$\PSPACE$-rapid $L^1$-limit point of $A_n^{f,T}$. In the rest of the
paper we denote $A_n^{f,T}$ simply by $A_n^f$. The
transformation $T$ involved in the Birkhoff sum is implicit.

\begin{definition}[$\PSPACE$ ergodic transformations]
A measurable function $T:(\Sigma^\infty,\mu) \to (\Sigma^\infty,\mu)$
is \emph{$\PSPACE$ ergodic} if $T$ is a $\PSPACE$ simple measure preserving
transformation such that for any $\PSPACE$ $L^1$-computable $f\in L^1
(\Sigma^\infty,\mu)$, $\int f d\mu$ is a $\PSPACE$-rapid $L^1$ limit
point of $A_n^{f}$.
\end{definition}

V'yugin \cite{Vyug97} shows that the speed of a.e. convergence to
ergodic averages in computable ergodic systems is not computable in
general. This leads us to consider some assumption on the rapidity of
convergence in resource-bounded settings.  We show that the
requirement on $L^1$ rapidity of convergence of $A^f_n$ is sufficient
to derive our result. Several probabilistic laws like the Law of Large
Numbers, Law of Iterated Logarithm satisfy this criterion, hence the
assumption is sufficiently general. Moreover, as we show now, in the
canonical example of Bernoulli systems with the left-shift, every
$\PSPACE$ $L^1$ function exhibits $\PSPACE$ rapidity of $A^f_n$, showing
that the latter property is not artificial.  The proof of this theorem
is a non-trivial application of techniques from uniform distribution
of sequences modulo 1 \cite{kuipersniederreiter}, \cite{Pillai40},
\cite{maxfield1952short}, \cite{nandakumar2019analogue}.

\begin{theorem}
\label{thm:l1functionpspacerapid}
Let $f\in L^1(\Sigma^\infty,\mathcal{B}(\Sigma^\infty),\mu)$ where
$\mu$ is the Bernoulli measure $\mu(\sigma)=\frac{1}{2^{\lvert \sigma
    \rvert}}$ and let $T$ be the left shift transformation. If $f$ is
$\PSPACE$ $L^1$-computable, then there exists a polynomial $q$
satisfying the following: given any $m\in \N$, for all $n \geq
2^{q(m)}$, $\lVert A_n^f -\int f d\mu \rVert_1 \leq 2^{-m}$.
\end{theorem}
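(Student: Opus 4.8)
The plan is to reduce the claim to a variance estimate for the finite-memory simple functions supplied by the $\PSPACE$ $L^1$-approximation, exploiting the fact that under the uniform Bernoulli measure and the left shift, well-separated shifts of a function depending on only finitely many coordinates are independent. Fix a $\PSPACE$ sequence of simple functions $\<f_k\>_{k=1}^\infty$ with controlling polynomial $p$ approximating $f$, so that $\lVert f-f_k\rVert_1 \le 2^{-k}$ and each $f_k$ is measurable with respect to the first $p(k)$ coordinates. Since $T$ is measure preserving, each Birkhoff average is an $L^1$ contraction, whence $\lVert A_n^{f} - A_n^{f_k}\rVert_1 = \lVert A_n^{f-f_k}\rVert_1 \le \lVert f-f_k\rVert_1 \le 2^{-k}$, and likewise $\lvert \int f_k\,d\mu - \int f\,d\mu\rvert \le 2^{-k}$. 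By the triangle inequality it therefore suffices to bound the middle term $\lVert A_n^{f_k} - \int f_k\,d\mu\rVert_1$ and then to choose $k$ in terms of $m$.

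For the middle term I would pass to $L^2$, using $\lVert g\rVert_1 \le \lVert g\rVert_2$ for the probability measure $\mu$. Because $f_k$ depends only on coordinates $1,\dots,p(k)$, the shifted functions $f_k\circ T^j$ and $f_k\circ T^{j'}$ depend on disjoint blocks of coordinates, and are therefore independent, whenever $\lvert j-j'\rvert \ge p(k)$. Consequently, in the expansion of $\mathrm{Var}\!\big(\sum_{j=0}^{n-1} f_k\circ T^j\big)$ only the at most $2np(k)$ pairs with $\lvert j-j'\rvert < p(k)$ survive, each contributing at most $\mathrm{Var}(f_k)$ by Cauchy--Schwarz and measure preservation. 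This gives $\mathrm{Var}(A_n^{f_k}) \le 2\,p(k)\,\mathrm{Var}(f_k)/n$, and since $A_n^{f_k}$ has mean $\int f_k\,d\mu$,
\[
  \left\lVert A_n^{f_k} - \int f_k\,d\mu\right\rVert_1 \;\le\; \sqrt{\frac{2\,p(k)\,\mathrm{Var}(f_k)}{n}}.
\]

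It remains to control $\mathrm{Var}(f_k) \le \lVert f_k\rVert_\infty^2$. The values $d_i$ of $f_k$ are output by a $\PSPACE$ machine on inputs of length $\poly(k)$, hence are $\PSPACE$-representable rationals; by the standing assumption on rational representations, $\lVert f_k\rVert_\infty \le 2^{\poly(k)}$, so that $\log_2 \mathrm{Var}(f_k) = \poly(k)$. Now set $k=m+2$, so the two approximation terms together contribute at most $2^{-m-1}$. To force the middle term below $2^{-m-1}$ as well, it suffices that $n \ge 8\,p(m+2)\,\mathrm{Var}(f_{m+2})\,2^{2m}$; taking logarithms, this reads $\log_2 n \ge \poly(m)$, so defining $q$ to be the resulting polynomial yields $\lVert A_n^{f} - \int f\,d\mu\rVert_1 \le 2^{-m}$ for all $n \ge 2^{q(m)}$, as required.

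The step I expect to be most delicate is the variance bound: one must verify that the independence of well-separated shifts is precisely the mixing property the Bernoulli shift supplies, and that the polynomial memory length $p(k)$ together with the polynomial bit-length of the coefficients $d_i$ keeps $\log_2 \mathrm{Var}(f_k)$ polynomial in $k$---this is exactly what makes $q$ a polynomial rather than a faster-growing function. I note that the authors instead route the argument through uniform distribution of $\<2^n\alpha\>$ modulo $1$; under the binary-expansion isomorphism between the uniform Bernoulli shift and the doubling map, that machinery is an alternative way to extract the same $O(1/\sqrt{n})$ decay, presumably yielding sharper discrepancy-type control on the averages than the crude second-moment estimate above.
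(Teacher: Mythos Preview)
Your argument is correct, and it is genuinely simpler than the paper's. Both proofs begin with the same three-term triangle inequality, reducing to a bound on $\lVert A_n^{f_k}-\int f_k\,d\mu\rVert_1$ for a simple approximant $f_k$ of finite memory $p(k)$, and both ultimately extract an $O(1/\sqrt{n})$ rate via a second-moment estimate. The difference lies in how the middle term is handled. The paper first decomposes $f_k$ into a linear combination of cylinder indicators $\chi_\sigma$, and for each $\chi_\sigma$ invokes a separate lemma (Lemma~\ref{lem:characteristicfunctionpspacerapid}) whose proof uses a Pillai-type splitting of the occurrence count of $\sigma$ into sums over non-overlapping blocks at geometrically growing scales; these block sums are exact averages of i.i.d.\ Bernoulli variables, and the variance bound is applied to them. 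This decomposition introduces an auxiliary requirement $|\sigma|\ge q_1(m)$ and several error terms that must be controlled. You instead bypass the cylinder decomposition entirely and bound $\mathrm{Var}(A_n^{f_k})$ directly by observing that $f_k\circ T^j$ and $f_k\circ T^{j'}$ are independent once $|j-j'|\ge p(k)$, so only $O(np(k))$ covariance terms survive. This ``finite-range dependence'' argument is the standard route to the variance bound for block factors of Bernoulli shifts and is both shorter and more transparent; it also avoids the length restriction on cylinders. What the paper's approach buys is an explicit connection to the classical theory of uniform distribution modulo~$1$ and normal numbers, and with it the $L^2$ statement noted after the proof; your argument already delivers the $L^2$ bound as a byproduct, so nothing is lost there either.
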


An equivalent statement is the following: The left-shift transformation
on the Bernoulli probability measure is $\PSPACE$ ergodic\footnote{
 Equivalently, there exists a constant $c$ such
 that for all $n>0$, $\lVert A_n^f-\int f d\mu \rVert_1 \leq
 2^{-\lfloor \log(n)^\frac{1}{c} \rfloor}$.}.
Theorem \ref{thm:l1functionpspacerapid} gives an explicit bound on the
speed of convergence in the $L^1$ ergodic theorem for an interesting
class of functions over the Bernoulli space. Such bounds do not exist
in general for the $L^1$ ergodic theorem as demonstrated by Krengel in
\cite{Kre78}.

The above theorem can be obtained from the following assertion regarding $\PSPACE$-rapid convergence of characteristic functions of long enough cylinders. 

\begin{lemma}
	\label{lem:characteristicfunctionpspacerapid}
 Let $T$ be the left shift transformation $T:(\Sigma^\infty,\mathcal{B}(\Sigma^\infty),\mu) \to (\Sigma^\infty,\mathcal{B}(\Sigma^\infty),\mu)$ where $\mu$ is the Bernoulli measure $\mu(\sigma)=2^{-\lvert \sigma \rvert}$. There exist polynomials $q_1,q_2$ such that for any $m \in \N$ and $\sigma \in \Sigma^*$ with $|\sigma|\geq q_1(m)$ we get $\lVert A_n^{\chi_{\sigma}} - \mu(\sigma) \rVert_1 \leq 2^{-m}$ for all $n \geq \lvert \sigma \rvert^3 2^{q_2(m)}$.
\end{lemma}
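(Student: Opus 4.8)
The plan is to pass from the $L^1$ deviation to the variance via Cauchy--Schwarz and then control that variance by a covariance estimate exploiting the self-overlap structure of $\sigma$. Write $L = |\sigma|$ and $p = \mu(\sigma) = 2^{-L}$. Since $T$ is the left shift, $\chi_\sigma \circ T^j(x) = 1$ precisely when $\sigma$ occurs in $x$ starting at coordinate $j+1$; setting $X_j = \chi_\sigma \circ T^j$ we have $A_n^{\chi_\sigma} = \frac{1}{n}\sum_{j=0}^{n-1} X_j$, an average of occurrence indicators. Because the shift is measure preserving, $\int A_n^{\chi_\sigma}\, d\mu = p$, so $\lVert A_n^{\chi_\sigma} - \mu(\sigma)\rVert_1 \leq \lVert A_n^{\chi_\sigma} - \mu(\sigma)\rVert_2 = \sqrt{\mathrm{Var}(A_n^{\chi_\sigma})}$, and it suffices to bound this variance by $2^{-2m}$ throughout the stated range of $n$.

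Next I would compute the variance through the pairwise covariances $\mathrm{Cov}(X_j, X_k)$. Each $X_j = \mathbf{1}[x[j+1,j+L] = \sigma]$ depends only on coordinates $[j+1, j+L]$, so under the Bernoulli measure $X_j$ and $X_k$ are independent whenever $|j-k| \geq L$, contributing zero covariance. For $d = |j-k| < L$ the two windows overlap in $L-d$ coordinates, and the joint event constrains exactly $L+d$ coordinates; hence $\mathbb{E}[X_j X_k]$ is $2^{-(L+d)}$ when $\sigma$ has period $d$ and $0$ otherwise, so in all cases $|\mathrm{Cov}(X_j,X_k)| \leq 2^{-(L+d)}$. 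This periodicity/overlap bookkeeping — essentially the correlation-polynomial combinatorics of $\sigma$ — is the structural input and the step I expect to require the most care. Summing gives $\sum_j \mathrm{Var}(X_j) \leq n\,2^{-L}$ and $\sum_{j\neq k}|\mathrm{Cov}(X_j,X_k)| \leq n\sum_{d\geq 1} 2\cdot 2^{-(L+d)} = 2n\,2^{-L}$, so $\mathrm{Var}(A_n^{\chi_\sigma}) \leq 3\cdot 2^{-L}/n$.

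Finally I would conclude: $\lVert A_n^{\chi_\sigma} - \mu(\sigma)\rVert_1 \leq \sqrt{3\cdot 2^{-L}/n}$, and for $n \geq |\sigma|^3 2^{q_2(m)}$ this is at most $\sqrt{3}\, 2^{-L/2}\,|\sigma|^{-3/2}\,2^{-q_2(m)/2} \leq \sqrt{3}\,2^{-q_2(m)/2}$, which drops below $2^{-m}$ as soon as $q_2(m) \geq 2m+2$; taking $q_1(m) = 1$ (only $|\sigma|\geq 1$ is needed) then finishes the argument. The main obstacle is thus just the covariance estimate and keeping the period/overlap casework clean, the remainder being a routine second-moment calculation. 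I would note that this bound is in fact stronger than the stated threshold demands — the generous $|\sigma|^3 2^{q_2(m)}$ factor is absorbed for free — so that loose threshold presumably reflects what the subsequent approximation argument deriving Theorem \ref{thm:l1functionpspacerapid} requires rather than any tightness in the present estimate.
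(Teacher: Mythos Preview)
Your proof is correct and considerably more direct than the paper's. The paper deliberately sidesteps the overlapping-window correlations by invoking a Pillai-type decomposition: it rewrites $A_n^{\chi_\sigma}$ as a sum of roughly $\log(n/|\sigma|)$ many blocks $g_p(n)$, each of which is (up to scaling) an average of genuinely i.i.d.\ Bernoulli indicators, plus an $O((k\log n)/n)$ remainder; every block is then bounded in $L^2$ via the elementary estimate $\lVert \tfrac{1}{n}\sum Y_i - \mathbf{E}Y_1\rVert_2 \le 1/(2\sqrt{n})$ for i.i.d.\ Bernoulli $Y_i$. That machinery forces the choices $q_1(m)=2(m+6)$ and $q_2(m)=5(m+6)$, and in particular requires $|\sigma|$ to grow with $m$. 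By contrast, you confront the dependence head-on through the covariance bound $|\mathrm{Cov}(X_j,X_k)|\le 2^{-(L+|j-k|)}$ for $0<|j-k|<L$, summing the resulting geometric series to get $\mathrm{Var}(A_n^{\chi_\sigma})\le 3\cdot 2^{-L}/n$; this is both shorter and yields sharper polynomials ($q_1\equiv 1$, $q_2(m)=2m+2$). The paper's route has the merit of reducing everything to independent sums --- a decomposition that transfers to other normality-type arguments and to measures where overlap covariances are less tractable --- but in the specific Bernoulli setting of this lemma your second-moment computation is the natural and more efficient argument, and your closing observation that the $|\sigma|^3$ factor is slack is exactly right.
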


 Now we prove Theorem \ref{thm:l1functionpspacerapid} by assuming Lemma \ref{lem:characteristicfunctionpspacerapid}.

\begin{proof}[Proof of Theorem \ref{thm:l1functionpspacerapid}]
	Let $\<f_n\>_{n=1}^{\infty}$ be a $\PSPACE$ sequence of simple functions witnessing the fact that $f$ is $\PSPACE$ $L^1$-computable. Let $p$ be a controlling polynomial and let $t$ be a polynomial upper bound for the space complexity of the machine associated with $\<f_n\>_{n=1}^{\infty}$. Let $q_1$,$q_2$ be the polynomials from Lemma \ref{lem:characteristicfunctionpspacerapid}. Let $c \in \N$ be any number such that if a $r \in \Q$ has a representation of length $l$ then $r \leq 2^{l^c}$ (see Section \ref{sec:preliminaries}). Observe that for any $m \in \N$,
	\begin{align*}
	\lVert A_n^f -\int f d\mu \rVert_1 &\leq \lVert A_n^{f}- A_n^{f_{q_{1}(m+3)}} \rVert_1  + \lVert A_n^{f_{q_{1}(m+3)}} -\int f_{q_{1}(m+3)} d\mu \rVert_1 + \lVert \int f_{q_{1}(m+3)} d\mu - \int f d\mu \rVert_1 \\
	&\leq \frac{1}{2^{q_{1}(m+3)}} + \lVert A_n^{f_{q_{1}(m+3)}} -\int f_{q_{1}(m+3)} d\mu \rVert_1 + \frac{1}{2^{q_{1}(m+3)}}.\\
	&\leq \frac{1}{2^{m+3}} + \lVert A_n^{f_{q_{1}(m+3)}} -\int f_{q_{1}(m+3)} d\mu \rVert_1 + \frac{1}{2^{m+3}}.
	\end{align*}
	We know that there exists $\{\sigma_1,\sigma_2 \dots \sigma_k\} \subseteq \Sigma^{p(q_{1}(m+3))}$ such that $A_n^{f_{q_{1}(m+3)}}=\sum_{i=1}^{k(q_{1}(m+3))} d_i \chi_{\sigma_i}$ where each $d_i \leq 2^{t(q_{1}(m+3)+p(q_{1}(m+3)))^c}$. Hence,
	\begin{align*}
		\lVert A_n^{f_{q_{1}(m+3)}} -\int f_{q_{1}(m+3)} d\mu \rVert_1 \leq 2^{t(q_{1}(m+3)+p(q_{1}(m+3)))^c}\sum\limits_{i=1}^{k(q_{1}(m+3))} \lVert A_n^{\chi_{\sigma_i}} - \mu(\sigma_i) \rVert_1
	\end{align*}
	Since $\lvert \sigma_i \rvert \geq p(q_{1}(m+3)) \geq q_{1}(m+3)$, using Lemma \ref{lem:characteristicfunctionpspacerapid}, for 
	\begin{align*}
		n \geq p(q_{1}(m+3))^3 2^{q_2(t(q_{1}(m+3)+p(q_{1}(m+3)))^c+p(q_{1}(m+3))+m+3)}
	\end{align*}
	we get that,
	\begin{align*}
		\lVert A_n^{f_{q_{1}(m+3)}} -\int f_{q_{1}(m+3)} d\mu \rVert_1 &\leq  \frac{2^{t(q_{1}(m+3)+p(q_{1}(m+3)))^c+p(q_{1}(m+3))}}{2^{t(q_{1}(m+3)+p(q_{1}(m+3)))^c+p(q_{1}(m+3))+m+3}}\\ 
		&\leq \frac{1}{2^{m+3}}.
	\end{align*}
	Hence, for all $n\geq p(q_{1}(m+3))^3 2^{q_2(t(q_{1}(m+3)+p(q_{1}(m+3)))^c+p(q_{1}(m+3))+m+3)}$ we have $\lVert A_n^f -\int f d\mu \rVert_1 \leq 3.2^{-(m+3)} < 2^{-m}$.
\end{proof}

Now, we give a proof for Lemma \ref{lem:characteristicfunctionpspacerapid}.

\begin{proof}[Proof of Lemma \ref{lem:characteristicfunctionpspacerapid}]
	The major difficulty in directly approximating $\lVert
        A_n^{\chi_{\sigma}} - \mu(\sigma) \rVert_1$ is that for any
        $n,m \in \N$, $A_n^{\chi_{\sigma}}$ and $A_m^{\chi_{\sigma}}$
        may not be \textit{independent}. In order to overcome this, we
        use constructions similar to those used in proving
        Pillai's theorem (see \cite{Pillai40}, \cite{maxfield1952short} for normal numbers, \cite{nandakumar2019analogue} for continued fractions) in order to approximate each $A_n^{\chi_{\sigma}}$ with sums of \textit{disjoint} averages. These \textit{disjoint} averages turns out to be averages of independent random variables. Hence, elementary results from probability theory regarding independent random variables can be used to show that $A_n^{\chi_{\sigma}}$ converges to $\int f d\mu$ sufficiently fast.
	
	Observe that for any $x \in \Sigma^\infty$
	\begin{align*}
	A_n^{\chi_{\sigma}}(x)=\frac{|\{i \in [0,n-1] \mid T^{i}x \in [\sigma]\}|} {n}	
	\end{align*}
	Let $k=\lvert \sigma \rvert$. As in the proof of Theorem 3.1 from \cite{nandakumar2019analogue}, the following is a decomposition of the above term as \textit{disjoint} averages,
	\begin{align*}
	\frac{|\{i \in [0,n-1] \mid T^{i}x \in [\sigma]\}|} {n} =
		g_1(n) + g_2(n) +\dots+
		g_{(1+\left \lfloor\log_2  \frac{n}{k} \right \rfloor)}(n) + {\frac{(k-1). O(\log n) }{n}}	
	\end{align*}
	where,
	\begin{align*}
		g_{p}(n)= \begin{cases}
		n^{-1}	|\{i~ \mid~ T^{ki}x ~\in~ [\sigma]
				~,~ 0 ~\leq ~i ~\leq ~\lfloor n/k \rfloor \}| , ~\text{if} ~p=1\\
			n^{-1} \sum_{j=1}^{k-1} |\{i~ \mid~ T^{(2^{p-1})ki}x ~\in~ [S_j]
				~,~ 0 ~\leq ~i ~\leq \lfloor n/2^{p-1}k \rfloor \}| ,\text{ if } 1<p \leq
			(1+\left \lfloor\log_2(n/k)\right \rfloor)\\ 
			0, \text{ otherwise}
		\end{cases}
	\end{align*}
	where $S_j$ is the finite collection of $2^{(p-1)} k$ length blocks
	s.t $\sigma$ occurs in it at starting position $(2^{(p-2)}k - j +1)^{th}$
	position i.e $S_{j}$ is the set of strings of the form,
	$u\ a_1a_2 \dots a_k\ v$ where $u$ is some string of length $2^{p-2}k-j$, and $v$ is some string of length $2^{p-2}k-k+j$. 
	
	When $p=1$,
	\begin{align*}
	g_{1}(n)=\frac{\sum\limits_{i=1}^{\lfloor \frac{n}{k} \rfloor} X_i^{1,1}}{n}
	\end{align*}
	where,
	\begin{align*}
	X_{i}^{1,1}(x)=
	\begin{cases}
		1 \text{ if } x[ik+1,(i+1)k]=\sigma\\
		0 \text{ otherwise}
	\end{cases}
	\end{align*}
	When $1 < p \leq \lfloor \log_2(n/k) \rfloor$,
	\begin{align*}
	g_{p}(n)=\frac{\sum\limits_{i=1}^{\lfloor \frac{n}{2^{p-1}k} \rfloor}\sum\limits_{j=1}^{k-1} X_i^{p,j}}{n}
	\end{align*}
	where,
	\begin{align*}
	X_{i}^{p,j}(x)=
	\begin{cases}
		1 \text{ if } x[2^{p-2}k-j+1,2^{p-2}k-j+k] = \sigma\\
		0 \text{ otherwise}
	\end{cases}
	\end{align*}
	Hence, 
	\begin{align*}
		A_n^{\chi_{\sigma}}(x) = \frac{\sum\limits_{i=1}^{\lfloor \frac{n}{k}\rfloor} X^{1,1}_i(x)}{n}  + \sum\limits_{p=2}^{\lfloor \log_2(\frac{n}{k}) \rfloor} \sum\limits_{j=1}^{k-1} \frac{\sum\limits_{i=1}^{\lfloor \frac{n}{2^{p-1}k} \rfloor} X_i^{p,j}}{n}  + \frac{(k-1). O(\log n) }{n}
	\end{align*}
	An important observation that we use later in the proof is that for any fixed $p$ and $j$, $\{X_i^{p,j}\}_{i=1}^{\infty}$ is a collection of i.i.d Bernoulli random variables such that $\mu(\{x:X_i^{p,j}(x)=1\})=2^{-\lvert \sigma \rvert}$. We show that the conclusion of the lemma holds when $q_1(m)=2(m+6)$ and $q_2(m)=5(m+6)$. For any $m \in \N$,
	\begin{multline}
	\label{eq:pspacerapiderrorterm1}
	\left\lVert \sum\limits_{p=m+5+2}^{\infty} \sum\limits_{j=1}^{k-1} \frac{1}{n}\sum\limits_{i=1}^{\lfloor \frac{n}{2^{p-1}k} \rfloor} X_i^{p,j} \right\rVert_2 \leq \sum\limits_{p=m+5+2}^{\infty} \frac{1}{2^{p-1}} \leq  \frac{1}{2^{m+5}}\\
	\end{multline}
	And for $n \geq \lvert \sigma \rvert^3 2^{q_2(m)}> \lvert \sigma \rvert^2 2^{2(m+5)}$,
	\begin{multline}
	\label{eq:pspacerapiderrorterm2}
		\left\lVert \frac{(k-1)O(\log(n))}{n} \right\rVert_2 = \left\lVert \frac{(k-1)O(\log(n))}{\sqrt{n}\sqrt{n}} \right\rVert_2 \leq \left\lvert \frac{k-1}{\sqrt{n}} \right\rvert \leq \left\lvert \frac{k-1}{k 2^{m+5}}\right\rvert \leq \frac{1}{2^{m+5}} 
	\end{multline}
	Let,
	\begin{align*}
	D_{n,m}^\sigma (x) = \frac{\sum\limits_{i=1}^{\lfloor \frac{n}{k}\rfloor} X^{1,1}_i(x)}{n}  + \sum\limits_{p=2}^{m+5+2} \sum\limits_{j=1}^{k-1} \frac{\sum\limits_{i=1}^{\lfloor \frac{n}{2^{p-1}k} \rfloor} X_i^{p,j}}{n}	
	\end{align*}
	From (\ref{eq:pspacerapiderrorterm1}) and (\ref{eq:pspacerapiderrorterm2}), we get that
	\begin{align*}
	\lVert A_n^{\chi_\sigma}-D_{n,m}^\sigma \rVert_2 \leq \frac{2}{2^{m+5}}. 	
	\end{align*}
	Let,
	\begin{align*}
	E_{n,m}^\sigma (x) = \left( \frac{\sum\limits_{i=1}^{\lfloor \frac{n}{k}\rfloor} X^{1,1}_i(x)}{\lfloor \frac{n}{k} \rfloor} -\frac{1}{2^k} \right) \frac{\lfloor \frac{n}{k} \rfloor}{n} + \sum\limits_{p=2}^{m+5+2} \sum\limits_{j=1}^{k-1} \left( \frac{\sum\limits_{i=1}^{\lfloor \frac{n}{2^{p-1}k} \rfloor} X_i^{p,j}}{\lfloor \frac{n}{2^{p-1}k} \rfloor} -\frac{1}{2^k} \right) \frac{\lfloor \frac{n}{2^{p-1}k} \rfloor}{n}
	\end{align*}
	Now,
	\begin{align*}
	D_{n,m}^\sigma (x) - E_{n,m}^\sigma (x) = \frac{1}{2^k k} + 	\sum\limits_{p=2}^{m+5+2} \sum\limits_{j=1}^{k-1} \frac{1}{2^k}\frac{\lfloor \frac{n}{2^{p-1}k} \rfloor}{n}
	\end{align*}
	It follows that,
	\begin{align*}
	\lVert D_{n,m}^\sigma (x) - E_{n,m}^\sigma \rVert_2 &\leq \frac{1}{2^k} + 	\sum\limits_{p=2}^{m+5+2} \sum\limits_{j=1}^{k-1} \frac{1}{2^k 2^{p-1}k} \\
	&\leq \frac{1}{2^k} + \sum\limits_{p=2}^{m+5+2} \frac{1}{2^k 2^{p-1}} \\
	&\leq \frac{1}{2^k} + \sum\limits_{p=2}^{m+5+2} \frac{1}{2^k} \\
	&\leq \frac{m+5+2}{2^{k}}
	\end{align*}
	Hence, if $\lvert \sigma \rvert = k \geq q_1(m)=m+5+2+m+5$ then,
	\begin{align*}
	\lVert D_{n,m}^\sigma (x) - E_{n,m}^\sigma \rVert_2 \leq \frac{1}{2^{m+5}}
	\end{align*}
	and,
	\begin{align*}
	\lVert A_n^{\chi_\sigma}-\mu(\sigma) \rVert_2 &\leq 	\lVert A_n^{\chi_\sigma}-D_{n,m}^\sigma \rVert_2 + \lVert D_{n,m}^\sigma (x) - E_{n,m}^\sigma \rVert_2 + \lVert E_{n,m}^{\sigma} \rVert_2 + \frac{1}{2^k} \\
	&\leq \frac{3}{2^{m+5}}+\lVert E_{n,m}^{\sigma} \rVert_2+\frac{1}{2^{2m+12}} \\
	&\leq \frac{4}{2^{m+5}} + \lVert E_{n,m}^{\sigma} \rVert_2.
	\end{align*}
	Hence, in order to show that for all $n \geq \lvert \sigma\rvert^3 2^{q_2(m)}$, $\lVert A_n^{\chi_\sigma}-\mu(\sigma) \rVert_1 \leq \lVert A_n^{\chi_\sigma}-\mu(\sigma) \rVert_2 \leq 2^{-m}$, it is enough to show that for all $n \geq \lvert \sigma\rvert^3 2^{q_2(m)}$, $\lVert E_{n,m}^{\sigma} \rVert_2 \leq 2^{-(m+5)}$. Observe that,
	\begin{align*}
		\lVert E_{n,m}^{\sigma} \rVert_2 \leq \left\lVert \frac{1}{\lfloor \frac{n}{k} \rfloor} \sum\limits_{i=1}^{\lfloor \frac{n}{k}\rfloor} X^{1,1}_i(x) -\frac{1}{2^k} \right\rVert_2 + \sum\limits_{p=2}^{m+5+2} \sum\limits_{j=1}^{k-1} \left\lVert \frac{1}{\lfloor \frac{n}{2^{p-1}k} \rfloor}\sum\limits_{i=1}^{\lfloor \frac{n}{2^{p-1}k} \rfloor} X_i^{p,j} -\frac{1}{2^k} \right\rVert_2.
	\end{align*}
	Let $Y_1,Y_2,\dots Y_n$ be i.i.d Bernoulli random variables, 
	\begin{align*}
	\left\lVert \frac{1}{n} \sum\limits_{i=1}^{n} Y_i - \mathbf{E}(Y_1) \right\rVert_2	&= \sqrt{\mathbf{E}\left(\left( \frac{1}{n}\sum\limits_{i=1}^{n}Y_i - \mathbf{E}(Y_1)\right)^2\right)} \\
	&= \sqrt{\mathrm{Var}\left( \frac{1}{n}\sum\limits_{i=1}^{n}Y_i\right)}\\
	&= \sqrt{\frac{1}{n^2}n\mathrm{Var}(Y_1)}\\
	&\leq \frac{\sqrt{\mathrm{Var}(Y_1)}}{\sqrt{n}} \\
	&\leq \frac{1}{2\sqrt{n}}
	\end{align*}
	The last inequality follows from the fact that the variance of Bernoulli random variables are always bounded by $\frac{1}{4}$. Hence, if $n \geq \lvert \sigma\rvert^3 2^{q_2(m)} = \lvert \sigma\rvert^3 2^{5(m+6)}$ then,
	\begin{align*}
	\left\lfloor \frac{n}{k} \right\rfloor > k^2 2^{4(m+6)}	
	\end{align*}
	and
	\begin{align*}
	\left\lfloor \frac{n}{2^{p-1}k} \right\rfloor \geq \frac{k^3 2^{5(m+6)}}{2^{m+5+1}k} > k^2 2^{4(m+6)}.
	\end{align*}
	Hence for all $n \geq \lvert\sigma\rvert^3 2^{q_2(m)} = \lvert \sigma\rvert^3 2^{5(m+6)}$,
	\begin{align*}
	\lVert E_{n,m}^{\sigma} \rVert_2 &\leq \frac{1}{2k2^{2(m+6)}} + (m+6)k \frac{1}{2k 2^{2(m+6)}}\\
	&< \frac{1}{2^{m+6}} + \frac{1}{2^{m+6}} \\
	&\leq \frac{1}{2^{m+5}}.
	\end{align*}
	Hence we obtain the desired conclusion.
\end{proof}

We remark that since Lemma \ref{lem:characteristicfunctionpspacerapid} is true with the $L^1$-norm replaced by the $L^2$-norm, Theorem \ref{thm:l1functionpspacerapid} is also true in the $L^2$ setting. i.e, if a function $f$ is $\PSPACE$ $L^2$-computable (replacing $L^1$ norms with $L^2$ norms in definition \ref{def:pspacel1computablefunction}) then there exists a polynomial $q$ satisfying the following: given any $m\in \N$, for all $n \geq 2^{q(m)}$, $\lVert A_n^f -\int f d\mu \rVert_2 \leq 2^{-m}$. Hence, for $\PSPACE$ $L^2$-computable functions and the left shift transformation $T$, we get bounds on the convergence speed in the von-Neumann's ergodic theorem.
%
%

We now show that $\PSPACE$ ergodicity is a stronger version of $\ln^2$-ergodicity introduced in \cite{galatolo2009constructive}. 

\begin{lemma}
\label{lem:exprapidequivalence}
Let $T:\Sigma^\infty \to \Sigma^\infty$ be any measurable transformation. $T$ is $\PSPACE$ ergodic if and only if for any $f\in L^{\infty}(\Sigma^\infty,\mu)$, there exist $c>0$ and $k \in \N$ such that for all $n>0$,
	\begin{align*}
	\left\lvert \frac{1}{n}\sum\limits_{i=0}^{n-1}\int f\circ T^i. f - \int f d\mu \int f d\mu \right\rvert \leq \frac{c}{2^{(\ln n)^\frac{1}{k}}}.	
	\end{align*}
\end{lemma}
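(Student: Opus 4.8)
The plan is to collapse both conditions onto a single per-function $L^2$ statement about the mean-centred function and then reconcile the two quantifier ranges. Throughout set $\bar f = \int f\,d\mu$ and $g = f - \bar f$; since $T$ preserves $\mu$ we have $\int g\circ T^i\,d\mu = 0$, and expanding the product shows that the quantity inside the absolute value in the statement is exactly $C_n := \frac{1}{n}\sum_{i=0}^{n-1}\langle g\circ T^i, g\rangle$, the lag-averaged autocorrelation of $g$. Two further observations drive everything: $\lVert A_n^f - \bar f\rVert_2 = \lVert A_n^g\rVert_2$, and $C_n = \langle A_n^g, g\rangle$. I will first show, for a \emph{fixed} $f$, that $\lvert C_n\rvert$ decays at a stretched-exponential rate $c\,2^{-(\ln n)^{1/k}}$ if and only if $\lVert A_n^g\rVert_2$ does, and only afterwards worry about the function classes.

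The engine of the per-function equivalence is the exact identity
\begin{align*}
n^2\,\lVert A_n^g\rVert_2^2 \;=\; \sum_{i,j=0}^{n-1}\langle g\circ T^i, g\circ T^j\rangle \;=\; 2\sum_{i=1}^{n} i\,C_i \;-\; n\,C_1,
\end{align*}
obtained by using measure preservation to replace $\langle g\circ T^i, g\circ T^j\rangle$ by $\langle g\circ T^{\lvert i-j\rvert}, g\rangle$, grouping the terms by the value of $\lvert i-j\rvert$, and summing by parts via $\sum_{l=0}^{i-1}\langle g\circ T^l, g\rangle = i\,C_i$. One direction is immediate from Cauchy--Schwarz, $\lvert C_n\rvert = \lvert\langle A_n^g, g\rangle\rvert \le \lVert A_n^g\rVert_2\,\lVert g\rVert_2$, so rapid $L^2$ decay of the averages forces at least as rapid decay of the correlations. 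For the converse I substitute $\lvert C_i\rvert \le c\,2^{-(\ln i)^{1/k}}$ into the identity; since this factor decays subpolynomially, the product $i\,2^{-(\ln i)^{1/k}}$ is eventually increasing, so each of the $n$ summands is at most the last and $\sum_{i=1}^{n} i\,\lvert C_i\rvert = O\!\left(n^2\,2^{-(\ln n)^{1/k}}\right)$. Dividing by $n^2$ and taking square roots gives $\lVert A_n^g\rVert_2 = O\!\left(2^{-\frac12(\ln n)^{1/k}}\right)$, and enlarging $k$ by one absorbs the factor $1/2$ in the exponent, recovering the same rate family. As $\lVert\cdot\rVert_1 \le \lVert\cdot\rVert_2$ and the families $\{c\,2^{-(\ln n)^{1/k}}\}$ and $\{2^{-\lfloor(\log n)^{1/c}\rfloor}\}$ agree up to multiplicative constants and a bounded change of exponent, this per-function equivalence, combined with the full-sequence reformulation of $\PSPACE$-rapid convergence (and its $L^2$ form) discussed around Theorem \ref{thm:l1functionpspacerapid}, identifies $\PSPACE$ ergodicity with $\ln^2$-type decay of $C_n$ on the relevant functions.

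It remains to reconcile the quantifier ranges, since $\PSPACE$ ergodicity ranges over $\PSPACE$ $L^1$-computable $f$ while the correlation condition ranges over all $f \in L^\infty$. From the correlation condition to $\PSPACE$ ergodicity: given a $\PSPACE$ $L^1$-computable $f$, approximate it by a member $f_N$ of its $\PSPACE$ $L^1$-approximation, which is simple hence bounded; as ergodic averages are $L^1$-contractions and $T$ preserves the integral, $\lVert A_n^f - \bar f\rVert_1 \le \lVert A_n^{f_N} - \overline{f_N}\rVert_1 + 2\lVert f - f_N\rVert_1$, so applying the hypothesis to the bounded piece $f_N$ and taking $N \approx m$ produces the subsequence witness required by the $\PSPACE$-rapid limit-point definition. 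In the reverse direction, I pass from simple functions to an arbitrary $f \in L^\infty$ by $L^2$-approximation: with $s$ a simple function and the centred functions $g_f, g_s$, the telescoping $C_n(f) - C_n(s) = \langle A_n^{g_f}, g_f - g_s\rangle + \langle A_n^{g_f - g_s}, g_s\rangle$ together with $\lVert A_n^{h}\rVert_2 \le \lVert h\rVert_2$ bounds the difference by $\lVert g_f - g_s\rVert_2(\lVert g_f\rVert_2 + \lVert g_s\rVert_2)$, so a family of approximants whose accuracy sharpens with $n$ transfers the decay from simple functions to $f$.

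The main obstacle is this final reconciliation \emph{with the rates intact}. The per-function constants $c, k$ furnished for the approximants $f_N$ (respectively $s$) are a priori uncontrolled as the approximation is refined, whereas to fit inside the single polynomial demanded by the $\PSPACE$-rapid limit-point definition (respectively, to force the additive approximation error to decay in $n$) one must balance the approximation level against $n$ with these constants growing only polynomially. Securing this uniformity --- in effect, that the decay rate is governed uniformly by the behaviour on cylinder indicators, whose controlling data is polynomially bounded in the cylinder length, exactly as exploited in Lemma \ref{lem:characteristicfunctionpspacerapid} --- is the delicate step; by comparison the $L^2$ identity and the slowly-varying summation estimate are routine.
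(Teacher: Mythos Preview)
Your per-function argument is correct and takes a different route from the paper. For the direction from the correlation hypothesis to rapidity of the averages, the paper does not use your $L^2$ identity $n^2\lVert A_n^g\rVert_2^2 = 2\sum_{i=1}^n iC_i - nC_1$ at all: instead it invokes a Chebyshev-type estimate (Lemma~6 of \cite{galatolo2009constructive}) to convert the correlation bound into the measure bound $\mu\{|A_n^f|>2^{-m_1}\} \le 2^{2m_1}\bigl(\lVert f\rVert_\infty^2/n + c\,2^{-(\ln n)^{1/k}}\bigr)$, upgrades this to $\PSPACE$-rapid a.e.\ convergence along the subsequence $A^f_{\sqrt{n}}$, interpolates to the full sequence via the $L^\infty$ bound $\lVert A^f_{\sqrt n} - A^f_m\rVert_\infty \le 2\bigl(1-\sqrt{n/(n{+}1)}\bigr)\lVert f\rVert_\infty$ (Lemma~7 of \cite{galatolo2009constructive}), and only then reads off $L^1$ convergence. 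Your route bypasses the a.e.\ detour and the external lemmas entirely; the paper's route has the compensating advantage of plugging directly into the a.e.\ convergence machinery (Theorem~\ref{thm:aeconvergence}) used throughout the rest of the paper. In the reverse direction both arguments reduce to Cauchy--Schwarz $|C_n|\le\lVert A_n^g\rVert_2\lVert g\rVert_2$, though the paper again passes through Theorem~\ref{thm:aeconvergence} and then re-integrates against $\lVert f\rVert_\infty$ to recover the needed $L^2$ rate, whereas you stay in $L^2$ throughout.

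Your concern about reconciling the quantifier ranges is well-founded, but you should know that the paper does not address it either: its proof is really the per-function equivalence on $L^\infty$, tacitly applying Theorem~\ref{thm:aeconvergence} (whose hypothesis is the $\PSPACE$-rapid $L^1$-limit-point property for the \emph{given} $f$) to an arbitrary $f\in L^\infty$ in one direction, and conversely concluding the rapid-$L^1$ property only for that same $f\in L^\infty$ without ever passing to unbounded $\PSPACE$ $L^1$-computable functions. So the approximation sketches you give and the uniformity obstacle you flag already go beyond what the paper's own proof attempts; the gap you identify is present there as well, just left implicit.
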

\begin{proof}
	We prove the forward implication first. The proof uses techniques from the proof of Theorem 4 in \cite{galatolo2009constructive}. From the hypothesis there exist $c>0$ and $k \in \N$ such that for all $n>0$,
	\begin{align*}
	\left|\frac{1}{n} \sum\limits_{i=0}^{n-1} f\circ T^i.f - \left(\int f d\mu\right)^2 \right| < \frac{c}{2^{(\ln n)^\frac{1}{k}}}
	\end{align*}
	By replacing $f$ with $f - \int f d\mu$, without loss of generality we assume that $\int f d\mu=0$. First we show that $A_n^f$ is $\PSPACE$-rapid almost everywhere convergent to $\int f d\mu$. Following the steps in proof of Lemma 6 in \cite{galatolo2009constructive} we get that,
	\begin{align*}
	\mu \left( \left\{x: \left| A_n^f-\int f d\mu \right| > \frac{1}{2^{m_1}} \right\} \right)	 \leq 2^{2m_1}\left(\frac{\Vert f\rVert_\infty^2}{n}+\frac{c}{2^{(\ln n)^\frac{1}{k}}}\right)
	\end{align*}
	Hence, there exists a polynomial $q$ such that,
	\begin{align*}
		\mu \left( \left\{x: \sup\limits_{n \geq 2^{q(m_1,m_2)}} \left| A_{\sqrt{n}}^f-\int f d\mu \right| > \frac{1}{2^{m_1}} \right\} \right)	 \leq \frac{1}{2^{m_2}} 
	\end{align*}
	From the proof of Lemma 7 in \cite{galatolo2009constructive}, we get that for $m$ such that $\sqrt{n} \leq m \leq \sqrt{n+1}$ and $\beta_n = \frac{\sqrt{n}}{\sqrt{n+1}}$,
	\begin{align*}
	\lVert A_{\sqrt{n}}^f-A_m^f\rVert_\infty \leq 2 (1-\beta_n)\lVert f\rVert_\infty	
	\end{align*}
	Let $l_1,l_2$ be any numbers such that $2^{l_1}\geq \lVert f \rVert_\infty$ and $2^{l_2} \geq 2k$. Let $p(n)=2n-2$. It is easy to see that for all $n \geq 2^{p(m +l_1)}$,
	\begin{align*}
		\frac{\sqrt{n}}{\sqrt{n+1}} &= \sqrt{1-\frac{1}{n+1}}
		\\&\geq 1-\frac{1}{2^{m_1+l_1+2}}\\
		&\geq 1-\frac{1}{2^{m_1+2}\lVert f\rVert_\infty}
	\end{align*}

	From the two previous inequalities we get that for $n\geq 2^{p(m +l_1)}$ and $m$ such that $\sqrt{n} \leq m \leq \sqrt{n+1}$,
	\begin{align*}
	\lVert A_{\sqrt{n}}^f-A_m^f\rVert_\infty \leq \frac{1}{2^{m_1 +1}}	
	\end{align*}
	Hence,
	\begin{align*}
	\bigcup\limits_{\sqrt{n}\leq m \leq \sqrt{n+1}} \left( \left\{x: \left| A_m^f-\int f d\mu \right| > \frac{1}{2^{m_1}} \right\} \right)	 \subseteq \left( \left\{x: \left| A_{\sqrt{n}}^f-\int f d\mu \right| > \frac{1}{2^{m_1+1}} \right\} \right)
	\end{align*}
	Let $r(m_1,m_2)=q(m_1+1,m_2)+p(m +l_1)$. Now,
	\begin{align*}
		\mu \left( \left\{x: \sup\limits_{n \geq 2^{r(m_1,m_2)}} \left| A_{n}^f-\int f d\mu \right| > \frac{1}{2^{m_1}} \right\} \right)	 &\leq \mu \left( \left\{x: \sup\limits_{n \geq 2^{q(m_1,m_2)}} \left| A_{\sqrt{n}}^f-\int f d\mu \right| > \frac{1}{2^{m_1}} \right\} \right)	 \\
		&\leq \frac{1}{2^{m_2}}
	\end{align*}
Hence $A_n^f$ is $\PSPACE$-rapid almost everywhere convergent to $\int f d\mu$. Now, given any $i$, for $n \geq 2^{r(i+1,i+l_1+1)}$,
\begin{align*}
\lVert A_n^f-\int f d\mu\rVert_1 \leq \frac{1}{2^{i+1}} + \frac{\lVert f\rVert_\infty}{2^{i+l_1+1}} \leq \frac{1}{2^i}	
\end{align*}
Hence, $\int f d\mu$ is a $\PSPACE$-rapid $L^1$-limit point of $A_n^f$.

Now, we prove the backward direction. Observe that,
\begin{align*}
\left\lvert \frac{1}{n}\sum\limits_{i=0}^{n-1}\int f\circ T^i. f - \int f d\mu \int f d\mu \right\rvert &= 	\left\lvert \frac{1}{n}\sum\limits_{i=0}^{n-1}\int f\circ T^i. f -  \int \left(\int f d\mu \right) f d\mu \right\rvert \\
&= \left\lvert \frac{1}{n}\sum\limits_{i=0}^{n-1}\int \left(f\circ T^i -   \int f d\mu \right) f d\mu \right\rvert \\
&= \left\lvert \int \left(\frac{1}{n}\sum\limits_{i=0}^{n-1}f\circ T^i -   \int f d\mu \right) f d\mu \right\rvert \\
&\leq \lVert A_n^f - \int f d\mu\rVert_2 \lVert f\rVert_2
\end{align*}

The last inequality follows from the Cauchy-Schwarz inequality. From Theorem \ref{thm:aeconvergence} we get that $A_n^f$ is $\PSPACE$-rapid almost everywhere convergent to $\int f d\mu$. Hence, there exists a polynomial $p$ such that for any $m_1$ and
$m_2$,
\begin{align*}
  \mu\left(\left\{ x:\sup\limits_{n \geq 
  2^{p(m_1+m_2)}}
  \left|A^f_n(x)-\int f d\mu \right| > \frac{1}{2^{m_1}}\right\} \right)	\leq
  \frac{1}{2^{m_2}}. 
 \end{align*}
  Let $l_1$ be any number such that $2^{l_1} \geq \lVert f \rVert_\infty \geq \lVert f \rVert_2$. Now, for any given $m>0$, for all $n \geq 2^{p(2m+2l_1+1,2m+5l_1+1)}$,
  \begin{align*}
   \lVert A_n^f -\int f d\mu \rVert_2^2 \leq \frac{1}{2^{2m+2l_1+1}}+\frac{\lVert f\rVert^2_\infty}{2^{2m+5l_1+1}} \leq \frac{1}{2^{2m+2l_1}}.
  \end{align*} 
  Hence, there exists $j>0$ such that for any given $m>0$, for all $n \geq 2^{(m+l_1)^j}$
  \begin{align*}
   \lVert A_n^f -\int f d\mu \rVert_2 \leq \frac{1}{2^{m+l_1}}.
  \end{align*} 
  Now given $n>0$, let $m$ be any number such that $2^{(m+l_1)^j} \leq n \leq 2^{(m+l_1+1)^j}$,
\begin{align*}
\left\lvert \frac{1}{n}\sum\limits_{i=0}^{n-1}\int f\circ T^i. f - \int f d\mu \int f d\mu \right\rvert &\leq \frac{\lVert f\rVert_2}{2^{m+l_1}} \leq \frac{1}{2^m}
\end{align*}
Since $n \leq 2^{(m+l_1+1)^j}$,
\begin{align*}
\left\lvert \frac{1}{n}\sum\limits_{i=0}^{n-1}\int f\circ T^i. f - \int f d\mu \int f d\mu \right\rvert &\leq \frac{2^{l_1+1}}{2^{(\ln n)^\frac{1}{j}}}
\end{align*}
The result follows with $c=2^{l_1+1}$.
\end{proof}

\section{$\PSPACE$-rapid almost everywhere convergence of ergodic
  averages} In the earlier section, we related the rapidity of $L^1$
convergence of $f_n$ to $f$, to the $L^1$ convergence speed of $A^f_n$
to $\int f$.  Now we present $\PSPACE$ versions of Theorem 2 and
Proposition 5 from \cite{hoyruprojasgalatolo}, relating the $L^1$
convergence of $A^f_n$ to $\int f$ to its almost everywhere
convergence.  The main estimate which we require in this section is
the maximal ergodic inequality, which we now recall.

\begin{lemma}[Maximal ergodic inequality \cite{Bill86}]
If $f \in L^{1}(\Sigma^\infty,\mu)$ and $\delta>0$ then,
\begin{align*}
\mu\left(\left\{x:\sup\limits_{n \geq 1} |A^f_n(x)|>
\delta\right\}\right) \leq \frac{\lVert f\rVert_1}{\delta}.	
\end{align*}	
\end{lemma}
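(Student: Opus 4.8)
The plan is to reduce the two-sided bound to the classical one-sided Hopf maximal ergodic lemma, prove that lemma by Garsia's telescoping argument, and then apply it to a shifted function. First I would pass to the nonnegative case: since $|A_n^f| \le A_n^{|f|}$ pointwise for every $n$ (the absolute value of an average is at most the average of absolute values, and $|f\circ T^i| = |f|\circ T^i$), we have the inclusion $\{x : \sup_{n\ge 1}|A_n^f(x)| > \delta\} \subseteq \{x : \sup_{n\ge 1} A_n^{|f|}(x) > \delta\}$, while $\lVert |f|\rVert_1 = \lVert f\rVert_1$. Hence it is enough to bound $\mu(\{\sup_{n\ge 1} A_n^g > \delta\})$ for the nonnegative function $g = |f|$.

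Next I would establish Hopf's lemma. Write $S_0 = 0$ and $S_n = \sum_{i=0}^{n-1} g\circ T^i$ for $n\ge 1$, and set $M_N = \max_{0\le n\le N} S_n$, so that $M_N \ge 0$ everywhere. The claim is that $\int_{\{M_N > 0\}} g\, d\mu \ge 0$ for each $N$. The key identity $S_n = g + S_{n-1}\circ T$ gives, for $1 \le n \le N$, the bound $S_n \le g + M_N\circ T$ because $S_{n-1} \le M_N$; taking the maximum over $1 \le n \le N$ and using that on $\{M_N > 0\}$ the value $M_N$ is attained at some index $n \ge 1$ (as $S_0 = 0 < M_N$ there), I obtain $g \ge M_N - M_N\circ T$ on $\{M_N > 0\}$. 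Integrating this over $\{M_N > 0\}$, noting that $M_N$ vanishes off this set and that $M_N\circ T \ge 0$, and invoking measure preservation of $T$ (so $\int M_N\circ T\,d\mu = \int M_N\,d\mu$), yields $\int_{\{M_N>0\}} g\,d\mu \ge \int M_N\,d\mu - \int M_N\,d\mu = 0$.

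Finally I would apply Hopf's lemma with $g - \delta$ in place of $g$. Since $A_n^{g-\delta} = A_n^g - \delta$ and an average is positive exactly when the corresponding partial sum is, the event $\{\sup_{n\ge 1} A_n^g > \delta\}$ is the increasing union $\bigcup_N \{M_N' > 0\}$, where $M_N'$ is the running maximum of the partial sums of $g - \delta$. Hopf's lemma applied to $g - \delta$ gives $\int_{\{M_N' > 0\}} (g - \delta)\,d\mu \ge 0$, that is, $\delta\,\mu(\{M_N' > 0\}) \le \int_{\{M_N'>0\}} g\,d\mu \le \lVert g\rVert_1$, where the nonnegativity of $g$ is used in the last step. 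Letting $N\to\infty$ by continuity of $\mu$ from below gives $\delta\,\mu(\{\sup_{n\ge 1} A_n^g > \delta\}) \le \lVert g\rVert_1 = \lVert f\rVert_1$, which together with the inclusion from the first paragraph proves the inequality. I expect the telescoping step in Hopf's lemma --- deriving $g \ge M_N - M_N\circ T$ and then discarding the $M_N\circ T$ term via measure preservation --- to be the only delicate point; the reduction to $g\ge 0$ and the monotone passage to the limit in $N$ are routine.
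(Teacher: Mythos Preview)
Your argument is correct and is essentially the classical Garsia proof of Hopf's maximal ergodic lemma, followed by the standard shift $g\mapsto g-\delta$ and a monotone limit; the reduction to $g=|f|$ via $|A_n^f|\le A_n^{|f|}$ is the cleanest way to get the two-sided statement from the one-sided one.

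The paper, however, does not prove this lemma at all: it is stated with a citation to Billingsley and used as a black box in the proof of Theorem~\ref{thm:aeconvergence}. So your write-up supplies strictly more than the paper does here. If you include it, you might note that the inequality as stated is slightly stronger than the usual Hopf form (two-sided supremum rather than one-sided), and that your first paragraph is precisely the step bridging the gap.
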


Using this lemma, we now prove the almost everywhere convergence of
ergodic averages. In contrast to \cite{hoyruprojasgalatolo}, we give a
direct proof of the theorem for $L^1$ functions with possibly infinite
essential supremum using Markov's inequality.

\begin{theorem}
\label{thm:aeconvergence}
Let $f$ be any function in $ L^{1}(\Sigma^\infty,\mu)$ and let $T$ be a measure preserving transformation. If $\int f d\mu$ is a
$\PSPACE$-rapid $L^1$-limit point of $A^f_n$ then $A_{n}^{f}$ is
$\PSPACE$-rapid almost everywhere convergent to $\int f d\mu$.
\end{theorem}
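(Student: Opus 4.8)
The plan is to reduce to a mean-zero function and then control the tail supremum of the Birkhoff averages in measure by combining the maximal ergodic inequality with the rapid $L^1$ convergence supplied on a subsequence. First I would set $g = f - \int f\, d\mu$, so that $\int g\, d\mu = 0$ and $A_n^g = A_n^f - \int f\, d\mu$; the goal becomes showing that $A_n^g$ is $\PSPACE$-rapid almost everywhere convergent to $0$. By hypothesis there is a polynomial $p$ and indices $k_m \le 2^{p(m)}$ with $\lVert A_{k_m}^g \rVert_1 \le 2^{-m}$. The key algebraic device is the disjoint-block identity: for $n = qk + s$ with $0 \le s < k$ one has $n A_n^g = qk\, A_{qk}^g + \sum_{i=qk}^{n-1} g \circ T^i$, and moreover $A_{qk}^g = \tfrac1q \sum_{i=0}^{q-1} (A_k^g) \circ T^{ki}$. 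Since $T^k$ is measure preserving, the latter exhibits the averages along multiples of $k$ as genuine Birkhoff averages of the \emph{small} function $A_k^g$ under the map $T^k$. This already yields the full-sequence $L^1$ rate $\lVert A_n^g \rVert_1 \le 2^{-m} + k_m \lVert g \rVert_1 / n$ (taking $k = k_m$), hence a polynomial $p'$ with $\lVert A_n^g \rVert_1 \le 2^{-s}$ for all $n \ge 2^{p'(s)}$.

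Next I would obtain the tail-supremum bound in measure. Applying the maximal ergodic inequality to $A_k^g$ with respect to the measure preserving transformation $T^k$ gives, for $k = k_m$, $\mu( \sup_{q \ge 1} |A_{qk}^g| > \delta ) \le \lVert A_k^g \rVert_1 / \delta \le 2^{-m}/\delta$, so that \emph{all} multiples of $k$ are controlled simultaneously and rapidly, with no boundary error. For a general index $n = qk + s$ I would then write $A_n^g = \tfrac{qk}{n} A_{qk}^g + G(n)$, where $G(n) = \tfrac1n \sum_{i=qk}^{n-1} g \circ T^i$ is a residual Birkhoff sum of fewer than $k$ terms; the first summand is handled by the maximal estimate just stated. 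Choosing $m \approx m_1 + m_2$ and starting the supremum at a threshold $N$ that is a suitable polynomial multiple of $k_m$ makes this aligned part smaller than $2^{-m_1}$ off a set of measure at most $2^{-m_2}$.

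The main obstacle is the residual term $G(n)$: it is a short Birkhoff sum over the \emph{fresh} iterates $T^{qk}x, \dots, T^{n-1}x$ lying near time $n$, and as $n$ ranges over the tail these windows sweep the whole space, so the naive maximal-function bound only gives $\lVert g \rVert_1/\delta$, which does not shrink with the scale. To defeat this for $L^1$ functions with possibly infinite essential supremum I would split $g = b + u$ with $b = g \cdot \mathbf{1}_{|g| \le L}$ bounded and $u = g \cdot \mathbf{1}_{|g| > L}$ of small norm $\lVert u \rVert_1 = \int_{|g|>L} |g|\, d\mu$. The bounded part contributes $|G_b(n)| \le kL/n$, which falls below $2^{-m_1}$ deterministically once $n$ exceeds a threshold polynomial in the parameters; the unbounded part is dominated pointwise by $\sup_{n} A_n^{|u|}$, and here Markov's inequality in the form of the maximal ergodic inequality gives $\mu( \sup_n A_n^{|u|} > \delta ) \le \lVert u \rVert_1/\delta$, which is small precisely because $\lVert u \rVert_1$ is small. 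The delicate point, and the crux of the argument, is to choose the truncation level $L$ and the block scale $k = k_m$ so that both contributions drop below the required bounds while the threshold on $n$ remains polynomial in $m_1 + m_2$; this balancing, driven by Markov's inequality applied to the small-norm tail $u$, is exactly where the proof for unbounded $L^1$ functions departs from the bounded case treated in \cite{hoyruprojasgalatolo}.
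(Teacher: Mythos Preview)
Your reduction to a mean-zero function and the idea of exploiting the small-norm subsequence value $A_{k_m}^g$ via the maximal ergodic inequality are exactly right and match the paper.  The gap is in your treatment of the residual $G(n)$ by truncation.

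You need two things simultaneously: (i) $\lVert u\rVert_1=\int_{|g|>L}|g|\,d\mu\le 2^{-(m_1+m_2+O(1))}$ so that the maximal inequality applied to $|u|$ gives measure at most $2^{-m_2}$, and (ii) $N\gtrsim kL\cdot 2^{m_1}$ so that the bounded piece $|G_b(n)|\le kL/n$ is below $2^{-m_1}$.  But the theorem is stated for an \emph{arbitrary} $f\in L^1$ with no computability or integrability hypothesis beyond $L^1$; the only quantitative input is the fixed number $\lVert f\rVert_1$ and the polynomial $p$ from the rapid-limit-point assumption.  For a general $L^1$ function there is no rate for $\int_{|g|>L}|g|\to 0$: the level $L$ needed to push this below $2^{-(m_1+m_2)}$ can grow faster than any function of the form $2^{q(m_1+m_2)}$ (take e.g.\ $g$ with $\int_{|g|>L}|g|\asymp 1/\log\log L$).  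Your threshold $N$ then cannot be bounded by $2^{q(m_1+m_2)}$ for any polynomial $q$, so the conclusion of $\PSPACE$-rapid a.e.\ convergence fails to follow.

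The paper sidesteps this entirely by using a different identity.  Rather than your block decomposition under $T^k$, it sets $g=A_k^f$ and averages $g$ under $T$ itself, obtaining
\[
A_n^{g}=A_n^{f}+\frac{u\circ T^n-u}{nk},\qquad u=\sum_{j=0}^{k-2}(k-1-j)\,f\circ T^j,
\]
so that $\lVert u\rVert_1\le \tfrac{k(k-1)}{2}\lVert f\rVert_1$.  The maximal inequality applied to $g$ (under $T$) controls $\sup_{n\ge 1}|A_n^{g}|$ directly, and the correction term has $L^1$ norm at most $(k-1)\lVert f\rVert_1/n$, which \emph{is} polynomially controlled because $k\le 2^{p(m_1+m_2+2)}$ and $\lVert f\rVert_1$ is a fixed constant.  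Taking $n_0\approx k\,\lVert f\rVert_1\,2^{m_1+m_2}$ (still of the form $2^{q(m_1+m_2)}$) makes the correction small, and Markov's inequality finishes.  No truncation of $f$ is needed; the only data about $f$ that enters the bounds is $\lVert f\rVert_1$.
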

\begin{proof}
By replacing $f$ with $f-\int f d\mu$ we can assume without loss
of generality that $\int f d\mu =0$.
	
We construct a polynomial $q$ such that for any $m_1$ and
$m_2$,
\begin{align*}
  \mu\left(\left\{ x:\sup\limits_{n \geq 2^{
  q(m_1+m_2)}}
  |A^f_n(x)| > 2^{-m_1}\right\} \right)	\leq
  2^{-m_2}. 
\end{align*}
Since, $\int f d\mu = 0$ is a $\PSPACE$-rapid $L^1$-limit point of $A^f_n$ there is a polynomial $p$ such that $\exists k \leq 2^{p(m_1 +m_2 +2)}$ with
$\lVert A^f_k\rVert_1 \leq \frac{1}{2^{m_1 + m_2 + 2}}$. 
	
Applying the maximal ergodic inequality to $g=A^f_k$, we get
\begin{align}
  \label{eq:maximalergodicapplication}
  \mu\left(\left\{ x:\sup\limits_{n \geq 1} |A^g_n(x)| >
  \frac{1}{2^{m_1 +1}}\right\} \right)	\leq \frac{1}{2^{m_2 +1}}. 
\end{align}
Expanding $A^g_n$,
\begin{align*}
  A^g_n = A^f_n + \frac{u\circ T^n -u}{nk},
\end{align*}
where $u=(k-1)f+(k-2)f\circ T+\dots + f\circ T^{k-2}$. Note that 
\begin{align*}
  \lVert u\rVert_1 \leq \frac{k(k-1)}{2} \lVert f\rVert_1.	
\end{align*}
Let $M$ be any upper bound for $\lVert f\rVert_1$. And let $n_0 = (2^{p(m_1
+m_2 +2)}-1)M 2^{m_1+m_2+2}$. From the above, we get $\lVert
A^g_n-A^f_n\rVert_1 \leq \frac{1}{2^{m_1 +m_2 +2}}$ for any $n \geq n_0$. Now from Markov's inequality it follows that,
\begin{align}
\label{eq:markovinequalityapplication}
	\mu\left(\left\{ x:\sup\limits_{n \geq n_0} |A^f_n(x)-A^g_n(x)| > \frac{1}{2^{m_1+1}}\right\} \right)	\leq \frac{\lVert A^g_n-A^f_n \rVert_1}{2^{m_1+1}} \leq \frac{1}{2^{m_2+1}}
\end{align}

Hence, from \ref{eq:maximalergodicapplication} and \ref{eq:markovinequalityapplication}, we get
\begin{align*}
  \mu\left(\left\{ x:\sup\limits_{n \geq n_0} |A^f_n(x)| > \frac{1}{2^{m_1}}\right\} \right)	\leq \frac{1}{2^{m_2}}.
\end{align*}
Since $n_0$ is upper bounded by a term of the form $2^{
  q(m_1+m_2)}$ for a polynomial $q$, the claim follows.
\end{proof}

If $f \in L^\infty$, the converse of Theorem \ref{thm:aeconvergence} can be easily obtained by expanding $\lVert A_n^f - \int f d\mu \rVert_1$.

\section{An ergodic theorem for $\PSPACE$ $L^1$ functions}
\label{sec:pspaceergodictheorem}
We now establish the main theorem in our work, namely, that for
$\PSPACE$ $L^1$ computable functions, the ergodic average exists, and
is equal to the space average, on every $\EXP$ random. We utilize
the almost everywhere convergence results proved in the previous
section, to prove the convergence on every $\PSPACE$/$\EXP$ random. The convergence notions involved in proving the $\PSPACE$/$\SUBEXP$-space ergodic theorems and their interrelationships are summarized in Figure \ref{fig:convergencenotions}.
\begin{figure}[h!]
\begin{center}
\[
\begin{tikzcd}
& & \boxed{\PSPACE \text{ ergodic theorem}(\ref{thm:pspaceergodictheorem})}\\
\boxed{A_n^f \dashrightarrow \int f d\mu}  \arrow [bend left=60]{r}{\text{Theorem } \ref{thm:aeconvergence}}& \boxed{A_n^f \xrightarrow[\text{ a.e}]{\PSPACE} \int fd\mu} \arrow[bend left=60]{l}{f \in L^\infty}\arrow[sloped,auto,swap]{ur}{f \in \PSPACE \text{ }L^1}\arrow[sloped,above]{dr}{f \in \SUBEXP \text{ }L^1}\\
& & \boxed{\SUBEXP \text{ ergodic theorem}(\ref{thm:subexpergodictheorem})}\\ 
\end{tikzcd}
\]
\end{center}
\caption{Relationships between the major convergence notions involving $\PSPACE$ simple measure preserving transformations. $A_n^f \dashrightarrow \int f d\mu$ denotes that $\int f d\mu$ is a $\PSPACE$-rapid $L^1$-limit point of $A_n^f$. $\PSPACE$/$\SUBEXP$-space ergodicity is required only for obtaining the ergodic theorems from $\PSPACE$ a.e convergence.}
\label{fig:convergencenotions}
 \end{figure}
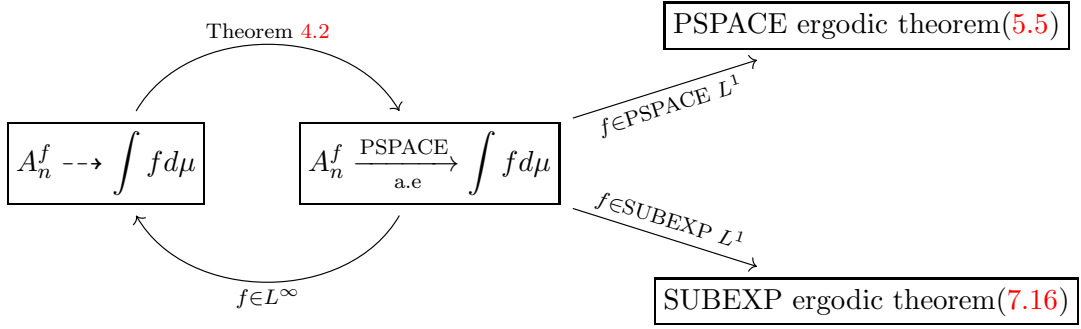

The following fact was shown in \cite{HuaStu16}. However, for our ergodic theorem we require an alternate proof of this fact using techniques from \cite{RuteThesis}.

\begin{lemma}
\label{lem:convergencelemma}
Let $\<f_n\>_{n=1}^{\infty}$ be a $\PSPACE$ sequence of simple
functions which converges $\PSPACE$-rapid almost everywhere to $f \in L^1
(\Sigma^\infty,\mu)$. Then,
\begin{enumerate}
	\item $\lim\limits_{n \to \infty}f_n(x)$ exists for all $\EXP$ random $x$.
	\item Given a $\PSPACE$ sequence of simple functions $\<g_n\>_{n=1}^{\infty}$ which is $\PSPACE$-rapid almost everywhere convergent to $f$, $\lim\limits_{n \to \infty}g_n(x)=\lim\limits_{n \to \infty}f_n(x)$ for all $\EXP$ random $x$. 
\end{enumerate}	
\end{lemma}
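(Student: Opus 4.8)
The plan is to exhibit an $\EXP$ test whose intersection contains the non-convergence set, so that every $\EXP$ random point avoids it and hence is a point of convergence. The central difficulty is that the pointwise limit $f$ is not available to us as a computable object, so the natural sets $\{x:\sup_{n\ge 2^{p(m_1+m_2)}}|f_n(x)-f(x)|\ge 2^{-m_1}\}$ supplied by $\PSPACE$-rapid almost everywhere convergence are not directly expressible from the machine for $\<f_n\>$. I would replace them by the \emph{tail-oscillation sets}, which refer only to the $f_n$: letting $p$ be the polynomial witnessing $\PSPACE$-rapid a.e.\ convergence, define
\begin{align*}
O_{m_1,m_2}=\Bigl\{x:\exists\, n,n'\ge 2^{p(m_1+m_2)}\ \text{with}\ |f_n(x)-f_{n'}(x)|\ge 2^{-m_1+1}\Bigr\}.
\end{align*}
By the triangle inequality, if $\sup_{n\ge 2^{p(m_1+m_2)}}|f_n(x)-f(x)|<2^{-m_1}$ then every pair in the tail differs by less than $2^{-m_1+1}$; hence $O_{m_1,m_2}$ is contained in the corresponding a.e.-convergence set and so $\mu(O_{m_1,m_2})\le 2^{-m_2}$. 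Each $O_{m_1,m_2}$ is open, being a union over pairs $(n,n')$ of the clopen sets on which the simple functions $f_n,f_{n'}$ (which are constant on cylinders of length $p_f(n)$, where $p_f$ is the controlling polynomial of $\<f_n\>$) realise a gap of at least $2^{-m_1+1}$.

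Next I would assemble the test. Since $\<f_n(x)\>$ fails to converge exactly when it is not Cauchy, the non-convergence set satisfies $B\subseteq\bigcup_{m_1}\bigcap_{m_2}O_{m_1,m_2}$: if the sequence oscillates at level $m_1$, then for every $m_2$ the threshold $2^{p(m_1+m_2)}$ still admits a violating pair, so $x\in O_{m_1,m_2}$. Setting $W_j=\bigcup_{m_1\ge 1}O_{m_1,\,m_1+j+1}$ gives $\mu(W_j)\le\sum_{m_1\ge 1}2^{-(m_1+j+1)}\le 2^{-j}$, and any $x\in B$ lies in $\bigcap_{m_2}O_{m_1^\ast,m_2}$ for some $m_1^\ast$, hence in $O_{m_1^\ast,\,m_1^\ast+j+1}\subseteq W_j$ for every $j$; thus $B\subseteq\bigcap_j W_j$. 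It then remains to verify that $\<W_j\>$ is an $\EXP$ sequence of open sets in the sense of Definition \ref{def:pspaceopensets} (its $\EXP$ variant), after which $\EXP$ randomness of $x$ forces $x\notin B$, proving part (1).

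The main obstacle is precisely this $\EXP$-computability verification, and the three conditions of Definition \ref{def:pspaceopensets} behave quite differently. The length bound (condition 2) and decidability (condition 3) are routine: when we truncate the oscillation at indices below $2^{p(m_1+m_2+s)}$ the witnessing cylinders have length at most $p_f\!\left(2^{p(m_1+m_2+s)}\right)$, which is exponential in the parameters and thus fits the $2^{q(\cdot)}$ controlling bound, while deciding a cylinder's membership reduces to evaluating the $\PSPACE$-computable $f_n,f_{n'}$ over polynomially long strings and ranging over at most $2^{\mathrm{poly}}$ pairs, well within $\EXP$ time. The delicate point is the approximation-rate condition $\mu\!\left(W_j-\bigcup_{k\le m}[S^k_j]\right)\le 2^{-m}$, i.e.\ that truncating the (infinite) tail union discards only exponentially small measure. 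I would control this by truncating each $O_{m_1,m_2}$ at the index $2^{p(m_1+m_2+s)}$ and bounding the discarded points: any such point has every oscillating pair straddling the truncation index, so pairing the triangle inequality with the hypothesis at the \emph{deeper} level $m_2+s$ places all but a plateau-type remainder inside a set of measure $\le 2^{-(m_2+s)}$; a plateau staying far from the limit must, by the eventual convergence guaranteed a.e., itself produce an in-range violation once $s$ is large, which forces the remainder into still-deeper a.e.-convergence sets. Making these threshold bookkeeping estimates precise (in the style of Rute \cite{RuteThesis}) is where the real work lies, since one must extract an explicit polynomial relating the truncation parameter $s$ to the target error $2^{-m}$.

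Finally, for part (2) I would avoid re-running the whole argument by interleaving. Given a second $\PSPACE$ sequence $\<g_n\>$ that is $\PSPACE$-rapid a.e.\ convergent to $f$, define $h_{2n}=f_n$ and $h_{2n-1}=g_n$; this is again a $\PSPACE$ sequence of simple functions, and combining the two rate polynomials shows $\<h_n\>$ is $\PSPACE$-rapid a.e.\ convergent to $f$. By part (1), $\lim_n h_n(x)$ exists for every $\EXP$ random $x$, and since $\<f_n(x)\>$ and $\<g_n(x)\>$ are subsequences of the convergent sequence $\<h_n(x)\>$, their limits coincide for all such $x$. This yields sequence-independence and completes the lemma.
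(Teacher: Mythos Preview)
Your overall strategy is sound and your treatment of part (2) by interleaving matches the paper exactly. The difficulty you flag in part (1)---verifying the approximation-rate condition (condition 1 of Definition~\ref{def:pspaceopensets}) for your tail-oscillation sets---is real, and your sketched ``plateau'' argument does not close it. Concretely, $O_{m_1,m_2}$ is the increasing union over $s$ of the clopen truncations $O_{m_1,m_2}^s$, but the rate at which $\mu(O_{m_1,m_2}^s)\uparrow\mu(O_{m_1,m_2})$ is not controlled by the $\PSPACE$-rapid a.e.\ convergence hypothesis alone. A point $x\in O_{m_1,m_2}\setminus O_{m_1,m_2}^s$ can have its witnessing pair straddle the cutoff, with the low index $n\le 2^{p(m_1+m_2+s)}$ satisfying $|f_n(x)-f(x)|\ge 2^{-m_1}$; this places $x$ only in a set of measure $\le 2^{-m_2}$, independently of $s$. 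If you tighten the error for the out-of-range index to $2^{-(m_1+k)}$, you obtain an in-range pair $(n,n'')$ with $|f_n(x)-f_{n''}(x)|>2^{-m_1+1}-2\cdot 2^{-(m_1+k)}$, which approaches but never reaches the threshold $2^{-m_1+1}$ defining $O_{m_1,m_2}^s$; so you cannot conclude $x\in O_{m_1,m_2}^s$, and the residual-measure bound stalls at $2^{-m_2}$ rather than decaying with $s$.

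The paper sidesteps this by a short but decisive device: instead of pairwise oscillation over an infinite tail, it compares each $f_n$ to a \emph{fixed anchor} $f_{2^{q(k)}}$ over the \emph{finite block} $2^{q(k)}\le n\le 2^{q(k+1)}$, setting
\[
U_k=\Bigl\{x:\max_{2^{q(k)}\le n\le 2^{q(k+1)}}\bigl|f_n(x)-f_{2^{q(k)}}(x)\bigr|\ge 2^{-(k+1)}\Bigr\}.
\]
Each $U_k$ is \emph{clopen} (it involves only finitely many simple functions), so no internal approximation is needed at all. Since $\mu(U_k)\le 2^{-(k+1)}$, the tail $V_k=\bigcup_{i\ge k}U_i$ has $\mu(V_k)\le 2^{-k}$, and its finite truncations $\bigcup_{i=k}^{k+m}U_i$ leave residual measure $\le\sum_{i>k+m}2^{-(i+1)}\le 2^{-m}$---the approximation rate comes for free from the summability of the individual measures. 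The cylinders defining $U_i$ have length at most $r(2^{q(i+1)})$, exponential in $i$; together with the $\PSPACE$ machine for $\<f_n\>$ this makes $\<V_k\>$ an $\EXP$ test. If $x$ is $\EXP$ random it avoids all but finitely many $U_k$, and a telescoping bound along the anchors $f_{2^{q(k)}}(x)$ shows $\<f_n(x)\>$ is Cauchy. The missing idea in your proposal is precisely this anchor-and-finite-block reduction, which replaces an open set with no effective exhaustion rate by a union of clopen sets whose measures already sum rapidly.
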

\begin{proof}
We initially show 1. For each $k \geq 0$, since $\<f_n\>_{n=1}^{\infty}$ is $\PSPACE$-rapid almost everywhere convergent to $f$, we have a polynomial $q$ such that
\begin{align*}
\mu \left( \left\{x : \sup\limits_{n \geq 2^{q(k)}} |f_n(x)-f(x)| \geq \frac{1}{2^{k+2}} \right\} \right) \leq \frac{1}{2^{k+2}}.
\end{align*}
	It is easy to verify that
\begin{align*}
\mu \left( \left\{x : \sup\limits_{n \geq 2^{q(k)}} |f_n(x)-f_{2^{q(k)}}(x)| \geq \frac{1}{2^{k+1}} \right\} \right) \leq \frac{1}{2^{k+1}}.
\end{align*}	
Define
\begin{align*}
U_k = \left\{x:\max\limits_{2^{q(k)} \leq n \leq 2^{q(k+1)}} |f_n(x)-f_{2^{q(k)}}(x)| \geq \frac{1}{2^{k+1}}\right\}.
\end{align*}
Observe that
\begin{align*}
\mu(U_k) \leq \mu \left( \left\{x : \sup\limits_{n \geq 2^{q(k)}}
|f_n(x)-f_{2^{q(k)}}(x)| \geq \frac{1}{2^{k+1}} \right\} \right) \leq
\frac{1}{2^{k+1}}.	 
\end{align*}
Let $r$ be the controlling polynomial and let $M$ be the $\PSPACE$
machine witnessing the fact that $\<f_n\>_{n=1}^{\infty}$ is a
$\PSPACE$ sequence of simple functions. $U_k$ is hence a union of
cylinders of length at most $r(2^{q(k+1)})$. Let the  machine $N$ on input $(\sigma,1^k)$ do the following:

\begin{enumerate}
	\item If $\lvert \sigma \rvert < r(2^{q(k+1)})$ then, output $0$.
	\item Compute $f_{2^{q(k)}}(\sigma0^\infty)$ by running $M(1^{2^{q(k)}}, \sigma) $ and store the result. 
	\item For each $n \in [2^{q(k)},2^{q(k+1)}]$ do the following:
	\begin{enumerate}
	\item Compute $f_{n}(\sigma0^\infty)$ by running $M(1^{n}, \sigma) $ and store the result.
	\item Check if $|f_n(\sigma0^\infty)-f_{2^{q(k)}}(\sigma0^\infty)| \geq \frac{1}{2^{k+1}}$. If so, output $1$.
	\end{enumerate}
	\item Output $0$.
\end{enumerate}
Since $N$ rejects any $\sigma$ with length less than $r(2^{q(k+1)})$, the simulation of $M(1^{2^{q(k)}}, \sigma)$ is always a polynomial space operation. Hence, $N$ is an $\EXP$-time machine witnessing the fact that $\<U_k\>_{k=1}^{\infty}$ is an $\EXP$ sequence of open sets.

 Define,
\begin{align*}
V_k = \bigcup\limits_{i=k}^{\infty} U_i.	
\end{align*}
Since $\mu(U_k)\leq 2^{-(k+1)}$, machine $N$ above can be easily modified to show that $\<V_k\>_{k=1}^{\infty}$ is an $\EXP$ test.

If $x \in \Sigma^\infty$ is an $\EXP$ random then $x$ is in at most
finitely many $V_k$ and hence in only finitely many $U_k$. Hence, for large enough $k$ and for all $n \geq 2^{q(k)}$ we have 
\begin{align*}
\left|f_n(x)-f_{2^{q(k)}}(x)\right|	 \leq \sum\limits_{j=k}^{\infty}
\frac{1}{2^{j+1}} \leq \frac{1}{2^k}.
\end{align*}
This shows that $f_n(x)$ is a Cauchy sequence. This completes the
proof of 1. 

Given $\<g_n\>_{n=1}^{\infty}$ which is $\PSPACE$-rapid almost everywhere
convergent to $f$, the interleaved sequence $f_1,g_1,f_2,g_2,f_3,g_3
\dots$ can be easily verified to be $\PSPACE$-rapid almost everywhere
convergent to $f$. 2 now follows directly from 1.  
\end{proof}

The following immediately follows from the above lemma.
\begin{corollary}
\label{cor:convergencelemmacorollary1}
Let $f \in L^1 (\Sigma^\infty,\mu)$ be a $\PSPACE$ $L^1$-computable
function with an $L^1$ approximating $\PSPACE$ sequence of simple functions
$\<f_n\>_{n=1}^{\infty}$. Then,
\begin{enumerate}
\item $\lim\limits_{n \to \infty}f_n(x)$ exists for all $\EXP$
  random $x$. 
\item Given a $\PSPACE$ sequence of simple functions
  $\<g_n\>_{n=1}^{\infty}$ $L^1$ approximating $f$, $\lim\limits_{n \to
  \infty}g_n(x)=\lim\limits_{n \to \infty}f_n(x)$ for all $\EXP$
  random $x$. 
\end{enumerate}	
\end{corollary}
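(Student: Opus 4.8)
The corollary differs from Lemma~\ref{lem:convergencelemma} only in its hypothesis: the lemma assumes its approximating sequence is $\PSPACE$-rapid almost everywhere convergent to $f$, whereas here we are only given that $\<f_n\>_{n=1}^{\infty}$ is an $L^1$-approximation, that is, $\lVert f - f_n\rVert_1 \leq 2^{-n}$. Since both statements already require $\<f_n\>_{n=1}^{\infty}$ to be a $\PSPACE$ sequence of simple functions, my plan is simply to verify that the $L^1$ bound forces $\PSPACE$-rapid almost everywhere convergence, after which the lemma can be quoted verbatim.

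The key step is a Markov-plus-summability argument. For a fixed index $n$, Markov's inequality applied to $f - f_n$ gives $\mu(\{x : |f_n(x) - f(x)| \geq 2^{-(m_1+1)}\}) \leq 2^{m_1+1}\lVert f - f_n\rVert_1 \leq 2^{m_1+1-n}$. To pass from individual indices to the supremum in the definition of $\PSPACE$-rapid almost everywhere convergence, I would use the inclusion
\begin{align*}
\left\{x : \sup_{n \geq N}|f_n(x) - f(x)| \geq 2^{-m_1}\right\} \subseteq \bigcup_{n \geq N}\left\{x : |f_n(x) - f(x)| \geq 2^{-(m_1+1)}\right\},
\end{align*}
which holds because a supremum at least $2^{-m_1}$ forces at least one term to be at least $2^{-(m_1+1)}$. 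A union bound together with the geometric tail $\sum_{n \geq N} 2^{m_1+1-n} = 2^{m_1+2-N}$ then shows this measure is at most $2^{-m_2}$ whenever $N \geq m_1 + m_2 + 2$. Taking the linear (hence polynomial) $p(m_1+m_2) = m_1 + m_2 + 2$ and using $2^{p(m_1+m_2)} \geq m_1+m_2+2$ yields precisely the defining inequality of $\PSPACE$-rapid almost everywhere convergence.

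With this established, part~1 is immediate from part~1 of Lemma~\ref{lem:convergencelemma}. For part~2, I would observe that the second sequence $\<g_n\>_{n=1}^{\infty}$ is itself an $L^1$-approximation of $f$ and hence, by the same argument, is $\PSPACE$-rapid almost everywhere convergent to $f$; part~2 of the lemma then applies directly to the pair $\<f_n\>_{n=1}^{\infty}, \<g_n\>_{n=1}^{\infty}$. I expect no genuine obstacle here: the only point needing care is the measure-theoretic passage from pointwise to supremum bounds, handled by the slightly reduced threshold $2^{-(m_1+1)}$ above. The one conceptually noteworthy feature is that, although $L^1$ convergence does not imply almost everywhere convergence in general, the summability of the rate $2^{-n}$ (a Borel--Cantelli phenomenon) upgrades it to exactly the rapid almost everywhere convergence the lemma requires.
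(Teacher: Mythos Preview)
Your proposal is correct and follows essentially the same route as the paper: establish $\PSPACE$-rapid almost everywhere convergence of $\<f_n\>$ to $f$ via Markov's inequality and a geometric tail bound, then invoke Lemma~\ref{lem:convergencelemma}. The paper uses the threshold $2^{-m_1}$ directly (starting the sum at $N=m_1+m_2+1$), whereas your use of the halved threshold $2^{-(m_1+1)}$ is slightly more careful about the passage from a supremum to an individual term, but the arguments are otherwise identical.
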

\begin{proof}
For
any $m_1,m_2 \geq 0$,
	\begin{multline*}
\mu \left( \left\{x : \sup\limits_{n \geq m_1+m_2+1} |f_n(x)-f(x)|
\geq \frac{1}{2^{m_1}} \right\} \right) \leq \sum\limits_{n=m_1 +m_2
  +1}^{\infty} \mu \left( \left\{x : |f_n(x)-f(x)| \geq
\frac{1}{2^{m_1}} \right\} \right) \\ 
\leq  \sum\limits_{n=m_1 +m_2 +1}^{\infty} \lVert f_n-f\rVert_1 2^{m_1}
\ \leq\  \sum\limits_{i=1}^{\infty} \frac{1}{2^{m_2+i}}
\ =\  2^{-m_2}.
\end{multline*}
Hence, $\<f_n\>_{n=1}^{\infty}$ is $\PSPACE$-rapid almost everywhere
convergent to $f$. The claim follows due to Lemma \ref{lem:convergencelemma}.
\end{proof}

The following properties satisfied by $\PSPACE$ simple transformations and $\PSPACE$ $L^1$-computable functions are useful in our
proof of the $\PSPACE$ ergodic theorem.
\begin{lemma}
\label{lem:pspacefunctionintegral}
Let $f$ be a $\PSPACE$ $L^1$-computable function. Let $I_f:\Sigma^\infty \to \Sigma^\infty$ be the constant function taking the value $\int f d\mu$ over all $x \in \Sigma^\infty$. Then, $I_f$ is $\PSPACE$ $L^1$-computable and $\widetilde{I_f}(x)=\int f d\mu$ for all $\EXP$ random $x$.
\end{lemma}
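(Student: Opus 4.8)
The plan is to exhibit an explicit $\PSPACE$ $L^1$-approximation of $I_f$ consisting of \emph{constant} simple functions, from which both assertions follow almost immediately. Fix a $\PSPACE$ sequence of simple functions $\<f_n\>_{n=1}^\infty$ witnessing that $f$ is $\PSPACE$ $L^1$-computable, with controlling polynomial $p$ and associated $\PSPACE$ machine $M$. For each $n$, I want a rational $c_n$ with $|c_n - \int f d\mu| \le 2^{-n}$ that is computable in space polynomial in $n$. I would then set $g_n$ to be the constant function with value $c_n$, realized as the simple function $\sum_{\sigma \in \Sigma^{p'(n)}} c_n \chi_\sigma$ for a trivial controlling polynomial $p'$ (for instance $p'(n)=0$, so that $g_n = c_n \chi_\lambda$, using that the length-$\ell$ cylinders partition $\Sigma^\infty$). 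Since $\lVert I_f - g_n\rVert_1 = |c_n - \int f d\mu| \le 2^{-n}$, this establishes that $I_f$ is $\PSPACE$ $L^1$-computable once the $\PSPACE$-computability of $c_n$ is verified.

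The heart of the argument is computing $c_n$ in polynomial space. Because $|\int f d\mu - \int f_{n+1} d\mu| \le \lVert f - f_{n+1}\rVert_1 \le 2^{-(n+1)}$, it suffices to approximate $\int f_{n+1} d\mu$ to within $2^{-(n+1)}$. Writing $f_{n+1} = \sum_{\sigma \in \Sigma^{p(n+1)}} M(1^{n+1},\sigma)\,\chi_\sigma$, we have $\int f_{n+1} d\mu = \sum_{\sigma \in \Sigma^{p(n+1)}} M(1^{n+1},\sigma)\,\mu([\sigma])$. The obstacle is that this sum has up to $2^{p(n+1)}$ terms, so it cannot be stored term by term; instead I would iterate over the exponentially many strings $\sigma$ of length $p(n+1)$ in lexicographic order, reusing space and maintaining a single rational accumulator. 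For each $\sigma$ I query $M(1^{n+1},\sigma)$ for the coefficient $d_\sigma$ and the $\PSPACE$ measure machine for an approximation $\widetilde{\mu}([\sigma])$ to precision $2^{-\ell}$, adding $d_\sigma \widetilde{\mu}([\sigma])$ to the accumulator. Since the coefficients satisfy $|d_\sigma| \le 2^{\mathrm{poly}(n)}$ (by the bound on $\PSPACE$-representable rationals from Section \ref{sec:preliminaries}) and there are at most $2^{p(n+1)}$ of them, taking $\ell$ of the form $n + 1 + p(n+1) + \mathrm{poly}(n)$ forces $\sum_\sigma |d_\sigma|\,2^{-\ell} \le 2^{-(n+1)}$, controlling the total error. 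The accumulator stays bounded in magnitude by $\sum_\sigma |d_\sigma|\,\mu([\sigma]) = \lVert f_{n+1}\rVert_1 \le \lVert f\rVert_1 + 1$, so it only needs polynomially many bits; hence the whole computation runs in $\PSPACE$ and yields $|c_n - \int f d\mu| \le 2^{-n}$.

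For the second assertion, observe that with this choice $g_n \equiv c_n$ is constant and $c_n \to \int f d\mu$, so $\lim_{n\to\infty} g_n(x) = \int f d\mu$ for \emph{every} $x \in \Sigma^\infty$; thus $\widetilde{I_f}(x) = \int f d\mu$ with respect to this particular approximating sequence. To conclude the statement for the value $\widetilde{I_f}$ attached to an arbitrary approximating sequence, I would invoke Corollary \ref{cor:convergencelemmacorollary1}(2), which guarantees that any two $\PSPACE$ $L^1$-approximations of $I_f$ share the same pointwise limit on every $\EXP$ random; hence $\widetilde{I_f}(x) = \int f d\mu$ on all $\EXP$ randoms regardless of the sequence chosen. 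The only genuinely delicate point is the space-bounded summation over exponentially many cylinders, which the iterate-and-accumulate scheme above resolves; the remaining steps are routine error bookkeeping.
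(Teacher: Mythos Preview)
Your proposal is correct and matches the paper's approach: both define a $\PSPACE$ sequence of constant simple functions with values $\approx \int f_n\,d\mu$, compute these by iterating in polynomial space over the $2^{p(n)}$ cylinders of length $p(n)$, and then invoke the pointwise convergence result on $\EXP$ randoms (the paper cites Lemma~\ref{lem:convergencelemma}(2), you cite its Corollary~\ref{cor:convergencelemmacorollary1}(2), which is equivalent here). One small remark: a magnitude bound on the accumulator does not by itself bound its \emph{representation} size when summing exponentially many rationals with varying denominators, so you should round the running sum to a fixed dyadic precision after each addition; the paper sidesteps this by outputting $\mathrm{Sum}/2^{p(n)}$, tacitly treating $\mu$ as the uniform measure, so your handling of general $\PSPACE$ measures via the measure machine is in fact slightly more careful.
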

\begin{proof}
Let $\<f_n\>_{n=1}^{\infty}$ be a $\PSPACE$ sequence of simple functions, $M$ be a $\PSPACE$ machine and $p$ be a controlling polynomial witnessing the fact that $f$ is $\PSPACE$ $L^1$-computable. We construct a $\PSPACE$ sequence of simple functions $\<f'_n\>_{n=1}^{\infty}$ where each $f'_n$ is the constant function taking the value $\int f_n d\mu$. Since $ \lVert f'_n - I_f \rVert_1 =\lvert \int f_n d\mu - \int f d\mu \rvert \leq \lVert f_n -f \rVert_1 \leq  2^{-n}$, it follows that $I_f$ is $\PSPACE$ $L^1$-computable. Now, from part 2 of Lemma \ref{lem:convergencelemma}, we get that $\widetilde{I_f}(x)=\lim\limits_{n \to \infty}f'_n(x)=\lim\limits_{n \to \infty}\int f_n d\mu=\int f d\mu$ for all $\EXP$ random $x$.

On input $(1^n,\sigma)$, let machine $N$ do the following:
\begin{enumerate}
	\item Let $\mathrm{Sum}=0$.
	\item For each $\alpha \in \Sigma^{p(n)}$ do the following:
	\begin{enumerate}
		\item Run $M(1^n,\alpha)$ and add the result to $\mathrm{Sum}$.
	\end{enumerate}
	\item Output $\mathrm{Sum}/2^{p(n)}$.
\end{enumerate}
Let $t$ be a polynomial upper bound for the space complexity of $M$. Then, the result of $M(1^n,\alpha)$ is always upper bounded by $2^{t(n+p(n))}$ and representable in $t(n+p(n))$ space. The sum of at most $2^{p(n)}$ many such numbers is upper bounded by $2^{t(n+p(n))+p(n)}$ and representable in $t(n+p(n))+p(n)$ space. Dividing the running sum by $2^{p(n)}$ can also be done in polynomial space. Hence, $N$ is a $\PSPACE$ machine computing the sequence $\<f'_n\>_{n=1}^{\infty}$ where each $f'_n=\int f_n d\mu$.
\end{proof}
\begin{lemma}
\label{lem:pspacesequencelemma}
Let $f$ be a $\PSPACE$ $L^1$-computable function with an $L^1$ approximating
$\PSPACE$ sequence of simple functions $\<f_n\>_{n=1}^{\infty}$. Let
$T$ be a $\PSPACE$ simple transformation and $p$ be a
polynomial. Then, $\<A_n^{f_{p(n)}}\>_{n=1}^{\infty}$ is a $\PSPACE$
sequence of simple functions.
\end{lemma}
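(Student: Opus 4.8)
The plan is to verify the two conditions of Definition \ref{def:pspacesequenceofsimplefunctions} directly for the sequence $\langle A_n^{f_{p(n)}}\rangle_{n=1}^\infty$. Let $s$ be a controlling polynomial and $M$ the $\PSPACE$ machine witnessing that $\langle f_n\rangle_{n=1}^\infty$ is a $\PSPACE$ sequence of simple functions, and let $c$ be the controlling constant of $T$. First I would pin down the controlling polynomial. Each $f_{p(n)}$ is constant on cylinders of length $s(p(n))$, and since $T$ is $\PSPACE$ simple, $\chi_\tau \circ T^j$ is the characteristic function of $T^{-j}([\tau])$, a union of cylinders of length at most $|\tau| + cj$. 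Hence $f_{p(n)} \circ T^j$ is a simple function constant on cylinders of length $s(p(n)) + cj$, and summing over $0 \le j \le n-1$ and dividing by $n$ shows that $A_n^{f_{p(n)}}$ is constant on cylinders of length $q(n) := s(p(n)) + cn$, which is a polynomial in $n$. Thus
\[
A_n^{f_{p(n)}} = \sum_{\tau \in \Sigma^{q(n)}} A_n^{f_{p(n)}}(\tau 0^\infty)\, \chi_\tau ,
\]
establishing condition 1.

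For condition 2, I would describe a $\PSPACE$ machine $M'$ computing $M'(1^n, \sigma) = A_n^{f_{p(n)}}(\sigma 0^\infty)$ whenever $|\sigma| \ge q(n)$, and outputting $?$ otherwise. Writing
\[
A_n^{f_{p(n)}}(\sigma 0^\infty) = \frac{1}{n}\sum_{j=0}^{n-1} f_{p(n)}\big(T^j(\sigma 0^\infty)\big),
\]
the crux is to evaluate each term without ever iterating $T$ on an infinite point. For the $j=0$ term this is immediate: $f_{p(n)}(\sigma 0^\infty) = M(1^{p(n)}, \sigma)$, legal since $|\sigma| \ge q(n) \ge s(p(n))$. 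For $1 \le j \le n-1$, I would invoke the machine $N$ from Lemma \ref{lem:pspacetransformationcomputation}: exactly one $\tau \in \Sigma^{s(p(n))}$ satisfies $T^j(\sigma 0^\infty) \in [\tau]$, and $N(1^j, \tau, \sigma) = 1$ detects it, where the hypothesis $|\sigma| \ge q(n) \ge s(p(n)) + cj$ guarantees that $N$ does not output $?$. For that unique $\tau$ the term equals $f_{p(n)}(\tau 0^\infty) = M(1^{p(n)}, \tau)$. So $M'$ runs an outer loop over $j$ and an inner loop over $\tau \in \Sigma^{s(p(n))}$, accumulating $M(1^{p(n)}, \tau)$ into a running sum whenever $N(1^j, \tau, \sigma)=1$, and finally outputs the sum divided by $n$.

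The space analysis is the delicate part. The two loop counters ($j$ of length at most $n$, and $\tau$ a binary counter of length $s(p(n))$) occupy polynomial space, and the subroutine calls to $N$ and to $M$ each run in space polynomial in $n + |\sigma|$ by Lemma \ref{lem:pspacetransformationcomputation} and the hypothesis on $\langle f_n\rangle$ respectively, this workspace being reusable across iterations. The running sum aggregates $n$ values each output by the $\PSPACE$ machine $M$, hence each of polynomial representation length; exactly as in the proof of Lemma \ref{lem:pspacefunctionintegral}, the sum of polynomially many such rationals and the final division by $n$ remain representable and computable in polynomial space. Therefore $M'$ is a $\PSPACE$ machine and both conditions hold.

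The main obstacle is precisely this indirect evaluation of the composed function $f_{p(n)} \circ T^j$ at the infinite point $\sigma 0^\infty$: rather than computing $T^j$, one searches over candidate length-$s(p(n))$ prefixes $\tau$ and uses the preimage-membership machine $N$ to identify the correct one, which is exactly what confines the entire computation to polynomial space. A secondary point that must be handled carefully is that the controlling constant $c$ of $T$ enters the cylinder length additively with a factor of $j \le n-1$, so the controlling polynomial $q(n) = s(p(n)) + cn$ genuinely grows with $n$; this is why $|\sigma| \ge q(n)$ is exactly the threshold at which every call $N(1^j,\tau,\sigma)$ for $j \le n-1$ is guaranteed to be decisive.
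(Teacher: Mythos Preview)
Your proposal is correct and follows essentially the same route as the paper: identify the controlling polynomial as $s(p(n))+cn$, then build the machine by looping over $j$ and over candidate prefixes $\tau\in\Sigma^{s(p(n))}$, using the preimage-membership machine of Lemma \ref{lem:pspacetransformationcomputation} to pick out the correct $\tau$ and accumulating $M(1^{p(n)},\tau)$ into a running sum divided by $n$. The only cosmetic difference is that you single out the $j=0$ term, whereas the paper folds it into the same loop; your space analysis and justification for why $|\sigma|\ge q(n)$ makes every call to $N$ decisive are exactly the points the paper makes.
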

\begin{proof}
	Let $q$ be a controlling polynomial and $M_f$ be a machine
        witnessing the fact that $\<f_n\>_{n=1}^{\infty}$ is a
        $\PSPACE$ sequence of simple functions. Let $c_T$ be a controlling
        constant witnessing the fact that $T$
        is a $\PSPACE$ simple transformation. For any $n \geq 1$, we
        have 
	\begin{align*}
	A^{f_{p(n)}}_n = \frac{f_{p(n)}+f_{p(n)}\circ T+f_{p(n)}\circ
          T^2+\dots f_{p(n)}\circ T^n}{n}.
	\end{align*}
	The functions $\<f_{p(n)}\circ T^i\>_{i=1}^{n}$ are simple
        functions defined on cylinders of length at most $q(p(n))+c_T
        n$. Hence, the polynomial $r(n)=q(p(n))+c_T n$ is a
        controlling polynomial for the sequence of functions
        $\<A^{f_{p(n)}}_n\>_{n=1}^{\infty}$ as in condition 1 of
        Definition \ref{def:pspacesequenceofsimplefunctions}.  Now, let
        us verify condition 2 of Definition \ref{def:pspacesequenceofsimplefunctions}. Let $M$ be the machine from Lemma \ref{lem:pspacetransformationcomputation}.  We construct a
        machine $N$ such that for each $n \in \N$ and $\alpha \in
        \Sigma^*$, 
	\begin{align*}
		N(1^n,\alpha)=\begin{cases}
                A^{f_{p(n)}}_n(\alpha0^\infty) & \text{if } |\alpha| \geq r(n)\\
                ? & \text{otherwise.}
                \end{cases}
	\end{align*}
		On input $(1^n,\alpha)$ if $|\alpha|<r(n)$ then $N$ outputs $?$ else it operates as follows:
		\begin{enumerate}
			\item Let $\mathrm{Sum}=0$
			\item For each $i \in [1,n]$, do the following:
			\begin{enumerate}
				\item For each string $\sigma $ of length $q(p(n))$, do the following:
				\begin{enumerate}
				\item If $M(1^i,\sigma,\alpha)=1$, then let $\mathrm{Sum}=\mathrm{Sum}+M_f(1^{p(n)},\sigma)$.
				\end{enumerate}
			 \end{enumerate}
			 \item Output $\mathrm{Sum}/n$.
		\end{enumerate}
		
		 If $t_1$ is a polynomial upper bound for the space complexity of $M$ and $t_2$ is a polynomial upper bound for the space complexity of $M_f$ then, each $f_{p(n)}\circ T^i$ can be computed in $O(t_1(2q(p(n))+c_T n+n)+t_2(q(p(n))+p(n)))$ space for any $i \leq n$. The results of computations of $f_{p(n)}\circ T^i$ for $i\in [1,n]$ can be added up and divided by $n$ in polynomial space. $N$ outputs the result of this computation. Since $N$ is a $\PSPACE$ machine, the proof is complete.
\end{proof}

Now, we prove the ergodic theorem for $\PSPACE$ $L^1$ functions, which
is our main result. The proof involves adaptations of techniques from
Rute \cite{RuteThesis}, together with new quantitative bounds which
yield the result within prescribed resource bounds. 
\begin{theorem}
\label{thm:pspaceergodictheorem}
Let $T:(\Sigma^\infty,\mathcal{B}(\Sigma^\infty),\mu) \to (\Sigma^\infty,\mathcal{B}(\Sigma^\infty),\mu)$ be a $\PSPACE$ ergodic measure preserving
transformation. Then, for any $\PSPACE$ $L^1$-computable $f$, $\lim\limits_{n \to \infty}\widetilde{A^f_n}=\int f
d\mu$ on $\EXP$ randoms.
\end{theorem}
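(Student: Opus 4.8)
The plan is to reduce this pointwise statement to the almost-everywhere convergence already in hand, and then transfer that convergence to $\EXP$ randoms through a single diagonal sequence of simple functions. First I would invoke the hypothesis: since $T$ is $\PSPACE$ ergodic, $\int f\,d\mu$ is a $\PSPACE$-rapid $L^1$-limit point of $A^f_n$, so Theorem~\ref{thm:aeconvergence} yields that $A^f_n$ is $\PSPACE$-rapid almost everywhere convergent to $\int f\,d\mu$; fix the witnessing polynomial $q_1$. Let $\<f_m\>_{m=1}^\infty$ be a $\PSPACE$ $L^1$-approximation of $f$ and consider the diagonal sequence $\<A^{f_n}_n\>_{n=1}^\infty$, which is the case $p(n)=n$ of Lemma~\ref{lem:pspacesequencelemma} and hence a $\PSPACE$ sequence of simple functions. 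The crucial uniformity, used repeatedly, is that because $T$ is measure preserving, $\lVert A^{f_m}_n-A^f_n\rVert_1=\lVert A^{f_m-f}_n\rVert_1\leq\lVert f_m-f\rVert_1\leq 2^{-m}$ for \emph{every} $n$.

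Next I would show that $\<A^{f_n}_n\>$ is $\PSPACE$-rapid almost everywhere convergent to the constant $\int f\,d\mu$. Splitting $|A^{f_n}_n-\int f\,d\mu|\leq|A^{f_n}_n-A^f_n|+|A^f_n-\int f\,d\mu|$, the second term is controlled by the convergence witnessed by $q_1$, while for the first a union bound followed by Markov's inequality gives $\mu(\sup_{n\geq N}|A^{f_n}_n-A^f_n|\geq 2^{-m_1})\leq\sum_{n\geq N}2^{m_1-n}$, which is below $2^{-m_2}$ as soon as $N\geq m_1+m_2+1$, so a polynomial threshold $N=2^{q_2(m_1+m_2)}$ suffices. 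Combining the two estimates produces the required witnessing polynomial. By Lemma~\ref{lem:pspacefunctionintegral} the constant $I_f=\int f\,d\mu$ is $\PSPACE$ $L^1$-computable with $\widetilde{I_f}(x)=\int f\,d\mu$ on $\EXP$ randoms, and its canonical approximation is also $\PSPACE$-rapid a.e.\ convergent to $I_f$; part~2 of Lemma~\ref{lem:convergencelemma}, applied to these two sequences, then forces $\lim_{n\to\infty}A^{f_n}_n(x)=\widetilde{I_f}(x)=\int f\,d\mu$ for every $\EXP$ random $x$.

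It remains to identify this diagonal limit with $\lim_n\widetilde{A^f_n}$, and this interchange of the two limits --- the inner limit over the approximation index that \emph{defines} $\widetilde{A^f_n}$ and the outer Birkhoff limit --- is where the work lies. For each fixed $n$ and $\EXP$ random $x$, Corollary~\ref{cor:convergencelemmacorollary1} applied to $A^f_n$ (with approximating sequence $\<A^{f_m}_n\>_m$) gives $\widetilde{A^f_n}(x)=\lim_{m\to\infty}A^{f_m}_n(x)$, so I may telescope
\[
\widetilde{A^f_n}(x)-A^{f_n}_n(x)=\sum_{m\geq n}\bigl(A^{f_{m+1}}_n(x)-A^{f_m}_n(x)\bigr),
\]
where $\lVert A^{f_{m+1}}_n-A^{f_m}_n\rVert_1\leq\lVert f_{m+1}-f_m\rVert_1\leq 2^{-m+1}$ uniformly in $n$. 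I would then set $E_{n,m}=\{x:|A^{f_{m+1}}_n(x)-A^{f_m}_n(x)|\geq 2^{-m/3}\}$, whose measure is at most $2^{-2m/3+1}$ by Markov, and assemble the decreasing tail sets $U_k=\bigcup_{n\geq k}\bigcup_{m\geq n}E_{n,m}$, so that $\mu(U_k)=O(2^{-2k/3})$. Following the machine template in the proof of Lemma~\ref{lem:convergencelemma} (exponentially many $\PSPACE$ evaluations of the simple functions $A^{f_{m+1}}_n,A^{f_m}_n$), $\<U_k\>$ is an $\EXP$ sequence of open sets, and after a polynomial reindexing $V_i=U_{g(i)}$ with $\mu(V_i)\leq 2^{-i}$ it is an $\EXP$ test. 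Every $\EXP$ random $x$ therefore escapes some $U_{k_0}$, whence for all $n\geq k_0$ we get $|\widetilde{A^f_n}(x)-A^{f_n}_n(x)|\leq\sum_{m\geq n}2^{-m/3}=O(2^{-n/3})\to 0$, and consequently $\lim_n\widetilde{A^f_n}(x)=\lim_n A^{f_n}_n(x)=\int f\,d\mu$.

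The main obstacle is exactly this final step. The object $\widetilde{A^f_n}$ is defined by an inner pointwise limit whose speed could a priori depend on $n$, and one cannot naively swap it with the Birkhoff limit. The device that makes the swap legitimate within the resource bounds is the measure-preservation identity $\lVert A^{f_m}_n-A^f_n\rVert_1\leq 2^{-m}$, uniform in $n$: it makes the telescoping tail bounds uniform and lets the doubly-indexed bad sets $E_{n,m}$ be collected into a single $\EXP$ test, so that a.e.\ control upgrades to control on every $\EXP$ random simultaneously for all $n$.
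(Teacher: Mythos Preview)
Your proof is correct and follows the same two-stage strategy as the paper: build a diagonal $\PSPACE$ sequence $g_n=A^{f_{p(n)}}_n$ (you take $p(n)=n$), show it is $\PSPACE$-rapid a.e.\ convergent to $\int f\,d\mu$ by splitting through $A^f_n$ and applying Markov plus Theorem~\ref{thm:aeconvergence}, then invoke Lemma~\ref{lem:convergencelemma} and Lemma~\ref{lem:pspacefunctionintegral} to get $\lim_n g_n(x)=\int f\,d\mu$ on $\EXP$ randoms; finally, identify $\lim_n\widetilde{A^f_n}$ with $\lim_n g_n$ via a test built from the uniform bound $\lVert A^{f_m}_n-A^f_n\rVert_1\le 2^{-m}$.

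The one noteworthy difference is in the identification step. You telescope consecutive increments $A^{f_{m+1}}_n-A^{f_m}_n$, use Markov on each, and collect the bad sets $E_{n,m}$ into what you call an $\EXP$ test. The paper instead compares $A^{f_m}_n$ to block anchors $A^{f_{p(n+i)}}_n$ over the ranges $p(n+i)\le m\le p(n+i+1)$, forms $V_m=\bigcup_{n+i=m}U_{n,i}$, and shows this is a $\PSPACE$ \emph{Solovay} test; consequently the paper's identification $\lim_n\widetilde{A^f_n}=\lim_n g_n$ already holds on the larger class of $\PSPACE$ randoms, a sharper intermediate statement even though the final theorem is the same (the first stage forces $\EXP$ randomness anyway). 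In fact your sets $E_{n,m}$ are determined by cylinders of length polynomial in $n,m$ and the membership check is $\PSPACE$, so your construction would also yield a $\PSPACE$ Solovay test; the phrase ``exponentially many $\PSPACE$ evaluations'' undersells what you actually have. One small cosmetic fix: use a rational threshold such as $2^{-\lfloor m/3\rfloor}$ in the definition of $E_{n,m}$ so that membership is exactly decidable.
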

\begin{proof}
Let $\<f_m\>_{m=1}^{\infty}$ be any $\PSPACE$ sequence of simple
functions $L^1$ approximating $f$. We initially approximate $A_n^f$ with a
$\PSPACE$ sequence of simple functions $\<g_n\>_{n=1}^{\infty}$ which
converges to $\int f d\mu$ on $\EXP$ randoms. Then we show that
$\widetilde{A}^f_n$ has the same limit as $g_n$ on $\PSPACE$ randoms and hence on $\EXP$ randoms.  
	
For each $n$, it is easy to verify that $\<A^{f_m}_n\>_{m=1}^{\infty}$ is a $\PSPACE$
sequence of simple functions $L^1$ approximating $A^f_n$ with the same rate
of convergence. Using techniques similar to those in Lemma
\ref{lem:convergencelemma} and Corollary
\ref{cor:convergencelemmacorollary1}, we can obtain a polynomial $p$
such that
\begin{align*}
  \mu \left( \left\{x : \sup\limits_{m \geq p(n+i)}
  |A^{f_m}_n(x)-A^{f_{p(n+i)}}_n(x)| \geq \frac{1}{2^{n+i+1}} \right\}
  \right) \leq \frac{1}{2^{n+i+1}}. 
\end{align*}
For every $n>0$, let $g_n=A^{f_{p(n)}}_n$. We initially show that $\<g_n\>_{n=1}^{\infty}$ converges to
$\int f d\mu$ on $\EXP$ randoms. Let $m_1,m_2 \geq 0$. From
Theorem \ref{thm:aeconvergence}, $A^f_n$ is $\PSPACE$-rapid almost
everywhere convergent to $\int f d\mu$. Hence there is a polynomial
$q$ such that 
\begin{align*}
  \mu \left( \left\{x : \sup\limits_{n \geq 2^{q(m_1+m_2)}}
  |A^f_n(x)-\int f d\mu| \geq \frac{1}{2^{m_1+1}} \right\} \right)
  \leq \frac{1}{2^{m_2+1}}. 
\end{align*}
Let $N(m_1,m_2)=\max\{2m_1,2m_2,2^{q(m_1 +m_2)}\}$. Then,
\begin{align*}
  \sum\limits_{n \geq N(m_1,m_2)} \frac{1}{2^{k+1}} = \frac{1}{2^{N(m_1,m_2)}} \leq \min \left\{ \frac{1}{2^{m_1 +1}},\frac{1}{2^{m_2 +1}} \right\}.
\end{align*}
Now, we have  
\begin{align*}
  \mu \left( \left\{x : \sup\limits_{n \geq N(m_1,m_2)} |g_n-\int f
  d\mu| > \frac{1}{2^{m_1}} \right\} \right) &\leq \sum\limits_{n \geq
    N(m_1,m_2)} \mu \left( \left\{x :  |g_n-A^f_n(x)| >
  \frac{1}{2^{m_1+1}} \right\} \right)\\ 
  &+ \mu \left( \left\{x : \sup\limits_{n \geq 2^{q(m_1+m_2)}}
  |A^f_n(x)-\int f d\mu| \geq \frac{1}{2^{m_1+1}} \right\} \right) \\ 
  &\leq \sum\limits_{n \geq N(m_1,m_2)} \frac{1}{2^{n+1}} +
  \frac{1}{2^{m_2 +1}} \\ 
  &\leq \frac{1}{2^{m_2}}.
\end{align*}
Note that $N(m_1,m_2)$ is bounded by $2^{(m_1+m_2)^c}$ for some $c \in \N$. Hence,
$g_n$ is $\PSPACE$-rapid almost everywhere convergent to $\int f
d\mu$. From Lemma \ref{lem:pspacesequencelemma} it follows that
$\<g_n\>_{n=1}^{\infty}=\<A^{f_{p(n)}}_n\>_{n=1}^{\infty}$ is a
$\PSPACE$ sequence of simple functions (in parameter $n$). Let $I_f:\Sigma^\infty \to \Sigma^\infty$ be the constant function taking the value $\int f d\mu$ over all $x \in \Sigma^\infty$. From the above
observations and Lemma \ref{lem:convergencelemma} we get
that $\lim\limits_{n \to \infty}g_n (x)=\widetilde{I_f}(x)$ for any $x$ which
is $\EXP$ random. Now, from Lemma \ref{lem:pspacefunctionintegral}, we get that $\lim\limits_{n \to \infty}g_n (x)=\int f d\mu$ for any $x$ which
is $\EXP$ random.

We now show that $\lim\limits_{n \to \infty} \widetilde{A}^f_n =
\lim\limits_{n \to \infty} g_n$ on $\PSPACE$ randoms. Define 
\begin{align*}
U_{n,i} = \left\{x:\max\limits_{p(n+i) \leq m \leq p(n+i+1)}
|A^{f_m}_n(x)-A^{f_{p(n+i)}}_n(x)| \geq \frac{1}{2^{n+i+1}}\right\}. 	
\end{align*}
We already know $\mu(U_{n,i})\leq \frac{1}{2^{n+i+1}}$. $U_{n,i}$ can be shown to be polynomial space
approximable in parameters $n$ and $i$ in the following sense. There exists a sequence of sets of strings $\<S_{n,i}\>_{i,n \in \N}$ and polynomial $p$ satisfying the following conditions:
\begin{enumerate}
  \item $U_{n,i} = [S_{n,i}]$.
   \item There exists a \emph{controlling polynomial} $r$ such
    that $max\{|\sigma|:\sigma \in  S_{n,i})\}\leq r(n+i)$.
  \item The function $g:\Sigma^* \times 1^*\times 1^* \to \{0,1\}$
    such that
    \begin{align*}
      g(\sigma,1^n,1^i)=
      \begin{cases}
	1 &\text{if }\sigma \in S_{n,i}\\ 0 &\text{otherwise,}
      \end{cases}	
    \end{align*}
    is decidable by a $\PSPACE$ machine.
\end{enumerate}
The above claims can be established by using techniques similar to those in Lemma \ref{lem:pspacesequencelemma} and Lemma \ref{lem:convergencelemma}. We now show the construction of a machine $N$ computing the function $g$ above. Let $M_f$ be a computing machine and let $q$ be a controlling polynomial for $\<f_n\>_{n=1}^{\infty}$. Let $c$ be a controlling constant for $T$. Let $M'$ be the machine from Lemma \ref{lem:pspacetransformationcomputation}. Machine $N$ on input $(\sigma,1^n,1^i)$ does the following:
\begin{enumerate}
	\item If $|\sigma| > q(p(n+i+1))+cn$, then output $0$.
	\item Compute $A_n^{f_{p(n+i)}}(\sigma0^\infty)$ as in Lemma \ref{lem:pspacesequencelemma} by using $M_f$ and $M'$ and store the result.
	\item For each $m \in [p(n+i),p(n+i+1)]$ do the following:
	\begin{enumerate}
	\item Compute $A_n^{f_{m}}(\sigma0^\infty)$ as in Lemma \ref{lem:pspacesequencelemma} by using $M_f$ and $M'$ and store the result.
	\item Check if $|A_n^{f_{m}}(\sigma0^\infty)-A_n^{f_{p(n+i)}}(\sigma0^\infty)| \geq \frac{1}{2^{n+i+1}}$. If so, output $1$.
	\end{enumerate}
	\item Output $0$.
\end{enumerate}
It can be easily verified that $N$ is a $\PSPACE$ machine. $r(n+i)=q(p(n+i+1))+cn$ is the controlling polynomial for $\<U_{n,i}\>_{n,i \in \N}$. Now, define
\begin{align*}
V_m = \bigcup\limits_{\substack{n,i \geq 0 \\ n+i=m}}	 U_{n,i}.
\end{align*}
Note that,
\begin{align*}
\mu(V_m) \leq \frac{m}{2^m}.	
\end{align*}
It can be shown that for any $j$,
\begin{align*}
\sum\limits_{n > j} \frac{m}{2^m} = \frac{1}{2^{j-1}} + \frac{j}{2^j}.	
\end{align*}
Given any $k \geq 0$, let $p(k)=3(k+1)$. Hence, we have
\begin{align*}
\sum\limits_{n = p(k)+1}^{\infty} \frac{m}{2^m} &=
\frac{1}{2^{3(k+1)}} + \frac{3(k+1)}{2^{3(k+1)}}
\ <\  \frac{1}{2^{k+1}} +\frac{1}{2^{k+1}}\frac{3(k+1)}{2^{2(k+1)}} 
\ <\ \frac{2}{2^{k+1}} = \frac{1}{2^k}.
\end{align*}
The last inequality holds since $3(k+1)<2^{2(k+1)}$ for all $k \geq
0$. Since each $V_m$ is a finite union of sets from $\<U_{n,i}\>_{n,i \in \N}$, the machine computing $\<U_{n,i}\>_{n,i \in \N}$ can be easily modified to construct a machine witnessing
that $\<V_m\>_{m=1}^{\infty}$ is a $\PSPACE$ approximable sequence of
sets. From these observations, it follows that $\<V_m\>_{m=1}^{\infty}$ is a $\PSPACE$ Solovay test. Now,
let $x$ be a $\PSPACE$ random. $x$ is in at most finitely many $V_m$
and hence in at most finitely many $U_{n,i}$. Hence for some large
enough $N$ for all $n \geq N$, $i \geq 0$ and for all $m$ such that
$p(n+i) \leq m \leq p(n+i+1)$, we have
$|A^{f_m}_n(x)-A^{f_{p(n+i)}}_n(x)| <\frac{1}{2^{n+i+1}}$. It follows
that for all $n \geq N$ and for all $m \geq p(n)$ that,
\begin{align*}
  |A^{f_m}_n(x)-g_n(x)| &= |A^{f_m}_n(x)-A^{f_{p(n)}}_n(x)| 
  \ \leq\  \sum\limits_{i=0}^{\infty} \frac{1}{2^{n+i+1}} 
  \ \leq\  2^{-n}.
\end{align*}
Therefore, $\lim\limits_{n \to \infty} \widetilde{A}^f_n (x) =
\lim\limits_{n \to \infty} g_n(x)$ on all $\PSPACE$ random $x$ and
hence on all $x$ which is $\EXP$ random. 

Hence, we have shown that $\lim\limits_{n \to \infty} \widetilde{A}^f_n = \int f d\mu$ on $\EXP$ randoms which completes the proof of the theorem.
\end{proof}

\section{A partial converse to the $\PSPACE$ Ergodic Theorem}
In this section we give a partial converse to the $\PSPACE$ ergodic theorem (Theorem \ref{thm:pspaceergodictheorem}). We show that for any $\PSPACE$ null $x$, there exists a function $f$ and transformation $T$ satisfying all the conditions in Theorem \ref{thm:pspaceergodictheorem} such that $\widetilde{A^f_n}(x)$ does not converge to $\int f d\mu$.

Let us first observe that due to Corollary \ref{cor:convergencelemmacorollary1}, Theorem \ref{thm:pspaceergodictheorem} is equivalent to the following:
\begin{theorem*}
Let $T$ be a $\PSPACE$ ergodic measure preserving
transformation such that for any $\PSPACE$ $L^1$-computable $f$, $\int f d\mu$ is an $\PSPACE$-rapid $L^1$-limit point of
$A^f_n$. Let $\{g_{n,i}\}$ be any collection of simple functions such that for each $n$, $\<g_{n,i}\>_{i=1}^{\infty}$ is a $\PSPACE$ $L^1$-approximation of $\widetilde{A^f_n}$. Then, $\lim\limits_{n \to \infty}\lim\limits_{i \to \infty} g_{n,i}(x)=\int f d\mu$ for any $\EXP$ random $x$.
\end{theorem*}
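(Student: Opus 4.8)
The plan is to deduce this reformulation directly from Theorem~\ref{thm:pspaceergodictheorem}, using Corollary~\ref{cor:convergencelemmacorollary1} to guarantee that the inner limit is insensitive to the choice of approximating collection $\{g_{n,i}\}$. First I would record the ingredient already isolated at the start of the proof of Theorem~\ref{thm:pspaceergodictheorem}: if $\<f_i\>_{i=1}^\infty$ is any $\PSPACE$ $L^1$-approximation of $f$, then for each fixed $n$ the sequence $\<A^{f_i}_n\>_{i=1}^\infty$ is a $\PSPACE$ sequence of simple functions approximating $A^f_n$ in $L^1$ at the same rate, because $T$ is measure preserving and hence $\lVert A^f_n-A^{f_i}_n\rVert_1\le\lVert f-f_i\rVert_1\le 2^{-i}$. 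In particular $A^f_n$ is $\PSPACE$ $L^1$-computable for every fixed $n$, its pointwise limit $\widetilde{A^f_n}$ exists on all $\EXP$ randoms by Corollary~\ref{cor:convergencelemmacorollary1}(1), and since $A^{f_i}_n\to A^f_n$ in $L^1$ fast enough for Borel--Cantelli, $\widetilde{A^f_n}=A^f_n$ $\mu$-a.e.

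The crux is the inner limit. Fix $n$. By hypothesis $\<g_{n,i}\>_{i=1}^\infty$ is a $\PSPACE$ $L^1$-approximation of $\widetilde{A^f_n}$, hence of $A^f_n$ (the two functions coincide $\mu$-a.e., so the $L^1$ distances are identical). Applying part~2 of Corollary~\ref{cor:convergencelemmacorollary1} with the underlying $\PSPACE$ $L^1$-computable function taken to be $A^f_n$, the canonical approximation taken to be $\<A^{f_i}_n\>_i$, and the comparison approximation taken to be $\<g_{n,i}\>_i$, I would obtain
\[
\lim_{i\to\infty} g_{n,i}(x)\;=\;\lim_{i\to\infty} A^{f_i}_n(x)\;=\;\widetilde{A^f_n}(x)
\]
for every $\EXP$ random $x$. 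Crucially, the corollary yields this identity on the full set of $\EXP$ randoms for each $n$ separately, so no $n$-dependent exceptional set arises and there is nothing to intersect. Substituting into the double limit gives $\lim_{n\to\infty}\lim_{i\to\infty} g_{n,i}(x)=\lim_{n\to\infty}\widetilde{A^f_n}(x)$ on the $\EXP$ randoms, and Theorem~\ref{thm:pspaceergodictheorem} identifies the right-hand side as $\int f\,d\mu$, which is the desired conclusion.

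I expect the only genuinely non-formal step to be this appeal to sequence-independence: the statement quantifies over an \emph{arbitrary} collection $\{g_{n,i}\}$, so the argument would be ill-posed without knowing that $\lim_i g_{n,i}(x)$ reproduces $\widetilde{A^f_n}(x)$ regardless of which approximation was chosen. That is exactly the content of Corollary~\ref{cor:convergencelemmacorollary1}(2), while the remaining verifications---that $A^f_n$ is $\PSPACE$ $L^1$-computable and that the $L^1$ rates survive the Birkhoff averaging---are routine and already present in the proof of Theorem~\ref{thm:pspaceergodictheorem} and in Lemma~\ref{lem:pspacesequencelemma}. For completeness I would also note the reverse implication, establishing the claimed equivalence: specializing $g_{n,i}=A^{f_i}_n$ turns the reformulation into the statement $\lim_{n\to\infty}\widetilde{A^f_n}(x)=\int f\,d\mu$ on $\EXP$ randoms, which is Theorem~\ref{thm:pspaceergodictheorem} verbatim.
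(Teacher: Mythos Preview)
Your proposal is correct and follows precisely the route indicated in the paper: the paper states that the reformulation is equivalent to Theorem~\ref{thm:pspaceergodictheorem} ``due to Corollary~\ref{cor:convergencelemmacorollary1}'', and you have simply unpacked that one-line justification, making explicit that $A^f_n$ is itself $\PSPACE$ $L^1$-computable via $\langle A^{f_i}_n\rangle_i$ and that Corollary~\ref{cor:convergencelemmacorollary1}(2) forces $\lim_i g_{n,i}(x)=\widetilde{A^f_n}(x)$ on every $\EXP$ random regardless of the approximating family. Your observation that $\widetilde{A^f_n}=A^f_n$ $\mu$-a.e.\ (so an $L^1$-approximation of the former is one of the latter) is the only detail the paper leaves implicit, and it is handled correctly.
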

Hence, the ideal converse to Theorem \ref{thm:pspaceergodictheorem} is the following:
\begin{theorem*}
	Given any $\EXP$ null $x$, there exists a $\PSPACE$ ergodic measure preserving transformation $T$ and $\PSPACE$ $L^1$-computable $f \in L^{1}(\Sigma^\infty,\mu)$  such that the following conditions are true:
	\begin{enumerate}
		\item $\int f d\mu$ is an $\PSPACE$-rapid limit point of $A^f_n$.
		\item There exists a collection of simple functions $\{g_{n,i}\}$ such that for each $n$, $\<g_{n,i}\>_{i=1}^{\infty}$ is a $\PSPACE$ $L^1$-approximation of $A^f_n$ but $\lim\limits_{n \to \infty}\lim\limits_{i \to \infty} g_{n,i}(x)\neq\int f d\mu$.
	\end{enumerate}
\end{theorem*}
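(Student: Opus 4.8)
The plan is to fix the ambient system once and for all to be the uniform Bernoulli measure $\mu(\sigma)=2^{-|\sigma|}$ together with the left shift $T$, and to encode the test witnessing the non-randomness of $x$ entirely into the function $f$. The advantage of this choice is that the two structural requirements on $T$ come for free: the left shift is a $\PSPACE$ simple measure preserving transformation, and by Theorem~\ref{thm:l1functionpspacerapid} every $\PSPACE$ $L^1$-computable $g$ has $\int g\,d\mu$ as a $\PSPACE$-rapid $L^1$-limit point of $A^g_n$, so that ``$T$ is $\PSPACE$ ergodic'' and Condition~1 hold automatically for whatever $f$ we build. The whole statement therefore reduces to the following: given an $\EXP$ test $\langle U_n\rangle_n$ with $\mu(U_n)\le 2^{-n}$ and $x\in\bigcap_n U_n$, construct a $\PSPACE$ $L^1$-computable $f$ together with canonical simple-function approximations $\langle g_{n,i}\rangle_i$ of $A^f_n$ such that each inner limit $\widetilde{A^f_n}(x)=\lim_i g_{n,i}(x)$ exists yet the outer sequence $\bigl(\widetilde{A^f_n}(x)\bigr)_n$ fails to converge to $\int f\,d\mu$.

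To build $f$ I would adapt the Franklin--Towsner encoding \cite{franklin2014randomness} through Rute's $L^1$ machinery \cite{RuteThesis}, translating ``depth in the test'' into a recurrent anomaly of the shift-orbit frequencies at $x$. Along a rapidly increasing sequence of scales $n_k\to\infty$ I would use the prefix of $x$ witnessing $x\in U_k$ to install a signed cylinder term $c_k(\chi_{A_k}-\mu(A_k))$, where $A_k$ is a union of cylinders read off from the witnessing strings of $U_k$, and set $f=\sum_k c_k(\chi_{A_k}-\mu(A_k))$. The amplitudes $c_k$ and scales $n_k$ are to be chosen so that (i) $\sum_k |c_k|\mu(A_k)<\infty$, forcing $f\in L^1$ with $\int f\,d\mu$ controlled; (ii) $A^f_{n_k}(x)$ is pushed a fixed amount away from $\int f\,d\mu$ at every stage while relaxing back between stages, giving $\limsup_n \widetilde{A^f_n}(x)-\liminf_n \widetilde{A^f_n}(x)>0$; and (iii) the truncations $f_N=\sum_{k\le N}c_k(\chi_{A_k}-\mu(A_k))$ satisfy $\lVert f-f_N\rVert_1\le 2^{-N}$, which is what makes $f$ $\PSPACE$ $L^1$-computable in the sense of Definition~\ref{def:pspacel1computablefunction}. (Should the shift prove too rigid to sustain such an anomaly, the fallback is to build a custom $\PSPACE$ simple transformation directly from the test as in \cite{franklin2014randomness}, paying instead the cost of verifying $\PSPACE$ ergodicity and Condition~1 by hand.)

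The delicate resource-accounting is the bridge between the $\EXP$ test and the required $\PSPACE$ bounds. Membership in $U_k$, hence the placement of $A_k$, is a priori only $\EXP$-time decidable, whereas $f$ must be computed in polynomial space. I would resolve this by padding: arrange the encoding so that test level $k$ is consulted only on cylinders of length $\ell_k$ taken exponential in $k$. A $\PSPACE$ machine reading a length-$\ell_k$ argument then has $\mathrm{poly}(\ell_k)=2^{\Omega(k)}$ cells available, enough to run the $\EXP$-time membership decision for $U_k$ and locate its witnessing strings. With this reparametrization the controlling polynomial and the decision machine for $\langle f_N\rangle$ can be assembled exactly as in Lemma~\ref{lem:pspacesequencelemma} and Lemma~\ref{lem:convergencelemma}, and the inner approximations $g_{n,i}$ of $A^f_n$ inherit $\PSPACE$-computability from the Birkhoff-average bookkeeping of Lemma~\ref{lem:pspacesequencelemma}.

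The hard part, and the step I expect to be the genuine obstacle, is reconciling the three budgets forced together by the padding. Pushing test level $k$ out to cylinder scale $\ell_k=2^{\Omega(k)}$ (to afford its $\EXP$ decision in $\PSPACE$) means the oscillation-inducing anomaly must live at Birkhoff length $\sim\ell_k$; but over the Bernoulli shift, Lemma~\ref{lem:characteristicfunctionpspacerapid} shows that averages of length-$\ell$ cylinders concentrate at a rate governed by $\ell$ and the averaging window, so the deviation a single deep membership can create in $A^f_{n_k}(x)$ is driven down as $k$ grows, while the test's own budget $\mu(U_k)\le 2^{-k}$ simultaneously shrinks the amplitude $c_k$ that keeps $f$ integrable. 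Making the $c_k$ large enough to sustain a bounded-below oscillation at $x$ without breaking $f\in L^1$ or the $2^{-N}$ approximation bound is precisely where an $\EXP$ test's ``cheap-to-cover but expensive-to-decide'' sets resist a polynomial-space encoding. Quantifying this trade-off sharply enough to force divergence at an arbitrary $\EXP$-null $x$ — rather than only at those $x$ whose covering test already admits a polynomially bounded decision — is the crux of the ideal converse and the estimate on which the whole argument turns.
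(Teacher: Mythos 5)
Your proposal is not a completed proof, and you should know that the statement you were asked to prove is precisely the one the paper declines to prove: it appears only as the ``ideal converse'' to Theorem \ref{thm:pspaceergodictheorem}, immediately followed by ``But, we prove the following partial converse,'' namely Theorem \ref{thm:conversetopspaceergodictheorem}, in which the hypothesis ``$\EXP$ null'' is weakened to ``$\PSPACE$ null''; an exact converse is recovered only by changing the randomness notion to $\SUBEXP$-space (Theorem \ref{thm:conversetosubexpergodictheorem}). The crux you flag in your last paragraph and leave open is exactly the obstruction the paper identifies, and your padding device cannot resolve it. Definition \ref{def:pspacesequenceofsimplefunctions} forces the $N$-th approximant $f_N$ to be supported on cylinders of length $p(N)$ for a fixed polynomial $p$, while Definition \ref{def:pspacel1computablefunction} demands $\lVert f-f_N\rVert_1\leq 2^{-N}$. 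If test level $k$ is consulted only at cylinder scale $\ell_k=2^{\Omega(k)}$ (your trick for running the $\EXP$-time membership decision within $\mathrm{poly}(\ell_k)$ space), then level-$k$ structure can enter $f_N$ only when $p(N)\geq \ell_k$, i.e.\ only levels $k=O(\log N)$ are visible at stage $N$; for the omitted tail to have $L^1$ mass at most $2^{-N}$ one would need amplitudes $|c_{k'}|\mu(A_{k'})\lesssim 2^{-2^{k'}}$, far too small to sustain any bounded-below anomaly (or blow-up) at $x$. Conversely, keeping the amplitudes large enough to disturb the averages destroys the $2^{-N}$ rate. This is why the paper's Lemma \ref{lem:finitepspacetest}, which supplies test sets $\widehat{S}_n$ of strings of length at most $n^c$ decidable in $\PSPACE$, is proved for $\PSPACE$ tests and cannot be replicated for $\EXP$ tests, and why the $\SUBEXP$-space setting works instead: there the rate $n^{-\log n}=2^{-\log(n)^2}$ and the cylinder bound $2^{p(\log n+\log m)}$ compose consistently (functions of the form $2^{\log(n)^i}$ are closed under composition), whereas your three budgets in the $\PSPACE$/$2^{-N}$ regime are mutually inconsistent.

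A secondary point: even where the converse \emph{is} provable, your construction is doing unnecessary work. The paper's proof of Theorem \ref{thm:conversetopspaceergodictheorem} creates no oscillation of actual Birkhoff averages at all: it takes $f=0$, so $A_n^f\equiv 0$, $\lVert A_n^f-\int f\,d\mu\rVert_1=0$, and Condition 1 holds trivially for \emph{every} $\PSPACE$ simple measure preserving $T$; all of the pathology is placed in the choice of approximating sequence, $f_i=i\,\chi_{U_i}$ with $U_i$ read off from the test capturing $x$ via Lemma \ref{lem:finitepspacetest}, so that $f_i(x)=i$ infinitely often, hence $g_{n,i}(x)=A_n^{f_i}(x)\geq i/n$ infinitely often and $\lim_{n\to\infty}\lim_{i\to\infty}g_{n,i}(x)\neq 0$. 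Since Condition 2 lets the construction choose the $L^1$-approximation, no control of $\widetilde{A^f_n}(x)$ through concentration estimates such as Lemma \ref{lem:characteristicfunctionpspacerapid} is needed, and your signed-cylinder oscillation scheme (together with the worry about the shift being ``too rigid'') is beside the point. But note that even this cheap trick still consumes Lemma \ref{lem:finitepspacetest}'s polynomial length bound, which is where the $\PSPACE$-null rather than $\EXP$-null hypothesis enters irreducibly; so your proposal neither proves the stated theorem nor could be repaired within its own framework without first overcoming the budget mismatch above.
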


But, we prove the following partial converse to Theorem \ref{thm:pspaceergodictheorem}.
\begin{theorem}
\label{thm:conversetopspaceergodictheorem}
	Given any $\PSPACE$ null $x$, there exists a $\PSPACE$ $L^1$-computable $f \in L^{1}(\Sigma^\infty,\mu)$ such that for any $\PSPACE$ simple measure preserving transformation, the following conditions are true:
	\begin{enumerate}
		\item For all $n\in \N$, $\lVert A_n^f -\int f d\mu \rVert_1=0$. Hence, $\int f d\mu$ is an $\PSPACE$-rapid $L^1$-limit point of $A^f_n$. 
		\item There exists a collection of simple functions $\{g_{n,i}\}$ such that for each $n$, $\<g_{n,i}\>_{i=1}^{\infty}$ is a $\PSPACE$ $L^1$-approximation of $A^f_n$ but $\lim\limits_{n \to \infty}\lim\limits_{i \to \infty} g_{n,i}(x)\neq\int f d\mu$.
	\end{enumerate}
\end{theorem}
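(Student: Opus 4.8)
The plan is to first use Condition~1 to pin down $f$ completely, and then spend all the effort on Condition~2. Taking $n=1$ in Condition~1 gives $\lVert f - \int f\,d\mu\rVert_1 = 0$, so $f$ must agree almost everywhere with the constant $\int f\,d\mu$. I would therefore simply take $f\equiv 0$ (so $\int f\,d\mu=0$), which is trivially $\PSPACE$ $L^1$-computable via the zero sequence. For this $f$ we have $A^f_n=0$ as an $L^1$ function for \emph{every} measure-preserving $T$, since $f\circ T^j=0$; hence Condition~1 holds for all $T$ at once, and the target of every approximation $\<g_{n,i}\>_i$ is the zero function. All the content is thus in exhibiting, for this fixed $f$, a collection $\{g_{n,i}\}$ of simple functions with $\lVert g_{n,i}\rVert_1\le 2^{-i}$ (so that $\<g_{n,i}\>_i$ is a genuine $\PSPACE$ $L^1$-approximation of $A^f_n=0$ for each $n$) whose double pointwise limit at the given null point satisfies $\lim_n\lim_i g_{n,i}(x)\neq 0$. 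Because $A^f_n=0$ regardless of $T$, this single collection witnesses the failure uniformly over all $\PSPACE$ simple measure-preserving $T$.

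To build the bad approximation I would exploit that $x$ is $\PSPACE$ null -- equivalently, by Theorem~\ref{thm:pspacesolovayequivalence}, $\PSPACE$ Solovay null. Fix a witnessing test $\<U_m\>$ with $x\in\bigcap_m U_m$, $\mu(U_m)\le 2^{-m}$, and generating sets $S^k_m$ (increasing in $k$) with controlling polynomial $p$ and the tail estimate $\mu\big(U_m-\bigcup_{k\le K}[S^k_m]\big)\le 2^{-K}$ from Definition~\ref{def:pspaceopensets}. The natural idea is to let $g_{n,i}$ be the indicator of a clopen truncation $\bigcup_{k\le K}[S^k_m]$ of one of these open sets: choosing the level $m\ge i$ makes the measure $\le 2^{-i}$, so the $L^1$-rate is automatic, and the truncation is $\PSPACE$ in $i$ provided the depth $K$ is polynomial. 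The role of the outer index $n$ is to \emph{increase the capture depth}, taking $K$ and the range of levels searched to grow with $n$ so that more of each $U_m$ is captured. Since $x$ lies in every $U_m$ (membership in the open set $U_m$ means that some generating string $S^k_m$ is a prefix of $x$), driving the depth to infinity via $n$ should force $x$ into the truncations, so that the inner limits $b_n=\lim_i g_{n,i}(x)$ approach $1$ while each inner sequence still has vanishing $L^1$ norm. Evaluating $g_{n,i}(x)$ means running the defining machine on the prefix $x\upharpoonleft p(i)$, so the machine may itself run the test on prefixes of this input; this is what lets the value at $x$ track membership in the truncation without $x$ being computable.

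The crux -- and the step I expect to be genuinely delicate -- is exactly this clopen capture of the \emph{specific} null point at the required $L^1$ rate. The difficulty is a real tension: approximating $0$ in $L^1$ at rate $2^{-i}$ forces the supports of $g_{n,i}$ to shrink to measure zero, yet we need $x$ to remain inside those shrinking supports so that $g_{n,i}(x)\not\to 0$. Concretely, the generating cylinder containing $x$ at level $m$ may have index far larger than any polynomial in $i$, and a fixed polynomial-depth truncation can then miss $x$ entirely; the controlling polynomial bounds cylinder \emph{lengths} but gives no upper bound on this index. The tool for resolving this is the tail estimate: at level $m$ and depth $K$ the uncaptured part $U_m-\bigcup_{k\le K}[S^k_m]$ is open of measure $\le 2^{-K}$, so whenever $x$ escapes the depth-$K$ truncation it falls into this small residual, and the residuals again form a $\PSPACE$ (Solovay) test. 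The key lemma I must establish is that iterating this peeling -- or choosing the initial test well -- upgrades $\PSPACE$ nullity via open tests to genuine capture by simple-function supports, i.e.\ produces a $\PSPACE$ sequence of clopen sets of measure $\le 2^{-i}$ that actually contain $x$. Granting this lemma, taking $g_{n,i}$ to be the indicator of the $n$-th such clopen set (at rate $2^{-i}$) yields $b_n\to 1$ and hence $\lim_n\lim_i g_{n,i}(x)=1\neq\int f\,d\mu$, completing Condition~2. I expect that it is precisely the obstruction to upgrading this capture to the $\EXP$ level that confines the result to a \emph{partial} converse here, in contrast to the exact converse available for $\SUBEXP$-space randomness.
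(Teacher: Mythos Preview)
Your overall plan is the paper's: take $f\equiv 0$, reduce everything to the clopen-capture problem for the given $\PSPACE$ null point, and build the bad approximation from indicators of those clopen sets. Your ``key lemma'' is exactly Lemma~\ref{lem:finitepspacetest}, and your proposed proof of it (peel off polynomial-depth truncations, use the tail estimate $\mu(U_m-\bigcup_{k\le K}[S^k_m])\le 2^{-K}$ to control residuals) is the right mechanism. Your observation that $g_{n,i}$ may be taken independent of $T$ is a genuine simplification over the paper, which routes through $g_{n,i}=A_n^{f_i}$.

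The gap is that your key lemma is stated too strongly. What Lemma~\ref{lem:finitepspacetest} actually delivers is clopen sets $[\widehat S_n]$ of polynomial string-length with $\mu([\widehat S_n])\le 2^{-n}$ and $x\in[\widehat S_n]$ for \emph{infinitely many} $n$, not all $n$; you cannot in general force $x\in C_i$ for every $i$, because the capture depth at level $m$ (the least $K$ with $x$ inside the depth-$K$ truncation of $U_m$) is finite for each $m$ but may be unbounded in $m$, so any fixed-depth scheme misses $x$ at infinitely many levels. Consequently your claim ``$b_n\to 1$'' is unjustified: with pure indicators you only get $\limsup_i g_{n,i}(x)=1$, and the inner limit may fail to exist. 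That still yields the conclusion, but the paper makes the argument clean by scaling: group the $\widehat S_m$ into blocks $U_i$ with $\mu(U_i)\le 2^{-2i}$, set $f_i=i\,\chi_{U_i}$ (so $\|f_i\|_1\le 2^{-i}$), and then ``$x\in[\widehat S_m]$ infinitely often'' gives $f_i(x)=i$ along a subsequence, forcing $g_{n,i}(x)\ge i/n$ infinitely often and hence inner $\limsup=+\infty$.
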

A proof of the above theorem requires the construction in the
following lemma.
\begin{lemma}
\label{lem:finitepspacetest}
Let $\<U_n\>_{n=1}^{\infty}$ be a $\PSPACE$ test. Then there exists a sequences of sets $\<\widehat{S}_n\>_{n=1}^{\infty}$ such that for each $n \in \N$, $\widehat{S}_n \subseteq \Sigma^*$ satisfying the following conditions:
\begin{enumerate}
    \item $\mu([\widehat{S}_n]) \leq 2^{-n}$.
	\item  $\cap_{m=1}^{\infty}\cup_{n=m}^{\infty} [\widehat{S}_n]  \supseteq \cap_{n=1}^{\infty} U_n$.
	\item There exists $c \in \N$ such that for all $n$, $\sigma \in \widehat{S}_n $ implies $\lvert \sigma \rvert\leq n^c$.
	\item There exists a $\PSPACE$ machine $N$ such that
$N(\sigma,1^n) = 1$ if $\sigma \in \widehat{S}_n$ and 0 otherwise.
\end{enumerate}
\end{lemma}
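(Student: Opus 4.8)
The plan is to convert the two-parameter description $\langle S^k_n\rangle$ of the test into a single-parameter family $\langle\widehat{S}_j\rangle$ (which is the family $\langle\widehat{S}_n\rangle$ of the statement after renaming the index) while preserving the $\limsup$-containment. I first record the bounds I will use: after the paper's reduction I may assume $\langle S^k_n\rangle_{k}$ is increasing for each $n$, so that $g(\cdot,1^n,1^m)$ decides membership in $S^m_n$, the cylinders $[S^m_n]$ increase to $U_n$ with $\mu(U_n\setminus[S^m_n])\le 2^{-m}$ and $\mu(U_n)\le 2^{-n}$, and every string of $S^m_n$ has length at most $p(n+m)$. The naive attempt $\widehat{S}_n=[S^{h(n)}_n]$ for a polynomial $h$ fails condition 2: a point of $\cap_k U_k$ may enter each $U_n$ only at a level $m_n(x)$ exceeding $h(n)$, and is then captured by no $\widehat{S}_n$. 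Since a truncation of a single $U_n$ does not shrink in measure as its level grows, it also cannot be placed at a high index without violating condition 1. Both difficulties point to the same fix: work with the disjoint increments of the $U_n$, whose measures \emph{do} decay with the level.

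Accordingly, for each $n,m$ I would define a layer of \emph{full-length pixels}
\[
\Lambda^m_n=\{\tau\in\Sigma^{p(n+m)}:(\exists\rho\sqsubseteq\tau)\,\rho\in S^m_n\ \wedge\ (\nexists\rho'\sqsubseteq\tau)\,\rho'\in S^{m-1}_n\}.
\]
Because every string of $S^m_n$ and of $S^{m-1}_n$ is no longer than $\tau$, the first clause is equivalent to $[\tau]\subseteq[S^m_n]$ and the second to $[\tau]\cap[S^{m-1}_n]=\emptyset$, so $[\Lambda^m_n]=[S^m_n]\setminus[S^{m-1}_n]$ \emph{exactly}. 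This is the crucial point, and the reason for passing to equal-length strings: had I used the raw set difference $S^m_n\setminus S^{m-1}_n$, a new string could still have most of its cylinder inside $[S^{m-1}_n]$, and the measure of the layer would not be controlled. With the pixel definition the layers are honestly disjoint, $\mu([\Lambda^m_n])=\mu([S^m_n])-\mu([S^{m-1}_n])\le 2^{-(m-1)}$, and for fixed $n$ they partition $U_n$, so $\sum_m\mu([\Lambda^m_n])\le 2^{-n}$; hence $\mu([\Lambda^m_n])\le\min(2^{-n},2^{-(m-1)})$. Membership $\tau\in\Lambda^m_n$ is decided in $\PSPACE$ by scanning the at most $p(n+m)+1$ prefixes of $\tau$ and calling $g$. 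Finally, if $x\in\cap_k U_k$ and $m_n(x)$ is its entry level into $U_n$, then $x\in[S^{m_n(x)}_n]\setminus[S^{m_n(x)-1}_n]=[\Lambda^{m_n(x)}_n]$, so $x$ lies in one layer for \emph{every} $n$.

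It remains to repackage the doubly-indexed layers into $\langle\widehat{S}_j\rangle$. I would assign $\Lambda^m_n$ to the index $j(n,m)=\big\lfloor (n+m)/2\big\rfloor-\lceil\log_2(n+m+1)\rceil-C$ for a fixed constant $C$, and set $\widehat{S}_j=\bigcup_{j(n,m)=j}\Lambda^m_n$. The point of this choice is to balance three competing demands. For the measure: on a diagonal $n+m=s$ the bound $\min(2^{-n},2^{-(m-1)})$ makes each of the $O(s)$ layers at most $2^{-s/2}$, and the logarithmic correction in $j$ absorbs the $O(s)$ count, so that the total measure routed to index $j$ is $O(2^{-j})$ and, after the shift $C$, at most $2^{-j}$ (condition 1). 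For the length: the relation forces $n+m=O(j)$ on every layer sent to $j$, so all its strings have length $p(n+m)$ with $n+m=O(j)$, hence at most $j^{c}$ for a suitable $c$ (condition 3). For the $\limsup$: for a fixed $x$ the capturing indices $j(n,m_n(x))$ tend to infinity with $n$, so $x$ lies in infinitely many $[\widehat{S}_j]$ and thus in $\cap_m\cup_{j\ge m}[\widehat{S}_j]$, which is condition 2. Decidability of $\widehat{S}_j$ from $(\sigma,1^j)$ is routine: only $O(j)$ pairs $(n,m)$ map to $j$, and for each one tests $|\sigma|=p(n+m)$ and then $\sigma\in\Lambda^m_n$ in $\PSPACE$ (condition 4). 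Small indices, where the formula for $j$ is degenerate, are handled by a finite initial adjustment that does not affect any asymptotic estimate.

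I expect the main obstacle to be precisely the tension resolved in the last two paragraphs: the requirement $\mu([\widehat{S}_j])\le 2^{-j}$ pushes every layer toward a high index, whereas condition 3 caps the admissible index by the string length, and a single $U_n$-truncation cannot meet both. The disjoint, measure-faithful pixel layers—whose measures decay like $2^{-(m-1)}$—are what make a simultaneous choice possible, and checking that the pixel difference is genuinely measure-faithful (rather than merely a set difference of strings) is the one step that requires care.
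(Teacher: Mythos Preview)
Your approach is correct and takes a genuinely different route from the paper's.

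The paper works \emph{globally}: it forms the increasing finite string sets $T_n=\bigcup_{i\le n+1}S_i^{2n+2}$ whose cylinders exhaust $[\mathcal{U}]=\bigcup_i U_i$ with $\mu([\mathcal{U}]\setminus[T_n])\le 2^{-n}$, and then lets $\widehat{S}_n$ consist of those strings of $T_{n+1}$ having no prefix in $T_n$; conditions 1, 3 and 4 drop out immediately from this description. You instead work \emph{locally}: you slice each individual $U_n$ into disjoint, measure-faithful pixel layers $[\Lambda^m_n]=[S^m_n]\setminus[S^{m-1}_n]$ and then route the doubly indexed family to a single index via the diagonal map $j(n,m)\approx(n+m)/2-\log_2(n+m)$. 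The trade-off is clean. In your construction condition~2 is essentially automatic---every $x\in\bigcap_n U_n$ lies in exactly one $[\Lambda^{m_n(x)}_n]$ for each $n$, and the repacked indices $j(n,m_n(x))\to\infty$---whereas the measure bound in condition~1 needs the diagonal estimate (your claim that each layer on $n+m=s$ is at most $2^{-s/2}$ is correct up to a constant, since $\min(2^{-n},2^{-(m-1)})\le 2^{-(s-1)/2}$, and the logarithmic shift in $j$ then absorbs the $O(s)$ count). Conversely, in the paper's construction the measure bound is immediate from $[\widehat{S}_n]\subseteq[\mathcal{U}]\setminus[T_n]$, while verifying that a point of $\bigcap_n U_n$ lands in the increment $[\widehat{S}_n]$ infinitely often is the delicate step. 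Both routes give the polynomial length bound and $\PSPACE$ decidability for the same structural reason: only pairs $(n,m)$ with $n+m=O(j)$ contribute to $\widehat{S}_j$.
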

\begin{proof}[Proof of Lemma \ref{lem:finitepspacetest}]
Let $\<S_n^k\>_{n,k}$ be the collection of approximating sets and $M$ be the machine computing $\<U_n\>_{n=1}^{\infty}$ as in Definition \ref{def:pspaceopensets}. Define $\mathcal{U}=\cup_{n=1}^{\infty}\cup_{k=1}^{\infty}S_n^k$.
Now, define
\begin{align*}
T_n= \bigcup\limits_{i=1}^{n+1} S_{i}^{2n+2}	
\end{align*}
 Observe that,
 \begin{align*}
 [\mathcal{U}]\setminus [T_n] \subseteq \left( \bigcup\limits_{i=1}^{n+1} U_i - [S_{i}^{2n+2}]  \right) \bigcup \left(\bigcup\limits_{i=n+2}^{\infty} U_i \right)
 \end{align*}
 Hence,
\begin{align*}
\mu([\mathcal{U}]\setminus [T_n]) &\leq 	\sum\limits_{i=1}^{n+1} \frac{1}{2^{2n+2}} + \sum\limits_{i=n+2} \frac{1}{2^{i}} \\
&\leq \frac{n+1}{2^{n+1+n+1}} + \frac{1}{2^{n+1}} \\
&\leq \frac{1}{2^n}.
\end{align*}

From the definition of $T_n$, it follows that there is a $c \in \N$ such that the length of strings in $T_n$ is upper bounded by $n^c$. Now, if $\widehat{S}_n =\{\sigma \in T_{n+1}: \forall \alpha \sqsubseteq \sigma (\alpha \not\in T_n)\}$, we have $\mu([\widehat{S}_n])\leq \mu([\mathcal{U}]\setminus [T_n]) \leq 2^{-n}$. Conditions 2 and 3 can be readily verified to be true. We now construct a $\PSPACE$ machine $N$ satisfying the condition in 4. $N$ on input $(\sigma,1^n)$ does the following:
\begin{enumerate}
	\item For each $i\in[1,(n+1)+1]$ simulate $M(\sigma,1^i,1^{2(n+1)+2})$. If all simulations result in $0$, output $0$.
	\item Else, for each $m\in [1,n]$ do the following:
	\begin{enumerate}
		\item For each $\alpha \sqsubseteq \sigma$ do the following:
		\begin{enumerate}
			\item For each $i\in[1,m+1]$ simulate $M(\alpha,1^i,1^{2m+2})$. If any of these simulations result in a $1$ then, output $0$.
		\end{enumerate}
	\end{enumerate}
	\item Output $1$.
\end{enumerate} 
$N$ can be easily verified to be a $\PSPACE$ machine. Hence, our constructions satisfy all the desired conditions.
\end{proof}

Now, we prove Theorem \ref{thm:conversetopspaceergodictheorem}.
\begin{proof}[Proof of Theorem \ref{thm:conversetopspaceergodictheorem}]
	Let $\<V_n\>_{n=1}^\infty$ be any $\PSPACE$ test such that $x \in \cap_{n=1}^{\infty} V_n$. Now, from Lemma \ref{lem:finitepspacetest}, there exists a collection of sets $\<\widehat{S}_n\>_{n=1}^{\infty}$ such that $\cap_{m=1}^{\infty}\cup_{n=m}^{\infty} [\widehat{S}_n]  \supseteq \cap_{n=1}^{\infty} V_n$. Let, 
	\begin{align*}
		U_n=\{\sigma:[\sigma] \in \widehat{S_i} \text{ for some } i \text{ such that } 2n+1 \leq i \leq 2(n+1)+1 \}
	\end{align*}
Now, let $f_n=n\chi_{U_n}$. Since, 
\begin{align*}
	\mu(U_n)\leq \sum\limits_{i=2n+1}^{2(n+1)+1} \frac{1}{2^{i}} \leq \frac{1}{2^{2n}} 
\end{align*}
it follows that
\begin{align*}
\lVert f_n \rVert_1 \leq 	\frac{n}{2^{n+n}} \leq \frac{1}{2^{n}}.
\end{align*}
Now, using the properties of $\<\widehat{S}_n\>_{n=1}^{\infty}$, it can be shown that $\<f_n\>_{n=1}^{\infty}$ is a $\PSPACE$ $L^1$-approximation of $f=0$. We construct a machine $M$ computing $\<f_n\>_{n=1}^{\infty}$. The other conditions are easily verified. Let $N$ be the machine from Lemma \ref{lem:finitepspacetest}. On input $(1^n,\sigma)$, $M$ does the following:
\begin{enumerate}
	\item If $\lvert \sigma \rvert < (2(n+1)+1)^c$ then, output $?$.
	\item Else, for each $i \in [2n+1,2(n+1)+1]$ do the following:
	\begin{enumerate}
		\item For each $\alpha \subseteq \sigma$, do the following:
		\begin{enumerate}
		\item If $N(1^i,\alpha)=1$ then, output $n$.
		\end{enumerate}
	\end{enumerate}
	\item Output $0$.
\end{enumerate}
	
	$M$ uses at most polynomial space and computes $\<f_n\>_{n=1}^{\infty}$. Now, define
	\begin{align*}
	g_{n,i}=\frac{f_i +f_i \circ T + \dots + f_i \circ T^{n-1}}{n}	
	\end{align*}
	For any fixed $n \in \N$, since $T$ is a $\PSPACE$ simple transformation, as in Lemma \ref{lem:pspacesequencelemma} it can be shown that $\<g_{n,i}\>_{i=1}^{\infty}$ is a $\PSPACE$ $L^1$-approximation of $A^f_n$. We know that there exist infinitely many $m$ such that $x \in [\widehat{S}_m]$. For any such $m$, let $i$ be the unique number such that $2i+1 \leq m \leq 2(i+1)+1$. For this $i$, $f_i(x)=i$. This shows that there exist infinitely many $i$ such that $f_i(x)=i$. Since each $f_i$ is a non-negative function, it follows that there are infinitely many $i$ with $g_{n,i} \geq i/n$. Hence, if $\lim\limits_{i \to \infty}g_{n,i}(x)$ exists, then it is equal to $\infty$. It may be the case that $\lim\limits_{i \to \infty}g_{n,i}(x)$ does not exist. In either case, $\lim\limits_{n \to \infty}\lim\limits_{i \to \infty}g_{n,i}(x)$ cannot be equal to $\int f d\mu=0$. Hence, our construction satisfies all the desired conditions.
\end{proof}

\section{An ergodic theorem for $\SUBEXP$-space randoms and its converse}
In the previous sections, we demonstrated that for $\PSPACE$
$L^1$-computable functions and $\PSPACE$ simple transformations, the
Birkhoff averages converge to the desired value over $\EXP$
randoms. However, the converse holds only over $\PSPACE$ non-randoms.
The two major reasons for this \textit{gap} are the following:
$\PSPACE$-rapid convergence necessitates exponential length cylinders
while constructing the randomness tests, and $\PSPACE$
$L^1$-computable functions are not strong enough to \textit{capture}
all $\PSPACE$ randoms. In this section, we demonstrate that for a
different notion of randomness - $\SUBEXP$-space randoms and a larger class of $L^1$-computable
functions ($\SUBEXP$-space $L^1$-computable), we can prove the ergodic theorem on the randoms and obtain
its converse on the non-randoms. Analogous to Towsner and Franklin
\cite{franklin2014randomness}, we demonstrate that the ergodic theorem
for $\PSPACE$
  simple transformations and $\SUBEXP$-space $L^1$-computable functions satisfying $\PSPACE$ rapidity, fails for
  exactly this class of non-random points. We first introduce
$\SUBEXP$-space tests and $\SUBEXP$-space randomness.

\begin{definition}[$\SUBEXP$-space sequence of open sets]  
  \label{def:subexpopensets}
  A sequence of open sets $\<U_n\>_{n=1}^\infty$ is
  a \emph{$\SUBEXP$-space sequence of open sets} if there exists a sequence
  of sets $\<S^k_n\>_{k,n \in \N}$, where $S^k_n \subseteq \Sigma^*$
  such that 
  \begin{enumerate}
  \item $U_n = \cup_{k=1}^{\infty}[S^{k}_n]$, where for any
    $m>0$, $\mu\left(U_n-\cup_{k=1}^m [S^k_n]\right)\leq
    m^{-\log(m)}$.
  \item There exists a \emph{controlling polynomial} $p$ such
    that $max\{|\sigma|:\sigma \in \cup_{k=1}^m S^k_n)\}\leq 2^{p(\log(n)+\log(m))}$.
  \item The function $g:\Sigma^* \times 1^*\times 1^* \to \{0,1\}$
    such that $g(\sigma,1^n,1^m)=1$ if $\sigma \in S^m_n$, and 0
    otherwise, 
    is decidable by a $\PSPACE$ machine.
  \end{enumerate} 
\end{definition}

\begin{definition}[$\SUBEXP$-space randomness] 
A sequence of open sets $\<U_n\>_{n=1}^\infty$ is a \emph{$\SUBEXP$-space
    test} if it is a $\SUBEXP$-space sequence of open sets and for all $n
  \in \N$, $\mu(U_n)\leq n^{-\log(n)}$.
  
  A set $A \subseteq \Sigma^\infty$ is \emph{$\SUBEXP$-space null} if there
  is a $\SUBEXP$-space test $\<U_n\>_{n=1}^\infty$ such that $A \subseteq
  \cap_{n=1}^{\infty} U_n$. A set $A\subseteq \Sigma^\infty$
  is \emph{$\SUBEXP$-space random} if $A$ is not $\SUBEXP$-space null.\footnote{It is easy to see that the set of $\SUBEXP$-space randoms is smaller than the set of $\PSPACE$ randoms. But, we do not know if any inclusion holds between $\SUBEXP$-space randoms and $\EXP$-randoms.}
\end{definition}

%

The slower decay rate of $n^{-\log(n)}=2^{-\log(n)^2}$ enables us to obtain an ergodic theorem and an exact converse in the $\SUBEXP$-space setting .The following results are useful in manipulating sums involving terms of the form $2^{-(\log(n))^k}$ for $k \geq 2$.
\begin{lemma}
\label{lem:summationlemma}
	For any $k,m \in \N$, $\sum_{i=m+1}^{\infty} \frac{1}{n^{k+1}} \leq \frac{1}{m^k}$
\end{lemma}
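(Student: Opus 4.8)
The plan is to prove the bound by comparing the tail sum to an integral, exploiting that the summand $n \mapsto n^{-(k+1)}$ is positive and strictly decreasing. (I read the running index on the left-hand side as $n$ rather than $i$, and take $k \geq 1$ so that the series converges.)

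First I would recall the standard integral estimate for decreasing functions: if $h:[m,\infty) \to \R$ is nonnegative and decreasing, then $\sum_{n=m+1}^{\infty} h(n) \leq \int_m^\infty h(x)\,dx$, since $h(n) \leq \int_{n-1}^n h(x)\,dx$ for every $n \geq m+1$ and these intervals tile $[m,\infty)$. Applying this with $h(x) = x^{-(k+1)}$ gives
\begin{align*}
\sum_{n=m+1}^{\infty} \frac{1}{n^{k+1}} \leq \int_m^\infty x^{-(k+1)}\,dx = \frac{1}{k\,m^k}.
\end{align*}
Since $k \in \N$ implies $k \geq 1$, we have $\frac{1}{k\,m^k} \leq \frac{1}{m^k}$, which is exactly the claimed inequality.

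If a purely elementary, integral-free argument is preferred, I would instead telescope: using that $x \mapsto x^{-(k+1)}$ is decreasing, its value at the right endpoint of $[n-1,n]$ bounds the integral over that interval, giving $\frac{k}{n^{k+1}} \leq \frac{1}{(n-1)^k} - \frac{1}{n^k}$. Summing this over $n \geq m+1$ collapses the right-hand side to $\frac{1}{m^k}$ and again yields $\sum_{n=m+1}^\infty n^{-(k+1)} \leq \frac{1}{k\,m^k} \leq \frac{1}{m^k}$.

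There is no real obstacle here; the only point to watch is that both the convergence of the series and the final division by $k$ rely on $k \geq 1$, so under a convention allowing $k = 0$ the statement would fail (the harmonic tail diverges). I would therefore confirm that $k \geq 1$ holds wherever Lemma \ref{lem:summationlemma} is later invoked, which is the case since it is used only for the terms $2^{-(\log n)^k}$ with $k \geq 2$.
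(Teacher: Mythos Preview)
Your argument is correct. The paper takes a slightly different, two-step route: it first handles the base case $k=1$ via the elementary telescoping bound $\frac{1}{n^{2}} < \frac{1}{(n-1)n} = \frac{1}{n-1} - \frac{1}{n}$, giving $\sum_{n\geq m+1} n^{-2} < \frac{1}{m}$; then for $k>1$ it reduces to this case by writing $\frac{1}{n^{k+1}} < \frac{1}{m^{k-1}}\cdot\frac{1}{n^{2}}$ (valid since $n>m$) and summing. Your integral-comparison argument is more direct, handles all $k\geq 1$ uniformly, and actually yields the sharper constant $\frac{1}{k\,m^{k}}$; the paper's version, in exchange, avoids any calculus and stays entirely within partial fractions. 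Either suffices for the lemma's downstream uses, and your caveat about needing $k\geq 1$ applies to the paper's proof as well. One small quibble: your ``integral-free'' alternative still evaluates $\int_{n-1}^{n} x^{-(k+1)}\,dx$ to obtain the telescoping inequality $\frac{k}{n^{k+1}} \leq \frac{1}{(n-1)^{k}} - \frac{1}{n^{k}}$, so it is really the same integral argument rearranged rather than a genuinely calculus-free proof.
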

\begin{proof}
	Let $k=1$. Then, for any $m > 0$,
	\begin{align*}
		\sum\limits_{i=m+1}^{\infty} \frac{1}{n^{2}} &< \frac{1}{m(m+1)} + \frac{1}{(m+1)(m+2)} + \dots \\
		&= \frac{1}{m}-\frac{1}{m+1}+\frac{1}{m+1}-\frac{1}{m+2}+\dots \\
		&= \frac{1}{m}	
	\end{align*}
	For $k>1$,
	\begin{align*}
		\sum\limits_{i=m+1}^{\infty} \frac{1}{n^{k+1}} < \frac{1}{m^{k-1}} \sum\limits_{i=m+1}^{\infty} \frac{1}{n^{2}} 
	\end{align*}
	The proof now follows by applying the result when $k=1$ to bound the summation on the right with $m^{-1}$.
\end{proof}

\begin{lemma}
\label{lem:mpolylogsummation}
For any $m\in \N$, $\sum_{n=2(2m^2+1)}^{\infty} \frac{n}{n^{\log(n)}} \leq \frac{1}{m^{\log(m)}}$.
\end{lemma}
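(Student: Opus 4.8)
The plan is to collapse the variable exponent into a single fixed one and then invoke Lemma~\ref{lem:summationlemma}. Write $N = 2(2m^2+1) = 4m^2 + 2$ and observe that $\frac{n}{n^{\log n}} = \frac{1}{n^{\log n - 1}}$. Since $\log$ is increasing, every $n \geq N$ satisfies $\log n \geq \log N$, and because the base $n > 1$ this gives $n^{\log n - 1} \geq n^{\log N - 1}$. Hence
\begin{align*}
\sum_{n = N}^{\infty} \frac{n}{n^{\log n}} \leq \sum_{n = N}^{\infty} \frac{1}{n^{\log N - 1}},
\end{align*}
which trades the varying exponent $\log n$ for the constant $\log N$ at the cost of an upper bound only.

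Next I would apply Lemma~\ref{lem:summationlemma} with lower limit $N = (N-1)+1$ and exponent parameter $k = \log N - 2$, obtaining
\begin{align*}
\sum_{n = N}^{\infty} \frac{1}{n^{\log N - 1}} \leq \frac{1}{(N-1)^{\log N - 2}}.
\end{align*}
The only point needing care is that $k = \log N - 2$ is in general not an integer, while Lemma~\ref{lem:summationlemma} is stated for $k \in \N$; its proof, however, only uses $k \geq 1$ (it bounds $n^{-(k+1)} = n^{-(k-1)}n^{-2} \leq (N-1)^{-(k-1)}n^{-2}$ for $n \geq N$ and then applies the telescoping $k=1$ case), so it holds verbatim for real $k \geq 1$. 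For $m \geq 2$ we have $N \geq 18$, whence $\log N - 2 > 1$ and the hypothesis is met; the remaining case $m = 1$ makes the right-hand side equal to $1$ and is dispatched trivially, as the tail sum is plainly below $1$.

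It then suffices to verify $(N-1)^{\log N - 2} \geq m^{\log m}$, equivalently, after taking base-$2$ logarithms, $(\log N - 2)\log(N-1) \geq (\log m)^2$. Here I would use the two estimates $\log N - 2 > 2\log m$ (from $N > 4m^2$) and $\log(N-1) > 2\log m$ (from $N-1 > 4m^2 = (2m)^2$), whose product gives $(\log N - 2)\log(N-1) > 4(\log m)^2 \geq (\log m)^2$. Chaining the three displays yields $\sum_{n=N}^{\infty} \frac{n}{n^{\log n}} \leq (N-1)^{-(\log N - 2)} \leq m^{-\log m}$, as claimed. I expect the only genuine friction to be the bookkeeping around the non-integer exponent in Lemma~\ref{lem:summationlemma} together with the $m=1$ base case; the logarithmic comparison in the last step is routine.
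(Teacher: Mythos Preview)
Your argument is correct and follows essentially the same route as the paper: rewrite $n/n^{\log n}=1/n^{\log n-1}$, freeze the exponent at its value at the lower endpoint, and invoke Lemma~\ref{lem:summationlemma}. The only cosmetic difference is that the paper first enlarges the summation range to start at $2m^2+1$ and then floors the exponent to $\lfloor\log(m^2)\rfloor+1$ so that Lemma~\ref{lem:summationlemma} applies with an integer $k$, whereas you keep the range and appeal to the real-$k$ extension of that lemma (and handle $m=1$ separately); both are equally valid bookkeeping choices.
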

\begin{proof}
\begin{align*}
	\sum\limits_{n=2(2m^2+1)}^{\infty} \frac{n}{n^{\log(n)}} &\leq  \sum\limits_{n=2(2m^2+1)}^{\infty} \frac{1}{n^{ \log(n)-1 }}\\
	&\leq \sum\limits_{n=2m^2+1}^{\infty} \frac{1}{n^{ \log(2m^2) }}\\
	&\leq \sum\limits_{n=2m^2+1}^{\infty} \frac{1}{n^{ \lfloor \log(m^2) \rfloor+1}}\\
	&\leq \frac{1}{(2m)^{2(\lfloor \log(m^2) \rfloor)}}\\
	&\leq \frac{1}{m^{2\frac{\log(m^2)}{2} }}\\
	&\leq \frac{1}{m^{\log(m)}}
\end{align*}	

The fourth inequality above follows from Lemma \ref{lem:summationlemma}.
\end{proof}

A similar inequality can be trivially seen to be true on replacing
$n/n^{\log(n)}$ with $1/n^{\log(n)}$. Now, we introduce the Solovay analogue of $\SUBEXP$-space randomness
and prove that these notions are analogous.
\begin{definition}[$\SUBEXP$-space Solovay test]
  A sequence of open sets $\<U_n\>_{n=1}^\infty$ is
  a \emph{$\SUBEXP$-space Solovay test} if it is a $\SUBEXP$-space sequence of
  open sets and there exists a polynomial $p$ such that $\forall m \geq
          0$, $\sum_{n=p(m)+1}^{\infty}\mu(U_n) \leq
          \frac{1}{m^{\log(m)}}$\footnote{This implies that $\sum_{n=1}^{\infty}\mu(U_n) < \infty $}. A set $A \subseteq \Sigma^\infty$ is \emph{$\SUBEXP$-space Solovay null}
  if there exists a $\SUBEXP$-space Solovay test $\<U_n\>_{n=1}^\infty$ such
  that $A \subseteq
  \cap_{i=1}^{\infty}\cup_{n=i}^{\infty}
  U_n$. $A\subseteq \Sigma^\infty$ is \emph{$\SUBEXP$-space Solovay random}
  if $A$ is not $\SUBEXP$-space Solovay null.  
\end{definition}

\begin{lemma}
  \label{lem:subexpsolovay_equivalence}
A set $A \subseteq \Sigma^\infty$ is $\SUBEXP$-space null if and only if $A$
is $\SUBEXP$-space Solovay null. 	 
\end{lemma}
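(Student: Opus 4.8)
The plan is to follow the proof of Theorem~\ref{thm:pspacesolovayequivalence} almost verbatim, replacing the geometric decay rate $2^{-n}$ by the slower rate $n^{-\log(n)}=2^{-(\log n)^2}$ throughout, and keeping all auxiliary parameters \emph{polynomial} so that the quasipolynomial length bound of Definition~\ref{def:subexpopensets} is respected. For the easy direction, suppose $A$ is $\SUBEXP$-space null, witnessed by a $\SUBEXP$-space test $\<U_n\>_{n=1}^\infty$ with $A\subseteq\cap_n U_n$. I would simply observe that the same sequence is a Solovay test: by the remark following Lemma~\ref{lem:mpolylogsummation} (the version with $1/n^{\log n}$ in place of $n/n^{\log n}$), taking $p(m)=4m^2+1$ gives $\sum_{n=p(m)+1}^{\infty}\mu(U_n)\le\sum_{n\ge 2(2m^2+1)} n^{-\log n}\le m^{-\log m}$, which is exactly the $\SUBEXP$-space Solovay condition. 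Since $\cap_n U_n\subseteq\cap_i\cup_{n\ge i}U_n$, the set $A$ is $\SUBEXP$-space Solovay null.

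For the hard direction, let $\<U_n\>_{n=1}^\infty$ be a $\SUBEXP$-space Solovay test with Solovay polynomial $p$ (so $\sum_{n>p(m)}\mu(U_n)\le m^{-\log m}$), controlling polynomial $q$, and approximating sets $\<S_n^k\>_{n,k\in\N}$ taken increasing in $k$. I would set $V_m=\cup_{n>p(m)}U_n$. Then $\mu(V_m)\le\sum_{n>p(m)}\mu(U_n)\le m^{-\log m}$, and for each $m$ one has $\cap_i\cup_{n\ge i}U_n\subseteq\cup_{n>p(m)}U_n=V_m$ (take $i=p(m)+1$), so $\cap_i\cup_{n\ge i}U_n\subseteq\cap_m V_m$. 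Thus proving that $\<V_m\>_{m=1}^\infty$ is a $\SUBEXP$-space test would finish the argument. The content lies in exhibiting approximating sets for $V_m$. Mirroring the $\PSPACE$ construction, I would define
\[
T_m^K=\bigcup_{i=p(m)+1}^{r(m,K)} S_i^{\ell(m,K)},\qquad r(m,K)=\max\{p(m)+1,\,p(2K)\},
\]
with $\ell(m,K)$ a suitably large polynomial in $K$ (e.g. $(K+2)^2$). These are increasing in $K$, so $\cup_{k=1}^{K}[T_m^k]=[T_m^K]$.

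The measure estimate (condition~1) comes from the decomposition $V_m-[T_m^K]\subseteq\bigl(\cup_{i=p(m)+1}^{r(m,K)}(U_i-[S_i^{\ell(m,K)}])\bigr)\cup\bigl(\cup_{i>r(m,K)}U_i\bigr)$. Since $r(m,K)\ge p(2K)$, the tail satisfies $\sum_{i>r(m,K)}\mu(U_i)\le(2K)^{-\log(2K)}\le\tfrac12 K^{-\log K}$, using the elementary inequality $(2K)^{\log(2K)}\ge 2K^{\log K}$. The finite part has at most $p(2K)$ indices (the range collapses to a single index when $p(m)\ge p(2K)$), each contributing measure $\le\ell(m,K)^{-\log\ell(m,K)}$, and a polynomial choice of $\ell(m,K)$ makes the product $\le\tfrac12 K^{-\log K}$; together this yields $\mu(V_m-[T_m^K])\le K^{-\log K}$. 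For condition~2, every string of $T_m^K$ lies in some $S_i^{\ell(m,K)}$ with $i\le r(m,K)$, so by the controlling polynomial $q$ its length is at most $2^{q(\log r(m,K)+\log\ell(m,K))}$; as $r(m,K)$ and $\ell(m,K)$ are polynomial in $m,K$, this is $2^{p'(\log m+\log K)}$ for a suitable polynomial $p'$. Condition~3 reduces to running the $\PSPACE$ decision machine of $\<U_n\>$ on $(\sigma,1^i,1^{\ell(m,K)})$ for the polynomially many indices $i$ in range, which is a $\PSPACE$ computation since $i,\ell(m,K)$ are polynomial (in unary) and $\sigma$ is part of the input.

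The main obstacle, and the reason the $\PSPACE$ argument does not transfer mechanically, is condition~2: the $\SUBEXP$-space definition permits only the quasipolynomial length bound $2^{p(\log n+\log m)}$, so I cannot afford an exponential cutoff or approximation level as one might in the $\PSPACE$ case. Every parameter threaded into the original approximating sets---the upper cutoff $r(m,K)$ and the accuracy level $\ell(m,K)$---must stay polynomial in $m$ and $K$ so that their logarithms remain $O(\log m+\log K)$. The delicate point is that the condition~1 measure bound must then be made to close using \emph{only} these polynomial parameters, which is precisely where the slow-decay summation estimates (Lemma~\ref{lem:summationlemma}, Lemma~\ref{lem:mpolylogsummation}) and the inequality $(2K)^{-\log(2K)}\le\tfrac12 K^{-\log K}$ are needed.
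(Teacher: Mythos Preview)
Your proposal is correct and follows essentially the same route as the paper: define $V_m=\bigcup_{n>p(m)}U_n$, approximate each $V_m$ by the finite union $T_m^K=\bigcup_{i=p(m)+1}^{r(m,K)}S_i^{\ell(m,K)}$ with polynomial cutoff and accuracy parameters, and verify the three conditions of Definition~\ref{def:subexpopensets} using the slow-decay estimates of Lemmas~\ref{lem:summationlemma}--\ref{lem:mpolylogsummation}. The paper chooses $r(n,k)=\max\{2((2(2k^2+1)+1)^2+1),\,p(n)+1\}$ and $\ell=(r(n,k)-p(n))\cdot 2(2k^2+1)$, while you take $r(m,K)=\max\{p(m)+1,\,p(2K)\}$ and leave $\ell(m,K)$ as an unspecified sufficiently large polynomial; these are cosmetic differences. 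Your handling of the tail $\sum_{i>r(m,K)}\mu(U_i)$ via the Solovay polynomial (using $r(m,K)\ge p(2K)$) is in fact cleaner than the paper's, which bounds this tail by $\sum_{i>r(n,k)}i^{-\log i}$ as though $\mu(U_i)\le i^{-\log i}$ were given.
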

\begin{proof}
Using Lemma \ref{lem:mpolylogsummation}, it is easy to see that if $A$ is $\SUBEXP$-space null then $A$ is $\SUBEXP$
Solovay null. Conversely, let $A$ be $\SUBEXP$-space Solovay null and let
$\<U_n\>_{n=1}^{\infty}$ be any Solovay test which witnesses this
fact. Let $V_{n}=\cup_{i=p(n)+1}^{\infty} U_n$. We show that
$\<V_n\>_{n=1}^\infty$ is a $\SUBEXP$-space test. Let $\<S_n^k\>_{n,k \in
  \N}$ be any sequence of sets approximating $\<U_n\>_{n=1}^{\infty}$
as in definition \ref{def:subexpopensets} such that
$\<S_n^k\>_{k=1}^{\infty}$ is increasing for each $n$. We define
a sequence of sets $\<T_n^k\>_{n,k \in \N}$ approximating $V_n$ as
follows.

Let $r(n,k)=\max \{2((2(2k^2+1)+1)^2+1),p(n)+1\}$. Define
\begin{align*}
T_n^k = \bigcup\limits_{i=p(n)+1}^{r(n,k)} S_i^{(r(n,k)-p(n))2(2k^2+1)}.	
\end{align*}

We can easily verify conditions 1 and 3 in definition
\ref{def:subexpopensets}. From the definition of $T_n^k$, it can be verified that
\begin{align*}
	\mu(V_n-[T_n^k]) &\leq \sum\limits_{i=p(n)+1}^{r(n,k)}2^{-\log(r(n,k)-p(n))^2-\log(2(2k^2+1))^2}+\sum\limits_{n=r(n,k)+1}^{
\infty}2^{-\log(n)^2} \\
	&\leq \sum\limits_{i=p(n)+1}^{r(n,k)}2^{-\log(r(n,k)-p(n))^2-\log(2(2k^2+1))^2}+\sum\limits_{n=2((2(2k^2+1)+1)^2+1)}^{
\infty}2^{-\log(n)^2} \\
&\leq \frac{r(n,k)-p(n)}{2^{\log(r(n,k)-p(n))^2}}\frac{1}{2^{\log(2(2k^2+1))^2}} + \frac{1}{2^{\log(2(2k^2+1)+1)^2}} \\
&< \frac{1}{2^{\log(2(2k^2+1))^2}} + \frac{1}{2^{\log(2(2k^2+1)+1)^2}} \\
&< \frac{1}{2^{\log(k)^2}}.
\end{align*}
The third inequality and the last inequality above follows from Lemma \ref{lem:mpolylogsummation}. Using the machine $M$ and controlling polynomial $p$ witnessing that
$\<U_n\>_{n=1}^{\infty}$ is a $\PSPACE$ sequence of open sets, we can
construct the corresponding machines for $\<V_n\>_{n=1}^{\infty}$ in the following way. Machine $N$ on input $(\sigma,1^n,1^k)$ does the following:
\begin{enumerate}
	\item For each $i \in [p(n)+1,r(n,k)]$ do the following:
	\begin{enumerate}
	\item Output $1$ if $M(\sigma,1^i,1^{(r(n,k)-p(n))2(2k^2+1)})=1$. 
	\end{enumerate}
	\item Output $0$ if none of the above computations results in $1$.
\end{enumerate}
It is straightforward to verify that $N$ is a $\PSPACE$ machine.
\end{proof}

Now, we define $\SUBEXP$-space analogues of concepts from Section \ref{sec:pspacel1computability}.
\begin{definition}[$\SUBEXP$-space sequence of simple functions]
\label{def:subexpsequenceofsimplefunctions}
	A sequence of simple functions $\<f_n\>_{n=1}^{\infty}$ where
        each $f_n : \Sigma^\infty \to \Q$ is a \emph{$\SUBEXP$-space
          sequence of simple functions} if
	\begin{enumerate}
	\item There is a \emph{controlling polynomial} $p$ such that
          for each $n$, there exist $k(n)\in \N$, $\{d_1,d_2 \dots,
          d_{k(n)}\}\subseteq \Q$ and $\{\sigma_1,\sigma_2 \dots
          \sigma_{k(n)}\} \subseteq \Sigma^{2^{p(\log(n))}}$ such that
          $f_n = \sum_{i=1}^{k(n)}d_i \chi_{\sigma_i}$,
          where $\chi_{\sigma_i}$ is the characteristic function of
          the cylinder $[\sigma_i]$.
	\item There is a $\PSPACE$ machine $M$ such that for each $n
          \in \N, \sigma \in \Sigma^{*}$
	\begin{align*}
		M(1^n,\sigma)=\begin{cases}
                f_n(\sigma0^\infty) & \text{if } |\sigma| \geq 2^{p(\log(n))}\\
                ? & \text{otherwise.}
                \end{cases}
	\end{align*}
	\end{enumerate}
\end{definition} 

\begin{definition}[$\SUBEXP$-space $L^1$-computable functions]
\label{def:subexpl1computablefunction}
A function $f \in L^1(\Sigma^\infty,\mu)$ is $\SUBEXP$-space $L^1$-computable
if there exists a $\SUBEXP$-space sequence of simple functions
$\<f_n\>_{n=1}^{\infty}$ such that for every $n \in \N$,
$\lVert f-f_n \rVert \leq n^{-\log(n)}$. The sequence
$\<f_n\>_{n=1}^{\infty}$ is called a \emph{$\SUBEXP$-space
$L^1$-approximation} of $f$.
\end{definition}

We require the following equivalent definitions of $\PSPACE$-rapid convergence notions for working in the setting of $\SUBEXP$-space randomness.
\begin{lemma}
\label{lem:pspacerapidequivalence}
	A real number $a$ is a $\PSPACE$-rapid limit point of the real
  number sequence $\<a_n\>_{n=1}^{\infty}$ if and only if there exists a
  polynomial $p$ such that for all $m \in \N$, $ \exists k \leq 2^{p(\log(m))}
  \text{ such that }|a_{k}-a| \leq m^{-\log(m)}$.
\end{lemma}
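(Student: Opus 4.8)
The plan is to prove both implications directly by reparametrizing the precision index, exploiting the identity $m^{-\log m} = 2^{-(\log m)^2}$ (taking $\log = \log_2$) to translate between the two accuracy scales, and the closure of polynomials under the substitutions $x \mapsto x^2$ and $m \mapsto \sqrt{m}$ to translate between the two search-bound scales. Throughout I would assume without loss of generality that every witnessing polynomial is monotone nondecreasing with nonnegative coefficients, since replacing a polynomial $p$ by a larger monotone one only weakens the requirement ``$\exists\, k \le 2^{p(\cdots)}$'' and hence preserves the property.

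For the forward direction, suppose $a$ is a $\PSPACE$-rapid limit point in the sense of the original definition, witnessed by a polynomial $p$. Given $m \in \N$ I would invoke that definition with precision parameter $M = \lceil (\log m)^2 \rceil$, obtaining some $k \le 2^{p(M)}$ with $|a_k - a| \le 2^{-M} \le 2^{-(\log m)^2} = m^{-\log m}$. Setting $p'(x) = p(x^2 + 1)$, which is again a polynomial, monotonicity yields $p(M) \le p((\log m)^2 + 1) = p'(\log m)$, so that $k \le 2^{p'(\log m)}$. This is exactly the condition in the lemma, witnessed by $p'$.

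For the converse, suppose the lemma's condition holds with witnessing polynomial $p$. Given $m \in \N$, I would invoke it with precision parameter $m' = \lceil 2^{\sqrt m} \rceil$. Since $\log m' \ge \sqrt m$, the guaranteed accuracy is $(m')^{-\log m'} = 2^{-(\log m')^2} \le 2^{-m}$, which is the accuracy demanded by the original definition. For the search bound we have $k \le 2^{p(\log m')}$; the key estimate is $\log m' \le \sqrt m + 1$ (because $2^{\sqrt m} \le \lceil 2^{\sqrt m}\rceil \le 2^{\sqrt m + 1}$), whence $p(\log m') \le p(\sqrt m + 1)$ by monotonicity. Since $p$ has some fixed degree $d$, the quantity $p(\sqrt m + 1)$ is dominated by a polynomial $q(m)$ in $m$, as every term $(\sqrt m + 1)^j$ with $j \le d$ is $O(m^{d/2}) = O(m^d)$. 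Thus $k \le 2^{q(m)}$, giving the original condition with witness $q$.

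The only step requiring genuine care is this last polynomial-domination estimate: one must check that composing the witnessing polynomial with $\sqrt m$ remains polynomially bounded in $m$, which is precisely where the asymmetry between the $2^{-m}$ and the $2^{-(\log m)^2}$ accuracy scales is absorbed. Everything else reduces to routine substitutions and monotonicity comparisons, so I expect the proof to be short once the reparametrizations $M = \lceil(\log m)^2\rceil$ and $m' = \lceil 2^{\sqrt m}\rceil$ are in place.
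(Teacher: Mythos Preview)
Your proposal is correct and follows essentially the same reparametrization strategy as the paper: for the forward direction apply the original definition at precision index $\lceil(\log m)^2\rceil$, and for the converse apply the lemma's condition at $m'=2^{\lceil\sqrt m\rceil}$. Your forward direction is in fact stated more carefully than the paper's, whose printed substitution $l=\lceil\log m\rceil$ yields only accuracy $2^{-\log m}$ rather than the required $2^{-(\log m)^2}$ and is evidently a typo for $\lceil(\log m)^2\rceil$.
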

\begin{proof}
We prove the forward direction first. If $a$ is a $\PSPACE$-rapid limit point of $\<a_n\>_{n=1}^{\infty}$, there exists a polynomial $q$ such that for all $l \in \N$, $ \exists k \leq 2^{q(l)}
  \text{ such that }\lvert a_{l}-a\rvert \leq 2^{-m}$. Substituting $l=\lceil \log(m) \rceil$, we get that for all $m \in \N$, $ \exists k \leq 2^{q(\lceil \log(m) \rceil)} \leq 2^{q(\log(m)+1)}
  \text{ such that }\lvert a_{l}-a\rvert \leq 2^{-\lceil \log(m)\rceil} \leq 2^{-\log(m)}$. Conversely, assume that there exists a
  polynomial $p$ such that for all $m \in \N$, $ \exists k \leq 2^{p(\log(m))}
  \text{ such that }|a_{k}-a| \leq 2^{-\log(m)^2}$.  Substituting $m=2^{\lceil l^\frac{1}{2} \rceil}$, we get that for all $m \in \N$, $ \exists k \leq 2^{q(\lceil l^\frac{1}{2} \rceil)} \leq 2^{q(l)}
  \text{ such that }\lvert a_{l}-a\rvert \leq 2^{-\lceil l^\frac{1}{2} \rceil^2}\leq 2^{-l}$. Hence, $a$ is a $\PSPACE$-rapid limit point of $\<a_n\>_{n=1}^{\infty}$.
\end{proof}

\begin{lemma}
	A sequence of measurable functions $\<f_n\>_{n=1}^{\infty}$ is
  $\PSPACE$-rapid almost everywhere convergent to a measurable
  function $f$ if and only if there exists a
  polynomial $p$ such that for all
  $m_1$ and $m_2$,
  \begin{align*}
    \mu\left( \left\{ x : \sup\limits_{n \geq 2^{p(\log(m_1)+\log(m_2))}}
    |f_n(x)-f(x)| \geq \frac{1}{m_1^{\log(m_1)}}\right\} \right) \leq
    \frac{1}{m_2^{\log(m_2)}}.
  \end{align*}

\end{lemma}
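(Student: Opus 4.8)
The plan is to mirror the proof of Lemma~\ref{lem:pspacerapidequivalence}, now carrying two parameters $m_1,m_2$ through the same reparametrization and exploiting the identity $m^{-\log m}=2^{-\log(m)^2}$ that links the two error scales. The single structural fact I will use repeatedly is that the deviation sets are monotone under raising the cutoff and under shrinking the error tolerance: whenever $N'\geq N$ and $\epsilon'\leq\epsilon$,
\begin{align*}
\left\{x:\sup\limits_{n\geq N'}|f_n(x)-f(x)|\geq\epsilon\right\}
\subseteq
\left\{x:\sup\limits_{n\geq N}|f_n(x)-f(x)|\geq\epsilon'\right\},
\end{align*}
so the measure of the left-hand set is bounded by that of the right-hand one. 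This inclusion is the glue for both directions, and I may assume throughout (without loss of generality) that the polynomials involved are increasing on the nonnegative reals.

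For the forward direction, I would assume the original definition holds with polynomial $q$. Given $m_1,m_2$, set $l_1=\lceil\log(m_1)^2\rceil$ and $l_2=\lceil\log(m_2)^2\rceil$, so that $2^{-l_i}\leq m_i^{-\log m_i}$. Applying the hypothesis at $(l_1,l_2)$ and invoking the inclusion above, the set cut out by cutoff $2^{p(\log m_1+\log m_2)}$ and tolerance $m_1^{-\log m_1}$ sits inside the one cut out by cutoff $2^{q(l_1+l_2)}$ and tolerance $2^{-l_1}$, provided $p(\log m_1+\log m_2)\geq q(l_1+l_2)$. Since $l_1+l_2\leq(\log m_1+\log m_2)^2+2$, taking $p(x)=q(x^2+2)$ (still a polynomial) secures this, and the resulting measure bound $2^{-l_2}\leq m_2^{-\log m_2}$ is exactly the required conclusion.

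For the converse, I would assume the new characterization holds with polynomial $p$. Given $l_1,l_2$, substitute $m_i=2^{\lceil\sqrt{l_i}\rceil}$, so that $\log(m_i)^2=\lceil\sqrt{l_i}\rceil^2\geq l_i$ and hence $m_i^{-\log m_i}\leq 2^{-l_i}$. Now the inclusions reverse: the event with tolerance $2^{-l_1}$ lies inside the event with tolerance $m_1^{-\log m_1}$, and a cutoff $2^{q(l_1+l_2)}$ dominating $2^{p(\log m_1+\log m_2)}$ makes the target set a subset of the set controlled by the hypothesis. Because $\lceil\sqrt{l_1}\rceil+\lceil\sqrt{l_2}\rceil\leq l_1+l_2+2$, the choice $q(x)=p(x+2)$ works, and the hypothesis bound $m_2^{-\log m_2}\leq 2^{-l_2}$ finishes the argument.

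The only point needing care—and the one place where a slip would break the proof—is keeping straight which deviation set contains which after each substitution: raising the cutoff and shrinking the tolerance both shrink the event, while the two substitutions $l_i\leftrightarrow\log(m_i)^2$ push these quantities in opposite directions. Once the monotonicity inclusion is pinned down and the two comparisons $l_1+l_2\leq(\log m_1+\log m_2)^2+2$ and $\lceil\sqrt{l_1}\rceil+\lceil\sqrt{l_2}\rceil\leq l_1+l_2+2$ are verified, both directions are pure bookkeeping, with no analytic content beyond the $\log$-versus-polynomial translations already used in Lemma~\ref{lem:pspacerapidequivalence}.
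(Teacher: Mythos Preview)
Your proof is correct and follows exactly the approach the paper indicates: it mirrors the substitution-and-monotonicity argument of Lemma~\ref{lem:pspacerapidequivalence}, carried out componentwise in the two parameters $m_1,m_2$ via $l_i=\lceil\log(m_i)^2\rceil$ and $m_i=2^{\lceil\sqrt{l_i}\rceil}$. The paper itself omits the details, simply stating that the technique of Lemma~\ref{lem:pspacerapidequivalence} applies; your write-up makes the monotonicity inclusion and the polynomial comparisons $l_1+l_2\le(\log m_1+\log m_2)^2+2$ and $\lceil\sqrt{l_1}\rceil+\lceil\sqrt{l_2}\rceil\le l_1+l_2+2$ explicit, which is a cleaner execution of the same idea.
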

The same technique used in the proof of Lemma \ref{lem:pspacerapidequivalence} can be used to prove this claim.

Before addressing the main result, let us define $\SUBEXP$-space ergodicity.
\begin{definition}[$\SUBEXP$-space ergodic transformations]
	A measurable function $T:(\Sigma^\infty,\mu) \to (\Sigma^\infty,\mu)$ is \emph{$\SUBEXP$-space ergodic} if $T$ is a $\PSPACE$ simple transformation such that for any $\SUBEXP$-space $L^1$-computable $f\in L^1 (\Sigma^\infty,\mu)$, $\int f d\mu$ is a $\PSPACE$-rapid $L^1$ limit point of $A_n^{f,T}$.
\end{definition}
Now, we prove $\SUBEXP$ analogues of the auxiliary lemmas from Section \ref{sec:pspaceergodictheorem}.
\begin{lemma}
\label{lem:subexpconvergencelemma}
Let $\<f_n\>_{n=1}^{\infty}$ be a $\SUBEXP$-space sequence of simple
functions which converges $\PSPACE$-rapid almost everywhere to $f \in L^1
(\Sigma^\infty,\mu)$. Then,
\begin{enumerate}
	\item $\lim\limits_{n \to \infty}f_n(x)$ exists for all $\SUBEXP$-space random $x$.
	\item Given a $\SUBEXP$-space sequence of simple functions $\<g_n\>_{n=1}^{\infty}$ which is $\PSPACE$-rapid almost everywhere convergent to $f$, $\lim\limits_{n \to \infty}g_n(x)=\lim\limits_{n \to \infty}f_n(x)$ for all $\SUBEXP$-space random $x$. 
\end{enumerate}	
\end{lemma}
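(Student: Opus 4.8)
The plan is to prove this $\SUBEXP$-space analogue of Lemma~\ref{lem:convergencelemma} by mirroring that lemma's proof, but tracking the $\SUBEXP$-space bookkeeping: slower decay rates of the form $n^{-\log(n)}$, cylinder lengths bounded by $2^{p(\log(n))}$, and the Solovay-test machinery from Lemma~\ref{lem:subexpsolovay_equivalence} in place of the ordinary $\PSPACE$ test. Since $\PSPACE$-rapid almost everywhere convergence is defined identically in both settings (the hypothesis gives a polynomial $p$ controlling the sup over $n \geq 2^{p(m_1+m_2)}$), the overall architecture is unchanged: build a sequence of ``bad sets'' $U_k$ recording where consecutive blocks of the $f_n$ disagree, show their union forms a $\SUBEXP$-space test, and conclude that every $\SUBEXP$-space random avoids all but finitely many of them, forcing $\<f_n(x)\>$ to be Cauchy.

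First I would fix the polynomial $q$ from the $\PSPACE$-rapid a.e.\ convergence hypothesis so that $\mu(\{x : \sup_{n \geq 2^{q(k)}} |f_n(x) - f(x)| \geq 2^{-(k+2)}\}) \leq 2^{-(k+2)}$, derive the analogous bound with $f$ replaced by $f_{2^{q(k)}}$ via the triangle inequality, and define
\begin{align*}
U_k = \left\{x : \max_{2^{q(k)} \leq n \leq 2^{q(k+1)}} |f_n(x) - f_{2^{q(k)}}(x)| \geq \tfrac{1}{2^{k+1}}\right\},
\end{align*}
so that $\mu(U_k) \leq 2^{-(k+1)}$. The next step is the complexity accounting: because $\<f_n\>_{n=1}^{\infty}$ is a $\SUBEXP$-space sequence of simple functions (Definition~\ref{def:subexpsequenceofsimplefunctions}), each $f_n$ lives on cylinders of length at most $2^{p(\log(n))}$, so $U_k$ is a finite union of cylinders whose lengths are bounded by $2^{p(\log(2^{q(k+1)}))} = 2^{p(q(k+1))}$. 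Crucially this is still of the form $2^{p'(\log(n))}$ in the relevant parameter once we index by $k$, and the machine deciding membership in the approximating set runs in $\PSPACE$ by simulating the underlying $\PSPACE$ machine $M$ over the range $n \in [2^{q(k)}, 2^{q(k+1)}]$ exactly as in Lemma~\ref{lem:convergencelemma}. This is where I expect the main obstacle: verifying that the controlling-polynomial bound in condition~2 of Definition~\ref{def:subexpopensets} (length $\leq 2^{p(\log(n)+\log(m))}$) is met and that the decision procedure remains $\PSPACE$ rather than $\EXP$, since here the whole point of the $\SUBEXP$-space setting is that the governing machine is $\PSPACE$ while the test tolerates the slower $n^{-\log(n)}$ measure decay.

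With $\mu(U_k) \leq 2^{-(k+1)} \leq k^{-\log(k)}$ for large $k$, I would then set $V_k = \bigcup_{i=k}^{\infty} U_i$ and invoke Lemma~\ref{lem:subexpsolovay_equivalence}: the tail sums $\sum_{i \geq k} \mu(U_i)$ decay fast enough that $\<U_k\>_{k=1}^{\infty}$ (or its tail unions) constitutes a $\SUBEXP$-space Solovay test, hence captures a $\SUBEXP$-space null set. Therefore any $\SUBEXP$-space random $x$ lies in only finitely many $U_k$, and for all large $k$ and all $n \geq 2^{q(k)}$ we get $|f_n(x) - f_{2^{q(k)}}(x)| \leq \sum_{j=k}^{\infty} 2^{-(j+1)} \leq 2^{-k}$, so $\<f_n(x)\>$ is Cauchy and its limit exists, proving part~1.

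Finally, part~2 follows from part~1 by the standard interleaving trick: given a second $\SUBEXP$-space sequence of simple functions $\<g_n\>_{n=1}^{\infty}$ converging $\PSPACE$-rapid a.e.\ to the same $f$, the interleaved sequence $f_1, g_1, f_2, g_2, \dots$ is readily checked to be a $\SUBEXP$-space sequence of simple functions that is still $\PSPACE$-rapid a.e.\ convergent to $f$, so by part~1 its pointwise limit exists on every $\SUBEXP$-space random $x$; since both $\<f_n(x)\>$ and $\<g_n(x)\>$ are subsequences of a convergent sequence, they share the common limit, giving $\lim_n g_n(x) = \lim_n f_n(x)$. I would keep the exposition terse, pointing back to Lemma~\ref{lem:convergencelemma} and Lemma~\ref{lem:subexpsolovay_equivalence} for the repeated calculations and emphasizing only the points where the $n^{-\log(n)}$ bounds and the $2^{p(\log(n))}$ cylinder lengths enter.
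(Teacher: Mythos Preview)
Your overall architecture is right, but the parametrization you chose fails the $\SUBEXP$-space bookkeeping at exactly the point you flagged as ``the main obstacle.'' You let $n$ range over $[2^{q(k)}, 2^{q(k+1)}]$ in $U_k$, so the simple functions $f_n$ involved have $n$ as large as $2^{q(k+1)}$; since $\<f_n\>$ is a $\SUBEXP$-space sequence, $f_n$ lives on cylinders of length up to $2^{p(\log n)}$, and hence $U_k$ is a union of cylinders of length up to $2^{p(q(k+1))} = 2^{\mathrm{poly}(k)}$. Condition~2 of Definition~\ref{def:subexpopensets} demands this be at most $2^{r(\log k)}$ for some polynomial $r$, and $\mathrm{poly}(k)$ is not $\mathrm{poly}(\log k)$. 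Your assertion that ``this is still of the form $2^{p'(\log(n))}$ in the relevant parameter'' is therefore false; with your construction $\<U_k\>$ is only an $\EXP$ sequence of open sets, which recovers Lemma~\ref{lem:convergencelemma} but nothing stronger.

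The paper repairs this by reparametrizing before defining $U_k$. Instead of error tolerance $2^{-(k+1)}$ and $n$-range $[2^{q(k)}, 2^{q(k+1)}]$, it uses error tolerance $2^{-\log(k)^2}$ and $n$-range $[2^{\lceil \log(k)^c \rceil}, 2^{\lceil \log(k+1)^c \rceil}]$ (invoking the equivalent formulation of $\PSPACE$-rapid a.e.\ convergence with bound $2^{p(\log m_1+\log m_2)}$ and decay $m_i^{-\log m_i}$). Then $n \leq 2^{\mathrm{poly}(\log k)}$, so the cylinder length bound is $2^{p(\mathrm{poly}(\log k))} = 2^{\mathrm{poly}(\log k)}$, and $\mu(U_k) \leq 2^{-\log(k)^2} = k^{-\log k}$ directly. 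The paper names the crux explicitly: functions of the form $2^{(\log n)^i}$ are closed under composition, and that closure is precisely what keeps everything inside $\SUBEXP$-space. Your Cauchy and interleaving arguments for parts~1 and~2 go through once this reparametrization is made; only the tail-sum estimate must be redone with $2^{-\log(k)^2}$ in place of $2^{-(k+1)}$, via Lemma~\ref{lem:summationlemma}.
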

\begin{proof}
We initially show 1. For each $k \geq 0$, since $\<f_n\>_{n=1}^{\infty}$ is $\PSPACE$-rapid almost everywhere convergent to $f$, there exists $c \in \N$ such that
\begin{align*}
\mu \left( \left\{x : \sup\limits_{n \geq 2^{\lceil \log(k)^c \rceil}} |f_n(x)-f(x)| \geq \frac{1}{2^{\log(2k)^2}} \right\} \right) \leq \frac{1}{2^{\log(2k)^2}}.
\end{align*}
Since $(\log(2k))^2=(\log(k)+1)^2\geq \log(k)^2+1$, we get
\begin{align*}
\mu \left( \left\{x : \sup\limits_{n \geq 2^{\lceil \log(k)^c \rceil}} |f_n(x)-f(x)| \geq \frac{1}{2^{\log(k)^2+1}} \right\} \right) \leq \frac{1}{2^{\log(k)^2+1}}.
\end{align*}
	It is easy to verify that
\begin{align*}
\mu \left( \left\{x : \sup\limits_{n \geq 2^{\lceil \log(k)^c \rceil }} |f_n(x)-f_{2^{\lceil \log(k)^c \rceil}}(x)| \geq \frac{1}{2^{\log(k)^2}} \right\} \right) \leq \frac{1}{2^{\log(k)^2}}.
\end{align*}	
Define
\begin{align*}
U_k = \left\{x:\max\limits_{2^{\lceil \log(k)^c \rceil} \leq n \leq 2^{\lceil \log(k+1)^c \rceil}} |f_n(x)-f_{2^{\lceil \log(k)^c \rceil}}(x)|> \frac{1}{2^{\lfloor \log(k)^2 \rfloor}}\right\}.
\end{align*}
Observe that
\begin{align*}
\mu(U_k) \leq \mu \left( \left\{x : \sup\limits_{n \geq 2^{\lceil \log(k)^c \rceil}}
|f_n(x)-f_{2^{\lceil \log(k)^c \rceil}}(x)| \geq \frac{1}{2^{\log(k)^2}} \right\} \right) \leq
\frac{1}{2^{\log(k)^2}}.	 
\end{align*}
Let $q$ be the controlling polynomial and let $M$ be the $\PSPACE$
machine witnessing the fact that $\<f_n\>_{n=1}^{\infty}$ is a
$\SUBEXP$-space sequence of simple functions. $U_k$ is hence a union of
cylinders of length at most 
\begin{align*}
2^{q(\log(2^{\lceil \log(k+1)^c \rceil}))}	= 2^{q(\lceil \log(k+1)^c \rceil)} 
\end{align*}
which is upper bounded by $2^{\log(k)^d}$ for some $d \in \N$. The fact that functions of the form $2^{\log(n)^i}$ are closed under composition enables us to obtain convergence on $\SUBEXP$-space randoms instead of $\EXP$ randoms as in Lemma \ref{lem:convergencelemma}. The machine $M$ can be used to
construct a machine $N$ that on input $(\sigma,1^k)$ outputs $1$ if $[\sigma]
\subseteq U_k$ and outputs $0$ otherwise. Let the  machine $N$ on input $(\sigma,1^k)$ do the following:

\begin{enumerate}
	\item If $\lvert \sigma \rvert < 2^{q(\lceil \log(k+1)^c \rceil)}$ then, output $0$.
	\item Compute $f_{2^{\lceil \log(k)^c \rceil}}(\sigma0^\infty)$ by running $M(1^{2^{\lceil \log(k)^c \rceil}}, \sigma) $ and store the result. 
	\item For each $n \in [2^{\lceil \log(k)^c \rceil},2^{\lceil \log(k+1)^c \rceil}]$ do the following:
	\begin{enumerate}
	\item Compute $f_{n}(\sigma0^\infty)$ by running $M(1^{n}, \sigma) $ and store the result.
	\item Check if $|f_n(\sigma0^\infty)-f_{2^{\lceil \log(k)^c \rceil}}(\sigma0^\infty)| \geq \frac{1}{2^{\lfloor \log(k)^2 \rfloor}}$. If so, output $1$.
	\end{enumerate}
	\item Output $0$
\end{enumerate}
Since $N$ rejects any $\sigma$ with length less than $2^{q(\lceil \log(k+1)^c \rceil)}$, the simulation of $M(1^{2^{\lceil \log(k)^c \rceil}}, \sigma)$ is always a polynomial space operation. Hence, $N$ witnesses the fact that $\<U_k\>_{k=1}^{\infty}$ is an $\EXP$ sequence of open sets.

Define
\begin{align*}
V_k = \bigcup\limits_{i=2(2k^2+1)}^{\infty} U_i.	
\end{align*}
It follows from Lemma \ref{lem:mpolylogsummation} that $\mu(V_k)\leq 2^{-\log(k)^2}$ and from the above observations it can be verified that $\<V_k\>_{k=1}^{\infty}$ is an $\SUBEXP$-space test.

If $x \in \Sigma^\infty$ is an $\SUBEXP$-space random then $x$ is in at most
finitely many $V_k$ and hence in only finitely many $U_k$. Hence, for
some $k$ and for all $n \geq 2^{\lceil \log(k)^c \rceil}$ using Lemma \ref{lem:summationlemma} we have, 
\begin{align*}
\left|f_n(x)-f_{2^{\lceil \log(k)^c \rceil}}(x)\right|	 \leq \sum\limits_{j=k}^{\infty}
\frac{1}{2^{\log(j)^2}} \leq \sum\limits_{j=k}^{\infty}
\frac{1}{j^{\lfloor \log(k) \rfloor}} \leq \frac{1}{(k-1)^{(\lfloor \log(k) \rfloor-1)}}.
\end{align*}
This shows that $f_n(x)$ is a Cauchy sequence. This completes the
proof of 1. 

Given $\<g_n\>_{n=1}^{\infty}$ which is $\PSPACE$-rapid almost everywhere
convergent to $f$, the interleaved sequence $f_1,g_1,f_2,g_2,f_3,g_3
\dots$ can be easily verified to be $\PSPACE$-rapid almost everywhere
convergent to $f$. 2 now follows directly from 1.  
\end{proof}
The following immediately follows from the above lemma.  
\begin{corollary}
\label{cor:subexpconvergencelemmacorollary1}
Let $f \in L^1 (\Sigma^\infty,\mu)$ be a $\SUBEXP$-space $L^1$-computable
function with an $L^1$ approximating $\SUBEXP$-space sequence of simple functions
$\<f_n\>_{n=1}^{\infty}$. Then,
\begin{enumerate}
\item $\lim\limits_{n \to \infty}f_n(x)$ exists for all $\SUBEXP$
  random $x$. 
\item Given a $\SUBEXP$-space sequence of simple functions
  $\<g_n\>_{n=1}^{\infty}$ $L^1$ approximating $f$, $\lim\limits_{n \to
  \infty}g_n(x)=\lim\limits_{n \to \infty}f_n(x)$ for all $\SUBEXP$
  random $x$. 
\end{enumerate}	
\end{corollary}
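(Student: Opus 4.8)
The plan is to mirror the proof of Corollary~\ref{cor:convergencelemmacorollary1} from the $\PSPACE$ setting: I would reduce everything to Lemma~\ref{lem:subexpconvergencelemma} by first verifying that the approximating sequence $\<f_n\>_{n=1}^{\infty}$ is $\PSPACE$-rapid almost everywhere convergent to $f$. Once this is established, part 1 of the corollary is immediate from part 1 of Lemma~\ref{lem:subexpconvergencelemma}. For part 2, the identical argument applied to any other $\SUBEXP$-space $L^1$-approximation $\<g_n\>_{n=1}^{\infty}$ of $f$ shows that $\<g_n\>_{n=1}^{\infty}$ also converges $\PSPACE$-rapid almost everywhere to $f$, so part 2 of Lemma~\ref{lem:subexpconvergencelemma} applies.

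To establish the rapid almost everywhere convergence I would use the equivalent characterization of $\PSPACE$-rapid almost everywhere convergence given above (the $\SUBEXP$-space counterpart established via the technique of Lemma~\ref{lem:pspacerapidequivalence}), and produce a polynomial $p$ such that for all $m_1,m_2$,
\begin{align*}
\mu\left(\left\{x : \sup\limits_{n \geq 2^{p(\log(m_1)+\log(m_2))}} |f_n(x)-f(x)| \geq \frac{1}{m_1^{\log(m_1)}}\right\}\right) \leq \frac{1}{m_2^{\log(m_2)}}.
\end{align*}
The first step is a union bound followed by Markov's inequality on each term: since $\<f_n\>_{n=1}^{\infty}$ is a $\SUBEXP$-space $L^1$-approximation, Definition~\ref{def:subexpl1computablefunction} gives $\lVert f_n-f\rVert_1 \leq n^{-\log(n)}$, so each term $\mu(\{x : |f_n(x)-f(x)| \geq m_1^{-\log(m_1)}\})$ is bounded by $n^{-\log(n)} m_1^{\log(m_1)}$.

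The crucial quantitative step is to set the threshold $N(m_1,m_2)=2(2(m_1 m_2)^2+1)$ and control the resulting tail $m_1^{\log(m_1)}\sum_{n \geq N(m_1,m_2)} n^{-\log(n)}$. By the $1/n^{\log(n)}$ variant of Lemma~\ref{lem:mpolylogsummation} (with $m$ replaced by $m_1 m_2$), this tail is at most $m_1^{\log(m_1)}(m_1 m_2)^{-\log(m_1 m_2)}$. The remaining estimate is the supermultiplicativity $\log(m_1 m_2)^2=(\log(m_1)+\log(m_2))^2 \geq \log(m_1)^2+\log(m_2)^2$, which yields $(m_1 m_2)^{-\log(m_1 m_2)} \leq m_1^{-\log(m_1)} m_2^{-\log(m_2)}$ and hence the desired bound $m_2^{-\log(m_2)}$ after cancelling the factor $m_1^{\log(m_1)}$. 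Finally $N(m_1,m_2) \leq 5\, m_1^2 m_2^2$, so $N(m_1,m_2)$ is of the required form $2^{p(\log(m_1)+\log(m_2))}$ for a linear polynomial $p$, completing the verification.

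The Markov and union-bound estimates are routine; the main obstacle is the arithmetic with the polylogarithmic decay rate $n^{-\log(n)}=2^{-\log(n)^2}$, where the geometric-series bookkeeping of the $\PSPACE$ case no longer applies. The structural fact that makes the plan go through is precisely the closure of the functions $2^{-\log(n)^k}$ under the relevant tail summations, captured by Lemmas~\ref{lem:summationlemma} and~\ref{lem:mpolylogsummation}, which here substitute for the geometric sums used in Corollary~\ref{cor:convergencelemmacorollary1}.
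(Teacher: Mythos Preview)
Your proposal is correct and follows essentially the same argument as the paper: you reduce to Lemma~\ref{lem:subexpconvergencelemma} by verifying $\PSPACE$-rapid almost everywhere convergence via a union bound plus Markov's inequality, with the same threshold $N(m_1,m_2)=2(2(m_1 m_2)^2+1)$, the same appeal to the $1/n^{\log(n)}$ variant of Lemma~\ref{lem:mpolylogsummation}, and the same supermultiplicativity estimate $(\log(m_1)+\log(m_2))^2\geq \log(m_1)^2+\log(m_2)^2$ to cancel the $m_1^{\log(m_1)}$ factor.
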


\begin{proof}
	For
any $m_1,m_2 \geq 0$,
\begin{align*}
\mu \left( \left\{x : \sup\limits_{n \geq 2(2(m_1m_2)^2+1)} |f_n(x)-f(x)|
\geq \frac{1}{2^{\log(m_1)^2}} \right\} \right) &\leq \sum\limits_{n=2(2(m_1m_2)^2+1)}^{\infty} \mu \left( \left\{x : |f_n(x)-f(x)| \geq
\frac{1}{2^{\log(m_1)^2}} \right\} \right)
\\ &\leq  \sum\limits_{n=2(2(m_1m_2)^2+1)}^{\infty} \lVert f_n-f\rVert_1 2^{\log(m_1)^2} 
\\ &\leq 2^{\log(m_1)^2}\sum\limits_{n=2(2(m_1m_2)^2+1)}^{\infty} \frac{1}{2^{\log(n)^2}} 
\\ &\leq \frac{2^{\log(m_1)^2}}{2^{\log(m_1m_2)^2}}
\\ &\leq \frac{2^{\log(m_1)^2}}{2^{\log(m_1)^2+\log(m_2)^2}}
\\ &=\  \frac{1}{2^{\log(m_2)^2}}.
\end{align*}
Hence, $\<f_n\>_{n=1}^{\infty}$ is $\PSPACE$-rapid almost everywhere
convergent to $f$. The claim now follows from Lemma \ref{lem:subexpconvergencelemma}.
\end{proof}

The following are $\SUBEXP$-space analogues of Lemma \ref{lem:pspacesequencelemma} and Lemma \ref{lem:pspacefunctionintegral}.

\begin{lemma}
\label{lem:subexpfunctionintegral}
Let $f$ be a $\SUBEXP$-space $L^1$-computable function. Let $I_f:\Sigma^\infty \to \Sigma^\infty$ be the constant function taking the value $\int f d\mu$ over all $x \in \Sigma^\infty$. Then, $I_f$ is $\SUBEXP$-space $L^1$-computable and $\widetilde{I_f}(x)=\int f d\mu$ for all $\SUBEXP$-space random $x$.
\end{lemma}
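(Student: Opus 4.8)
The plan is to follow the structure of the proof of Lemma~\ref{lem:pspacefunctionintegral}, replacing each $\PSPACE$ resource bound by its $\SUBEXP$-space counterpart. I would fix a $\SUBEXP$-space sequence of simple functions $\<f_n\>_{n=1}^{\infty}$, with controlling polynomial $p$ and computing machine $M$, witnessing that $f$ is $\SUBEXP$-space $L^1$-computable, so that $\lVert f - f_n\rVert_1 \leq n^{-\log(n)}$ and each $f_n$ is supported on cylinders of length $2^{p(\log(n))}$. I then introduce the sequence $\<f'_n\>_{n=1}^{\infty}$ in which each $f'_n$ is the constant function taking the value $\int f_n\,d\mu$. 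Since $\lVert f'_n - I_f\rVert_1 = \lvert \int f_n\,d\mu - \int f\,d\mu\rvert \leq \lVert f_n - f\rVert_1 \leq n^{-\log(n)}$, this sequence is an $L^1$-approximation of $I_f$ with exactly the decay rate demanded by Definition~\ref{def:subexpl1computablefunction}; it remains only to exhibit it as a genuine $\SUBEXP$-space sequence of simple functions.

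Since each $f'_n$ is constant, condition~1 of Definition~\ref{def:subexpsequenceofsimplefunctions} is trivial: write $f'_n = (\int f_n\,d\mu)\,\chi_{\epsilon}$, padded to any cylinder length dictated by the controlling polynomial selected below. The substance is condition~2, namely a $\PSPACE$ machine $M'$ that on input $(1^n,\sigma)$ outputs $\int f_n\,d\mu$ once $|\sigma|$ is large enough. This machine would iterate over all strings $\alpha$ of length $2^{p(\log(n))}$, evaluate $f_n(\alpha 0^\infty) = M(1^n,\alpha)$, weight by $\mu([\alpha])$ (approximated via the measure machine, or simply $2^{-2^{p(\log(n))}}$ in the uniform case), and accumulate the running sum.

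The main obstacle, and the only real departure from the $\PSPACE$ proof, is the space accounting. Naively this looks expensive, since there are $2^{2^{p(\log(n))}}$ strings $\alpha$ to enumerate. However, the counter holding the current $\alpha$ needs only $2^{p(\log(n))}$ cells, each call to $M$ reuses space polynomial in its input length $n + 2^{p(\log(n))}$, and because rationals of representation length $\ell$ have magnitude at most $2^{\ell^c}$, the accumulator remains bounded in magnitude by the maximum value attained by $f_n$ and is representable in $2^{p(\log(n))} + t(n+2^{p(\log(n))})^c$ cells, where $t$ is the polynomial space bound of $M$. All of these quantities have the form $2^{\mathrm{poly}(\log(n))}$; letting $q$ be a polynomial upper bound for the exponent, the entire computation of $\int f_n\,d\mu$ runs in $2^{q(\log(n))}$ space. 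The key trick is then to \emph{pad}: choose the controlling polynomial $p'$ of $\<f'_n\>_{n=1}^{\infty}$ with $p' \geq q$, so that $M'$ produces output only when $|\sigma| \geq 2^{p'(\log(n))} \geq 2^{q(\log(n))}$. Relative to this long input the space used is at most linear in $|\sigma|$, hence $M'$ is a $\PSPACE$ machine in the sense of Definition~\ref{def:subexpsequenceofsimplefunctions}. This is precisely the mechanism that absorbs a quasi-polynomial computation into the $\SUBEXP$-space framework, and is the analogue of the polynomial padding implicit in Lemma~\ref{lem:pspacefunctionintegral}.

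With $I_f$ established as $\SUBEXP$-space $L^1$-computable via $\<f'_n\>_{n=1}^{\infty}$, the pointwise claim follows immediately: the constants $f'_n(x) = \int f_n\,d\mu$ converge to $\int f\,d\mu$ for every $x$, and by part~2 of Lemma~\ref{lem:subexpconvergencelemma} (equivalently Corollary~\ref{cor:subexpconvergencelemmacorollary1}) the value $\widetilde{I_f}(x) = \lim_{n\to\infty} f'_n(x)$ is independent of the approximating sequence for every $\SUBEXP$-space random $x$. Hence $\widetilde{I_f}(x) = \int f\,d\mu$ on all $\SUBEXP$-space randoms, which completes the proof.
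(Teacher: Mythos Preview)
Your proposal is correct and follows essentially the same approach as the paper: define $f'_n$ as the constant $\int f_n\,d\mu$, verify the $L^1$ rate, build a machine that enumerates all length-$2^{p(\log n)}$ strings and averages the values of $M$, and use the padding trick (refusing to answer on short $\sigma$) to make the space usage polynomial in the input length; then invoke Lemma~\ref{lem:subexpconvergencelemma} for the pointwise conclusion. Your discussion of the padding mechanism is in fact more explicit than the paper's, which simply writes ``since $N$ rejects any $\sigma$ with $|\sigma| < 2^{\lceil p(\log(n))\rceil}$, calculating the running sum can be done in polynomial space'' without spelling out why.
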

\begin{proof}
Let $\<f_n\>_{n=1}^{\infty}$ be a $\SUBEXP$-space sequence of simple functions, $M$ be a $\PSPACE$ machine and $p$ be a controlling polynomial witnessing the fact that $f$ is $\SUBEXP$-space $L^1$-computable. We construct a $\SUBEXP$-space sequence of simple functions $\<f'_n\>_{n=1}^{\infty}$ where each $f'_n$ is the constant function taking the value $\int f_n d\mu$. Since $\lVert f'_n - I_f \rVert_1 = \lvert \int f_n d\mu - \int f d\mu \rvert \leq \lVert f_n -f \rVert_1 \leq  n^{-\log(n)}$, it follows that $I_f$ is $\SUBEXP$ $L^1$-computable. Now, from part 2 of Lemma \ref{lem:subexpconvergencelemma}, we get that $\widetilde{I_f}(x)=\lim\limits_{n \to \infty}f'_n(x)=\lim\limits_{n \to \infty}\int f_n d\mu=\int f d\mu$ for all $\SUBEXP$-space random $x$.

On input $(1^n,\sigma)$, let machine $N$ do the following:
\begin{enumerate}
	\item If $|\sigma| < 2^{\lceil p (\log(n)) \rceil}$ output $?$. Else, do the following:
	\item Let $\mathrm{Sum}=0$.
	\item For each $\alpha \in \Sigma^{2^{\lceil p (\log(n)) \rceil}}$ do the following:
	\begin{enumerate}
		\item Run $M(1^n,\alpha)$ and add the result to $\mathrm{Sum}$.
	\end{enumerate}
	\item Output $\mathrm{Sum}/2^{2^{\lceil p (\log(n)) \rceil}}$.
\end{enumerate}
Let $t$ be a polynomial upper bound for the space complexity of $M$. Then, the result of $M(1^n,\alpha)$ is always upper bounded by $2^{t(n+2^{\lceil p (\log(n)) \rceil})}$ and representable in $t(n+2^{\lceil p (\log(n)) \rceil})$ space. The sum of at most $2^{2^{\lceil p (\log(n)) \rceil}}$ many such numbers is upper bounded by $2^{t(n+2^{\lceil p (\log(n)) \rceil})+2^{\lceil p (\log(n)) \rceil}}$ and representable in $t(n+2^{\lceil p (\log(n)) \rceil})+2^{\lceil p (\log(n)) \rceil}$ space. Since $N$ rejects any $\sigma$ with $|\sigma| < 2^{\lceil p (\log(n)) \rceil}$, calculating the running sum can be done in polynomial space.  Dividing the running sum by $2^{2^{\lceil p (\log(n)) \rceil}}$ can also be done in polynomial space due to the same reason. Hence, $N$ is a $\PSPACE$ machine computing the sequence $\<f'_n\>_{n=1}^{\infty}$ where each $f'_n=\int f_n d\mu$.
\end{proof}

\begin{lemma}
\label{lem:subexpsequencelemma}
Let $f$ be a $\SUBEXP$-space $L^1$-computable function with an $L^1$ approximating
$\SUBEXP$-space sequence of simple functions $\<f_n\>_{n=1}^{\infty}$. Let
$T$ be a $\PSPACE$ simple transformation and $p$ be a
polynomial. Then, $\<A_n^{f_{p(n)}}\>_{n=1}^{\infty}$ is a $\SUBEXP$-space
sequence of simple functions.
\end{lemma}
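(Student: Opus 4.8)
The plan is to follow the proof of Lemma \ref{lem:pspacesequencelemma} almost verbatim, replacing the polynomial cylinder-length bounds by the $2^{\mathrm{poly}(\log n)}$ bounds required by Definition \ref{def:subexpsequenceofsimplefunctions} and re-checking that the witnessing machine stays within polynomial space. Let $q$ be a controlling polynomial and $M_f$ a $\PSPACE$ machine witnessing that $\<f_n\>_{n=1}^{\infty}$ is a $\SUBEXP$-space sequence of simple functions, so that each $f_{p(n)}$ is supported on cylinders of length $2^{q(\log(p(n)))}$; let $c$ be a controlling constant for $T$. By Lemma \ref{lem:pspacetransformationcomputation}, each $f_{p(n)}\circ T^i$ is supported on cylinders of length at most $2^{q(\log(p(n)))}+ci$, so for $0\le i\le n-1$ the average $A_n^{f_{p(n)}}$ is supported on cylinders of length at most $2^{q(\log(p(n)))}+cn$.

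First I would produce the controlling polynomial, i.e. a polynomial $r$ with $2^{q(\log(p(n)))}+cn\le 2^{r(\log n)}$, thereby verifying condition~1 of Definition \ref{def:subexpsequenceofsimplefunctions}. This reduces to the closure of the class $\{2^{s(\log n)}:s\ \text{a polynomial}\}$ under addition and under composition with polynomials: since $p$ is a polynomial we have $\log(p(n))=O(\log n)$, so $q(\log(p(n)))$ is bounded by a polynomial in $\log n$ and $2^{q(\log(p(n)))}=2^{\mathrm{poly}(\log n)}$, while $cn=2^{\log n+\log c}=2^{\mathrm{poly}(\log n)}$; the sum is at most twice the larger term, and taking $r$ to dominate the maximum of the two exponents (plus one) gives the bound.

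For condition~2 I would reuse the machine of Lemma \ref{lem:pspacesequencelemma} unchanged in logic, driving it with the machine $M'$ of Lemma \ref{lem:pspacetransformationcomputation} and with $M_f$, but with the length thresholds raised to $2^{q(\log(p(n)))}$ and $2^{r(\log n)}$: on input $(1^n,\alpha)$ it outputs $?$ when $|\alpha|<2^{r(\log n)}$, and otherwise accumulates $M_f(1^{p(n)},\sigma)$ over all $\sigma\in\Sigma^{2^{q(\log(p(n)))}}$ and all $0\le i\le n-1$ for which $M'(1^i,\sigma,\alpha)=1$, finally dividing by $n$. Correctness is immediate, since $|\alpha|\ge 2^{r(\log n)}\ge 2^{q(\log(p(n)))}+cn\ge|\sigma|+ci$ forces $M'$ and $M_f$ to return definite values.

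The one place the argument genuinely departs from the $\PSPACE$ setting, and the step I expect to be the main obstacle, is the space analysis: the cylinder length $2^{q(\log(p(n)))}$ --- hence the length of each enumerated $\sigma$ --- is superpolynomial in $n$. The resolution is the same device used in the earlier $\SUBEXP$-space lemmas (e.g. Lemma \ref{lem:subexpfunctionintegral}): because $N$ produces output only when $|\alpha|\ge 2^{r(\log n)}\ge 2^{q(\log(p(n)))}$, every internal object --- the string $\sigma$, the counter enumerating such strings, the counter for $i\le n$, and the working tapes of the $\PSPACE$ machines $M'(1^i,\sigma,\alpha)$ and $M_f(1^{p(n)},\sigma)$ --- is of size polynomial in the input length $n+|\alpha|$, so $N$ is a $\PSPACE$ machine as required. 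The only bookkeeping to watch is the running sum, which stays of polynomial representation length as a sum of at most $n\,k(p(n))$ rationals each of $\PSPACE$-bounded size.
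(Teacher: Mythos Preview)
Your proposal is correct and follows essentially the same approach as the paper: you identify the cylinder-length bound $2^{q(\log(p(n)))}+cn$, argue it is dominated by $2^{r(\log n)}$ for some polynomial $r$, reuse the machine of Lemma~\ref{lem:pspacesequencelemma} with the new length thresholds, and resolve the space issue by noting that rejecting all $\alpha$ with $|\alpha|<2^{r(\log n)}$ makes every internal quantity polynomial in $n+|\alpha|$. The paper's proof is organized identically, and your explicit remark that this last point is the only genuine departure from the $\PSPACE$ case is exactly the observation the paper relies on.
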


\begin{proof}
	Let $q$ be a controlling polynomial and $M_f$ be a machine
        witnessing the fact that $\<f_n\>_{n=1}^{\infty}$ is a
        $\SUBEXP$-space sequence of simple functions. Let $c_T$ be a controlling
        constant witnessing the fact that $T$
        is a $\PSPACE$ simple transformation. For any $n \geq 1$, we
        have 
	\begin{align*}
	A^{f_{p(n)}}_n  = \frac{f_{p(n)}+f_{p(n)}\circ T+f_{p(n)}\circ
          T^2+\dots f_{p(n)}\circ T^n}{n}.
	\end{align*}
	The functions $\<f_{p(n)}\circ T^i\>_{i=1}^{n}$ are simple
        functions defined on cylinders of length at most $2^{q(\log(p(n)))}+c_T
        n$. Hence, $r(n)=2^{q(\lceil\log(p(n))\rceil)}+c_T n$, which can be upper bounded by $2^{\log(n)^d}$ for some $d \in \N$ is a
        controlling function for the sequence of functions
        $\<A^{f_{p(n)}}_n\>_{n=1}^{\infty}$ as in condition 1 of
        Definition \ref{def:subexpsequenceofsimplefunctions}. Now, we verify condition 2 of Definition
        \ref{def:subexpsequenceofsimplefunctions}. Let $M$ be the machine from Lemma \ref{lem:pspacetransformationcomputation}. We construct a
        machine $N$ such that for each $n \in \N$ and $\sigma \in
        \Sigma^*$, 
	\begin{align*}
		N(1^n,\sigma)=\begin{cases}
                A^{f_{p(n)}}_n(\sigma0^\infty) & \text{if } |\sigma| \geq r(n)\\
                ? & \text{otherwise.}
                \end{cases}
	\end{align*}
		On input $(1^n,\alpha)$ if $|\alpha|<r(n)=$ then $N$ outputs $?$ else it operates as follows:
		\begin{enumerate}
			\item Let $\mathrm{Sum}=0$
			\item For each $i \in [1,n]$, do the following:
			\begin{enumerate}
				\item For each string $\sigma $ of length $2^{q(\lceil\log(p(n))\rceil)}$, do the following:
				\begin{enumerate}
				\item If $M(1^i,\sigma,\alpha)=1$, then let $\mathrm{Sum}=\mathrm{Sum}+M_f(1^{p(n)},\sigma)$.
				\end{enumerate}
			 \end{enumerate}
			 \item Output $\mathrm{Sum}/n$.
		\end{enumerate}
		
		 If $t_1$ is a polynomial upper bound for the space complexity of $M$ and $t_2$ is a polynomial upper bound for the space complexity of $M_f$ then, each $f_{p(n)}\circ T^i$ can be computed in $O(t_1(2^{q(\lceil\log(p(n))\rceil)}+c_T n+n)+t_2(2^{q(\lceil\log(p(n))\rceil)}+p(n)))$ space for any $i \leq n$. Since every $\alpha$ such that $|\alpha|<r(n)=2^{q(\lceil\log(p(n))\rceil)}+c_T n$ is rejected by $N$, the computations of $f_{p(n)}\circ T^i$  is done in polynomial space. The results of computations of $f_{p(n)}\circ T^i$ for $i\in [1,n]$ can be added up and divided by $n$ in polynomial space. $N$ outputs the result of this computation. Since $N$ is a $\PSPACE$ machine, the proof is complete.
\end{proof} 
Now, we proceed onto proving the $\SUBEXP$-space ergodic theorem. 

\begin{theorem}
\label{thm:subexpergodictheorem}
Let $T:(\Sigma^\infty,\mathcal{B}(\Sigma^\infty),\mu) \to (\Sigma^\infty,\mathcal{B}(\Sigma^\infty),\mu)$ be a $\SUBEXP$-space ergodic measure preserving
transformation. Then, for any $\SUBEXP$-space $L^1$-computable $f$, $\lim\limits_{n \to \infty}\widetilde{A^f_n}=\int f
d\mu$ on $\SUBEXP$-space randoms.
\end{theorem}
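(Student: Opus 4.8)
The plan is to mirror the proof of Theorem \ref{thm:pspaceergodictheorem} step for step, substituting each $\PSPACE$ ingredient by the $\SUBEXP$-space counterpart developed in this section. Fix a $\SUBEXP$-space sequence of simple functions $\<f_m\>_{m=1}^{\infty}$ which $L^1$-approximates $f$. For each fixed $n$, the sequence $\<A_n^{f_m}\>_{m=1}^{\infty}$ is again a $\SUBEXP$-space sequence of simple functions $L^1$-approximating $A_n^f$ with the same rate of convergence, since averaging is linear and commutes with the $\PSPACE$ simple transformation $T$.

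First I would produce the approximating sequence $g_n = A_n^{f_{p(n)}}$ and show it converges to $\int f d\mu$ on $\SUBEXP$-space randoms. By Lemma \ref{lem:subexpsequencelemma}, $\<g_n\>_{n=1}^{\infty}$ is a $\SUBEXP$-space sequence of simple functions. Since $T$ is $\SUBEXP$-space ergodic, $\int f d\mu$ is a $\PSPACE$-rapid $L^1$-limit point of $A_n^f$, so Theorem \ref{thm:aeconvergence} gives that $A_n^f$ is $\PSPACE$-rapid almost everywhere convergent to $\int f d\mu$. Combining this with a Corollary \ref{cor:subexpconvergencelemmacorollary1}-style tail estimate controlling $\lVert A_n^{f_m} - A_n^{f_{p(n)}} \rVert_1$, I obtain that $\<g_n\>_{n=1}^{\infty}$ is itself $\PSPACE$-rapid almost everywhere convergent to $\int f d\mu$. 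Lemma \ref{lem:subexpconvergencelemma} then shows $\lim_{n \to \infty} g_n(x) = \widetilde{I_f}(x)$ on all $\SUBEXP$-space randoms, and Lemma \ref{lem:subexpfunctionintegral} identifies this common limit as $\int f d\mu$.

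The remaining and decisive step is to show $\lim_{n \to \infty} \widetilde{A^f_n}(x) = \lim_{n \to \infty} g_n(x)$ on $\SUBEXP$-space randoms. I would define, for $n,i \geq 0$,
\begin{align*}
U_{n,i} = \left\{x : \max_{p(n+i) \leq m \leq p(n+i+1)} \left|A_n^{f_m}(x) - A_n^{f_{p(n+i)}}(x)\right| \geq \frac{1}{2^{\log(n+i)^2}}\right\},
\end{align*}
bound $\mu(U_{n,i}) \leq 2^{-\log(n+i)^2}$, and verify via the machine construction of Lemma \ref{lem:subexpsequencelemma} (simulating $M_f$ together with the machine of Lemma \ref{lem:pspacetransformationcomputation}) that the $U_{n,i}$ are $\SUBEXP$-space approximable with controlling length of the form $2^{\log(n+i)^d}$. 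Setting $V_m = \bigcup_{n+i=m} U_{n,i}$ and bounding $\mu(V_m) \leq m\, 2^{-\log(m)^2}$ with Lemma \ref{lem:mpolylogsummation}, I would argue that $\<V_m\>_{m=1}^{\infty}$ is a $\SUBEXP$-space Solovay test. By Lemma \ref{lem:subexpsolovay_equivalence} every $\SUBEXP$-space random $x$ lies in only finitely many $V_m$, hence in only finitely many $U_{n,i}$, which forces $|A_n^{f_m}(x) - g_n(x)|$ to be summably small for all large $n$ and all $m \geq p(n)$, yielding $\lim_{n\to\infty} \widetilde{A^f_n}(x) = \lim_{n\to\infty} g_n(x) = \int f d\mu$.

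The main obstacle is precisely the arithmetic of the $2^{-\log(n)^2}$ decay rate. In the $\PSPACE$ proof the measures $\mu(U_{n,i}) \leq 2^{-(n+i+1)}$ decay geometrically, so $\mu(V_m) \leq m\, 2^{-m}$ and the Solovay summability is immediate; here the decay is only $2^{-\log(k)^2}$, so I must invoke Lemmas \ref{lem:summationlemma} and \ref{lem:mpolylogsummation} to show that the tail sums $\sum_{n > p(m)} \mu(V_n)$ are bounded by $m^{-\log m}$, and must verify that the controlling lengths, being of the form $2^{\log(n)^d}$ and closed under composition, keep the witnessing machines within polynomial space (exactly as exploited in Lemma \ref{lem:subexpconvergencelemma}). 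This careful bookkeeping, rather than any new conceptual device, is what legitimizes the $\SUBEXP$-space Solovay test and ultimately enables the exact converse of Theorem \ref{thm:conversetosubexpergodictheorem}.
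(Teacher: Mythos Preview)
Your proposal is correct and follows essentially the same route as the paper's own proof: the same auxiliary sequence $g_n=A_n^{f_{p(n)}}$, the same sets $U_{n,i}$ with threshold $2^{-\log(n+i)^2}$, the same diagonal union $V_m=\bigcup_{n+i=m}U_{n,i}$, and the same appeal to Lemmas \ref{lem:subexpsequencelemma}, \ref{lem:subexpconvergencelemma}, \ref{lem:subexpfunctionintegral}, \ref{lem:mpolylogsummation}, and \ref{lem:subexpsolovay_equivalence}. Your identification of the $2^{-\log(n)^2}$ tail arithmetic as the only new bookkeeping is exactly what the paper does as well.
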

\begin{proof}
Let $\<f_m\>_{m=1}^{\infty}$ be any $\SUBEXP$-space sequence of simple
functions $L^1$ approximating $f$. We initially approximate $A_n^f$ with a
$\SUBEXP$-space sequence of simple functions $\<g_n\>_{n=1}^{\infty}$ which
converges to $\int f d\mu$ on $\SUBEXP$-space randoms. Then we show that
$\widetilde{A}^f_n$ has the same limit as $g_n$ on $\SUBEXP$-space randoms.
	
For each $n$, it is easy to verify that $\<A^{f_m}_n\>_{m=1}^{\infty}$ is a $\SUBEXP$
sequence of simple functions $L^1$ approximating $A^f_n$ with the same rate
of convergence. Using techniques similar to those in Lemma
\ref{lem:subexpconvergencelemma} and Corollary
\ref{cor:subexpconvergencelemmacorollary1}, we can obtain a polynomial $p$
such that
\begin{align*}
  \mu \left( \left\{x : \sup\limits_{m \geq p(n+i)}
  |A^{f_m}_n(x)-A^{f_{p(n+i)}}_n(x)| \geq \frac{1}{2^{\log(n+i)^2}} \right\}
  \right) \leq \frac{1}{2^{\log(n+i)^2}}. 
\end{align*}
For every $n>0$, let $g_n=A^{f_{p(n)}}_n$. We initially show that $\<g_n\>_{n=1}^{\infty}$ converges to
$\int f d\mu$ on $\SUBEXP$-space randoms. Let $m_1,m_2 \geq 0$. From
Theorem \ref{thm:aeconvergence}, $A^f_n$ is $\PSPACE$-rapid almost
everywhere convergent to $\int f d\mu$. Hence there is a $d \in \N$ such that 
\begin{align*}
  \mu \left( \left\{x : \sup\limits_{n \geq 2^{(\log(m_1)+\log(m_2))^d}}
  |A^f_n(x)-\int f d\mu| \geq \frac{1}{2^{\log(m_1)^2+1}} \right\} \right)
  \leq \frac{1}{2^{\log(m_2)^2+1}}. 
\end{align*}
Let $N(m_1,m_2)=\max\{2((2m_2)^2+1),2^{\log(m_1+m_2)^d}\}$. Using Lemma \ref{lem:mpolylogsummation},
\begin{align*}
  \sum\limits_{n \geq N(m_1,m_2)} \frac{1}{2^{\log(n)^2}} \leq \sum\limits_{n \geq 2((2m_2)^2+1)} \frac{1}{2^{\log(n)^2}} \leq \frac{1}{2^{\log(2m_2)^2}}
  \end{align*}
Now, we have  
\begin{align*}
  \mu \left( \left\{x : \sup\limits_{n \geq N(m_1,m_2)} |g_n-\int f
  d\mu| > \frac{1}{2^{m_1}} \right\} \right) &\leq \sum\limits_{n \geq
    N(m_1,m_2)} \mu \left( \left\{x :  |g_n-A^f_n(x)| >
  \frac{1}{2^{\log(m_1)^2+1}}  \right\} \right)\\ 
  &+ \mu \left( \left\{x : \sup\limits_{n \geq 2^{\log(m_1+m_2)^d}}
  |A^f_n(x)-\int f d\mu| \geq \frac{1}{2^{\log(m_1)^2+1}} \right\} \right) \\ 
  &\leq \sum\limits_{n \geq N(m_1,m_2)} \frac{1}{2^{\log(n)^2}} +
  \frac{1}{2^{\log(m_2)^2+1}} \\ 
   &\leq \frac{1}{2^{\log(2m_2)^2}} + \frac{1}{2^{\log(m_2)^2+1}}\\
  &\leq \frac{1}{2^{\log(m_2)^2+1}} + \frac{1}{2^{\log(m_2)^2+1}}\\
  &\leq \frac{1}{2^{\log(m_2)^2}}.
\end{align*}
Hence,
$g_n$ is $\PSPACE$-rapid almost everywhere convergent to $\int f
d\mu$. From Lemma \ref{lem:subexpsequencelemma} it follows that
$\<g_n\>_{n=1}^{\infty}=\<A^{f_{p(n)}}_n\>_{n=1}^{\infty}$ is a
$\SUBEXP$-space sequence of simple functions (in parameter $n$). Let $I_f:\Sigma^\infty \to \Sigma^\infty$ be the constant function taking the value $\int f d\mu$ over all $x \in \Sigma^\infty$. From the above
observations and Lemma \ref{lem:subexpconvergencelemma} we get
that $\lim\limits_{n \to \infty}g_n (x)=\widetilde{I_f}(x)$ for any $x$ which
is $\SUBEXP$-space random. Now, from Lemma \ref{lem:subexpfunctionintegral}, we get that $\lim\limits_{n \to \infty}g_n (x)=\int f d\mu$ for any $x$ which
is $\SUBEXP$-space random.

We now show that $\lim\limits_{n \to \infty} \widetilde{A}^f_n =
\lim\limits_{n \to \infty} g_n$ on $\SUBEXP$-space randoms. Define 
\begin{align*}
U_{n,i} = \left\{x:\max\limits_{p(n+i) \leq m \leq p(n+i+1)}
|A^{f_m}_n(x)-A^{f_{p(n+i)}}_n(x)| \geq \frac{1}{2^{\log(n+i)^2}}\right\}. 	
\end{align*}
We already know $\mu(U_{n,i})\leq 2^{-\log(n+i)^2}$. $U_{n,i}$ can be shown to be polynomial space
approximable in parameters $n$ and $i$ in the following sense. There exists a sequence of sets of strings $\<S_{n,i}\>_{i,n \in \N}$ and polynomial $p$ satisfying the following conditions:
\begin{enumerate}
  \item $U_{n,i} = [S_{n,i}]$.
   \item There exists a \emph{controlling polynomial} $r$ such
    that $max\{|\sigma|:\sigma \in  S_{n,i}\}\leq 2^{r(\log(n)+\log(i))}$.
  \item The function $g:\Sigma^* \times 1^*\times 1^* \to \{0,1\}$
    such that
    \begin{align*}
      g(\sigma,1^n,1^i)=
      \begin{cases}
	1 &\text{if }\sigma \in S_{n,i}\\ 0 &\text{otherwise,}
      \end{cases}	
    \end{align*}
    is decidable by a $\PSPACE$ machine.
\end{enumerate}
The above claims can be established by using techniques similar to those in Lemma \ref{lem:subexpsequencelemma} and Lemma \ref{lem:subexpconvergencelemma}. We show the construction of a machine $N$ computing the function $g$ above. Let $M_f$ be a computing machine and let $q$ be a controlling polynomial for $\<f_n\>_{n=1}^{\infty}$. Let $c$ be a controlling constant for $T$. Let $M'$ be the machine from Lemma \ref{lem:pspacetransformationcomputation}. Machine $N$ on input $(\sigma,1^n,1^i)$ does the following:
\begin{enumerate}
	\item If $|\sigma| > 2^{q(\lceil\log(p(n+i+1))\rceil)}+cn$, then output $0$.
	\item Compute $A_n^{f_{p(n+i)}}(\sigma0^\infty)$ as in Lemma \ref{lem:subexpsequencelemma} by using $M_f$ and $M'$ and store the result.
	\item For each $m \in [p(n+i),p(n+i+1)]$ do the following:
	\begin{enumerate}
	\item Compute $A_n^{f_{m}}(\sigma0^\infty)$ as in Lemma \ref{lem:subexpsequencelemma} by using $M_f$ and $M'$ and store the result.
	\item Check if $|A_n^{f_{m}}(\sigma0^\infty)-A_n^{f_{p(n+i)}}(\sigma0^\infty)| \geq \frac{1}{2^{\log(n+i)^2}}$. If so, output $1$.
	\end{enumerate}
	\item Output $0$.
\end{enumerate}
It can be easily verified that $N$ is a $\PSPACE$ machine. The second condition follows from the fact that $|\sigma| \leq 2^{q(\lceil\log(p(n+i+1))\rceil)}+cn$ for any $\sigma \in S_{n,i}$. Now, define 
\begin{align*}
V_m = \bigcup\limits_{\substack{n,i \geq 0 \\ n+i=m}}	 U_{n,i}.
\end{align*}
Now, we show that $\<V_m\>_{m=1}^{\infty}$ is a $\SUBEXP$
Solovay test. Note that 
\begin{align*}
\mu(V_m) \leq \frac{m}{2^{\log(m)^2}}.	
\end{align*}

Since each $V_m$ is a finite union of sets from $\<U_{n,i}\>_{n,i \in \N}$, the machine computing $\<U_{n,i}\>_{n,i \in \N}$ can be easily modified to construct a machine witnessing
that $\<V_m\>_{m=1}^{\infty}$ is a $\SUBEXP$-space approximable sequence of
sets. From the above observations and Lemma \ref{lem:mpolylogsummation}, it follows that $\<V_m\>_{m=1}^{\infty}$ is a $\SUBEXP$-space Solovay test. Now,
let $x$ be a $\SUBEXP$-space random. $x$ is in at most finitely many $V_m$
and hence in at most finitely many $U_{n,i}$. Hence for some large
enough $N$ for all $n \geq N$, $i \geq 0$ and for all $m$ such that
$p(n+i) \leq m \leq p(n+i+1)$, we have
$|A^{f_m}_n(x)-A^{f_{p(n+i)}}_n(x)| <2^{-\log(n+i)^2}$. It follows
that for all $n \geq N$ and for all $m \geq p(n)$ that,
\begin{align*}
  |A^{f_m}_n(x)-g_n(x)| &= |A^{f_m}_n(x)-A^{f_{p(n)}}_n(x)| 
  \ \leq\  \sum\limits_{i=0}^{\infty} \frac{1}{2^{\log(n+i)^2}} 
  \ \leq \sum\limits_{i=0}^{\infty} \frac{1}{(n+i)^3}
  \ \leq\  \frac{\pi^2}{6n}.
\end{align*}
The last inequality follows from Lemma \ref{lem:summationlemma}. Therefore, $\lim\limits_{n \to \infty} \widetilde{A}^f_n (x) =
\lim\limits_{n \to \infty} g_n(x)$ on all $\SUBEXP$-space random $x$.
Hence, we have shown that $\lim\limits_{n \to \infty} \widetilde{A}^f_n = \int f d\mu$ on $\SUBEXP$-space randoms which completes the proof of the theorem.
\end{proof}

An important reason for investigating $\SUBEXP$-space randomness is that the $\SUBEXP$-space ergodic theorem has an exact converse unlike the $\PSPACE$ ergodic theorem which only seems to have a partial converse (Theorem \ref{thm:conversetopspaceergodictheorem}). Before proving the converse, we show a $\SUBEXP$-space analogue of the construction in Lemma \ref{lem:finitepspacetest}.

\begin{lemma}
\label{lem:finitesubexptest}
Let $\<U_n\>_{n=1}^{\infty}$ be a $\SUBEXP$-space test. Then there exists a sequences of sets $\<\widehat{S}_n\>_{n=1}^{\infty}$ such that for each $n \in \N$, $\widehat{S}_n \subseteq \Sigma^*$ satisfying the following conditions:
\begin{enumerate}
    \item $\mu([\widehat{S}_n]) \leq 2^{-\log(n)^2}$
	\item  $\cap_{m=1}^{\infty}\cup_{n=m}^{\infty} [\widehat{S}_n]  \supseteq \cap_{n=1}^{\infty} U_n$
	\item There exists $c \in \N$ such that for all $n$, $\sigma \in \widehat{S}_n $ implies $\lvert \sigma \rvert\leq 2^{\log(n)^c}$
	\item There exists a $\PSPACE$ machine $N$ such that
          $N(\sigma,1^n)=1$ if $\sigma \in \widehat{S}_n$ and 0
          otherwise.
\end{enumerate}
\end{lemma}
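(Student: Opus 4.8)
The plan is to mirror the construction in the proof of Lemma~\ref{lem:finitepspacetest}, replacing the geometric decay $2^{-n}$ throughout by the slower decay $2^{-\log(n)^2} = n^{-\log n}$, and replacing the geometric-series estimates by Lemma~\ref{lem:summationlemma} and Lemma~\ref{lem:mpolylogsummation}. Let $\<S_n^k\>_{n,k}$ be the approximating sets and $M$ the $\PSPACE$ machine witnessing that $\<U_n\>_{n=1}^\infty$ is a $\SUBEXP$-space sequence of open sets (Definition~\ref{def:subexpopensets}), with controlling polynomial $p$; as usual assume $\<S_i^k\>_{k=1}^\infty$ is increasing in $k$. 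Set $\mathcal{U} = \cup_{n=1}^\infty\cup_{k=1}^\infty S_n^k$ and define a truncation $T_n = \cup_{i=1}^{L(n)} S_i^{K(n)}$, where the cutoff $L(n)$ on the number of levels and the approximation index $K(n)$ are polynomials in $n$ fixed below.

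First I would bound $\mu([\mathcal{U}]\setminus[T_n])$ by the same splitting as in Lemma~\ref{lem:finitepspacetest}, namely $[\mathcal{U}]\setminus[T_n] \subseteq (\cup_{i=1}^{L(n)}(U_i-[S_i^{K(n)}]))\cup(\cup_{i=L(n)+1}^\infty U_i)$, which gives $\mu([\mathcal{U}]\setminus[T_n]) \leq L(n)\,2^{-\log(K(n))^2} + \sum_{i=L(n)+1}^\infty 2^{-\log(i)^2}$ using condition~1 of Definition~\ref{def:subexpopensets} and the $\SUBEXP$-space test bound $\mu(U_i)\leq 2^{-\log(i)^2}$. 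Choosing $L(n) = 2(2n^2+1)$ bounds the tail sum by $2^{-\log(n)^2}$ through the remark following Lemma~\ref{lem:mpolylogsummation}. For the first sum I would take $K(n) = L(n)\cdot 2(2n^2+1)$ as a \emph{product}, so that $\log(K(n))^2 \geq \log(L(n))^2 + \log(2(2n^2+1))^2$; then $L(n)\,2^{-\log(K(n))^2} \leq (L(n)\,2^{-\log(L(n))^2})\,2^{-\log(2(2n^2+1))^2} < 2^{-\log(n)^2}$, exactly the device already used in the proof of Lemma~\ref{lem:subexpsolovay_equivalence}. After a harmless adjustment of the cutoffs (for instance using $2n$ in place of $n$, so that the combined factor of $2$ is absorbed by $\log(2n)^2 \geq \log(n)^2 + 1$) this yields $\mu([\mathcal{U}]\setminus[T_n]) \leq 2^{-\log(n)^2}$.

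Next I would define the frontier sets $\widehat{S}_n = \{\sigma\in T_{n+1} : \alpha\notin T_n \text{ for every } \alpha\sqsubseteq\sigma\}$, exactly as in Lemma~\ref{lem:finitepspacetest}. Since $T_n\subseteq T_{n+1}$ (the $S_i^k$ are increasing and $L,K$ are nondecreasing), the same reasoning gives $\mu([\widehat{S}_n]) \leq \mu([\mathcal{U}]\setminus[T_n]) \leq 2^{-\log(n)^2}$ (condition~1), and the covering $\cap_m\cup_{n\geq m}[\widehat{S}_n]\supseteq\cap_n U_n$ (condition~2) follows verbatim from the argument there. For condition~3 the controlling polynomial bounds the length of any $\sigma\in S_i^{K(n)}$ by $2^{p(\log i + \log K(n))}$; since $i\leq L(n)$ and $K(n)$ are polynomial in $n$ we have $\log i + \log K(n) = O(\log n)$, hence $p(\log i + \log K(n)) = O(\log(n)^{\deg p})$ and the length is at most $2^{\log(n)^c}$ for a suitable $c$. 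Finally, the $\PSPACE$ machine $N$ of condition~4 is the obvious adaptation of the one in Lemma~\ref{lem:finitepspacetest}: on input $(\sigma,1^n)$ it simulates $M(\sigma,1^i,1^{K(n)})$ for $i\leq L(n)$ to test $\sigma\in T_{n+1}$, and runs $M$ on every prefix $\alpha\sqsubseteq\sigma$ against $T_n$ to enforce the frontier condition; because $L(n)$ and $K(n)$ are polynomial in $n$, all unary arguments have polynomial length and $N$ runs in polynomial space.

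The step I expect to be the main obstacle is arranging conditions~1 and~3 simultaneously: driving the first sum below $2^{-\log(n)^2}$ wants a fine approximation index $K(n)$, while keeping cylinder lengths below $2^{\log(n)^c}$ --- and keeping $N$ in $\PSPACE$ --- forces $\log K(n) = O(\log n)$, i.e.\ $K(n)$ at most polynomial in $n$. The resolution is precisely that a \emph{polynomial} $K(n)$ already suffices, since writing $K(n)$ as a product of two polynomially-large factors makes $\log(K(n))^2$ exceed $\log(n)^2$ by a constant factor strictly larger than $1$, which beats the single target $2^{-\log(n)^2}$. This balance, together with the closure of the functions $2^{\log(n)^i}$ under composition, is what keeps the whole construction inside the $\SUBEXP$-space regime.
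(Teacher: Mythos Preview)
Your proposal is correct and follows essentially the same route as the paper: the paper likewise defines $T_n=\bigcup_{i=1}^{2(4n^2+1)} S_i^{2(4n^2+1)\cdot 2n}$, bounds $\mu([\mathcal{U}]\setminus[T_n])$ by the same two-term split (using Lemma~\ref{lem:mpolylogsummation} for the tail and the product inequality $\log(ab)^2\geq\log(a)^2+\log(b)^2$ for the first term), takes $\widehat{S}_n$ to be the frontier $\{\sigma\in T_{n+1}:\forall\alpha\sqsubseteq\sigma,\ \alpha\notin T_n\}$, and builds the $\PSPACE$ machine exactly as you describe. Your cutoffs $L(n)=2(2n^2+1)$ and $K(n)=L(n)\cdot 2(2n^2+1)$ differ from the paper's only by harmless constants, and your observation that a polynomial $K(n)$ with product structure simultaneously secures conditions~1 and~3 is precisely the mechanism the paper exploits.
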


\begin{proof}
Let $\<S_n^k\>_{n,k}$ be the collection of approximating sets for $\<U_n\>_{n=1}^{\infty}$ as in Definition \ref{def:subexpopensets}. Define $\mathcal{U}=\cup_{n=1}^{\infty}\cup_{k=1}^{\infty}S_n^k$. Now, define
\begin{align*}
T_n= \bigcup\limits_{i=1}^{2(4n^2+1)} S_{i}^{2(4n^2+1)2n}	
\end{align*}
 Observe that,
 \begin{align*}
 [\mathcal{U}]\setminus [T_n] \subseteq \left( \bigcup\limits_{i=1}^{2(4n^2+1)} U_i - [S_{i}^{2(4n^2+1)2n}]  \right) \bigcup \left(\bigcup\limits_{i=2(4n^2+1)+1}^{\infty} U_i \right)
 \end{align*}
Hence,
\begin{align*}
\mu([\mathcal{U}]\setminus [T_n]) &\leq 	\sum\limits_{i=1}^{2(4n^2+1)} \frac{1}{2^{\log(2(4n^2+1)2n)^2}} + \sum\limits_{i=2(4n^2+1)+1}^{\infty} \frac{1}{2^{\log(i)^2}}\\
&< 	\sum\limits_{i=1}^{2(4n^2+1)} \frac{1}{2^{\log(2(4n^2+1)2n)^2}} + \sum\limits_{i=2(4n^2+1)}^{\infty} \frac{1}{2^{\log(i)^2}}\\
&\leq \frac{2(4n^2+1)}{2^{\log(2(4n^2+1)2n)^2}} + \frac{1}{2^{\log(2n)^2}}\\
&< \frac{2(4n^2+1)}{2^{\log(2(4n^2+1))^2 +\log(2n)^2}} + \frac{1}{2^{\log(2n)^2}}\\
&< \frac{1}{2^{\log(n)^2+1}} + \frac{1}{2^{\log(n)^2+1}}\\
&= \frac{1}{2^{\log(n)^2}}
\end{align*}
The third inequality follows from Lemma \ref{lem:mpolylogsummation}. From the definition of $T_n$, it follows that there is a $c \in \N$ such that the length of strings in $T_n$ is upper bounded by $2^{\log(n)^c}$. Now, if $\widehat{S}_n =\{\sigma \in T_{n+1}: \forall \alpha \sqsubseteq \sigma (\alpha \not\in T_n)\}$, we have $\mu([\widehat{S}_n])\leq \mu([\mathcal{U}]\setminus [T_n]) \leq 2^{-\log(n)^2}$. Conditions 2 and 3 can be readily verified to be true. We now construct a $\PSPACE$ machine $N$ satisfying the condition in 4. $N$ on input $(\sigma,1^n)$ does the following:
\begin{enumerate}
	\item For each $i\in[1,2(4(n+1)^2+1)]$ simulate $M(\sigma,1^i,1^{2(4(n+1)^2+1)2(n+1)})$. If all simulations result in $0$, output $0$.
	\item Else, for each $m\in [1,n]$ do the following:
	\begin{enumerate}
		\item For each $\alpha \sqsubseteq \sigma$ do the following:
		\begin{enumerate}
			\item For each $i\in[1,2(4m^2+1)]$ simulate $M(\alpha,1^i,1^{2(4m^2+1)2m})$. If any of these simulations result in a $1$ then, output $0$.
		\end{enumerate}
	\end{enumerate}
	\item Output $1$.
\end{enumerate} 
$N$ can be easily verified to be a $\PSPACE$ machine. Hence, our constructions satisfy all the desired conditions.
\end{proof}

\begin{theorem}
\label{thm:conversetosubexpergodictheorem}
	Given any $\SUBEXP$-space null $x$, there exists a $\SUBEXP$-space $L^1$-computable $f \in L^{1}(\Sigma^\infty,\mu)$ such that for any $\PSPACE$ simple measure preserving transformation, the following conditions are true:
	\begin{enumerate}
		\item For all $n\in \N$, $\lVert A_n^f -\int f d\mu \rVert_1=0$. Hence, $\int f d\mu$ is an $\PSPACE$-rapid $L^1$-limit point of $A^f_n$. 
		\item There exists a collection of simple functions $\{g_{n,i}\}$ such that for each $n$, $\<g_{n,i}\>_{i=1}^{\infty}$ is a $\SUBEXP$-space $L^1$-approximation of $A^f_n$ but $\lim\limits_{n \to \infty}\lim\limits_{i \to \infty} g_{n,i}(x)\neq\int f d\mu$.
	\end{enumerate}
\end{theorem}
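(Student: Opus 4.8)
The plan is to mirror the proof of Theorem~\ref{thm:conversetopspaceergodictheorem}, replacing the $\PSPACE$ constructions with their $\SUBEXP$-space counterparts and substituting Lemma~\ref{lem:finitesubexptest} for Lemma~\ref{lem:finitepspacetest}. Since $x$ is $\SUBEXP$-space null, fix a $\SUBEXP$-space test $\<V_n\>_{n=1}^{\infty}$ with $x \in \cap_{n=1}^{\infty} V_n$, and apply Lemma~\ref{lem:finitesubexptest} to obtain sets $\<\widehat{S}_n\>_{n=1}^{\infty}$ satisfying $\mu([\widehat{S}_n]) \leq 2^{-\log(n)^2}$, the inclusion $\cap_{m=1}^{\infty}\cup_{n=m}^{\infty}[\widehat{S}_n] \supseteq \cap_{n=1}^{\infty} V_n$, the length bound $|\sigma| \leq 2^{\log(n)^c}$ for $\sigma \in \widehat{S}_n$, and a $\PSPACE$ decision machine $N$.

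First I would define $U_n = \bigcup_{i=2n+1}^{2n+3}\widehat{S}_i$ (as sets of strings) and set $f_n = n\chi_{U_n}$, exactly as in the $\PSPACE$ converse. The essential verification is that $\<f_n\>_{n=1}^{\infty}$ is a $\SUBEXP$-space $L^1$-approximation of $f=0$. From the measure bound, $\mu([U_n]) \leq 3 \cdot 2^{-\log(2n+1)^2}$, so $\lVert f_n \rVert_1 = n\,\mu([U_n]) \leq 3n\,2^{-\log(2n)^2}$; since $\log(2n)^2 = \log(n)^2 + 2\log(n) + 1$, this is at most $3\cdot 2^{-\log(n)^2-\log(n)-1} \leq 2^{-\log(n)^2} = n^{-\log(n)}$ for all $n \geq 2$. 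Thus the slower $\SUBEXP$-space decay rate still absorbs the multiplicative factor $n$, which is the only place where the arithmetic differs substantively from the $\PSPACE$ case. The cylinders defining $f_n$ have length at most $2^{\log(n)^c}$, matching the controlling form $2^{p(\log n)}$ of Definition~\ref{def:subexpsequenceofsimplefunctions}, and the machine computing $\<f_n\>_{n=1}^{\infty}$ is built from $N$: on input $(1^n,\sigma)$ it outputs $?$ when $|\sigma| < 2^{\log(2n+3)^c}$, and otherwise scans $i \in [2n+1,2n+3]$ and prefixes $\alpha \sqsubseteq \sigma$, outputting $n$ if some $N(1^i,\alpha)=1$ and $0$ otherwise. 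Rejecting short $\sigma$ keeps each simulation of $N$ polynomial in $|\sigma|$, so this is a $\PSPACE$ machine.

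Condition~1 is then immediate: the $L^1$-function $f$ is $0$, so $\int f\,d\mu = 0$, and for any measure-preserving $T$ we have $A_n^f = 0$ almost everywhere, whence $\lVert A_n^f - \int f\,d\mu\rVert_1 = 0$ for every $n$; in particular $\int f\,d\mu$ is trivially a $\PSPACE$-rapid $L^1$-limit point of the constant-zero sequence. For Condition~2 I would set $g_{n,i} = A_n^{f_i}$. For each fixed $n$, the same construction as in Lemma~\ref{lem:subexpsequencelemma} shows that $\<g_{n,i}\>_{i=1}^{\infty}$ is a $\SUBEXP$-space $L^1$-approximation of $A_n^f$, the required $L^1$ rate following from $\lVert g_{n,i} - A_n^f\rVert_1 \leq \lVert f_i\rVert_1 \leq i^{-\log(i)}$ by measure preservation of $T$.

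The final step is the divergence argument, identical in spirit to the $\PSPACE$ converse. Because $x \in \cap_{m}\cup_{n \geq m}[\widehat{S}_n]$, there are infinitely many $m$ with $x \in [\widehat{S}_m]$; for each such $m$ the unique $i$ with $2i+1 \leq m \leq 2i+3$ satisfies $x \in [U_i]$ and hence $f_i(x)=i$. Since distinct large $m$ yield distinct $i \to \infty$, there are infinitely many $i$ with $f_i(x)=i$, and as each $f_i \geq 0$ we get $g_{n,i}(x) = A_n^{f_i}(x) \geq \tfrac{1}{n}f_i(x) = i/n$ for those infinitely many $i$. Therefore $\limsup_{i \to \infty} g_{n,i}(x) = \infty$, so $\lim_{i \to \infty} g_{n,i}(x)$ is either $+\infty$ or undefined; in either case $\lim_{n \to \infty}\lim_{i \to \infty} g_{n,i}(x) \neq 0 = \int f\,d\mu$. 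I expect the only genuine obstacle to be the bookkeeping in the second paragraph---confirming that the factor $n$ is dominated by the $2^{-\log(n)^2}$ decay and that the exponential-length cylinders $2^{\log(n)^c}$ are correctly handled within polynomial space---while every other step transfers essentially verbatim from the proof of Theorem~\ref{thm:conversetopspaceergodictheorem}.
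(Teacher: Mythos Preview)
Your proposal is correct and follows the paper's overall strategy: both proofs build $f_n = n\chi_{U_n}$ from the finite sets supplied by Lemma~\ref{lem:finitesubexptest}, verify that $\langle f_n\rangle$ is a $\SUBEXP$-space $L^1$-approximation of $f=0$, set $g_{n,i}=A_n^{f_i}$, and finish with the same divergence argument.

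The one substantive difference is your choice of index window. You take $U_n=\bigcup_{i=2n+1}^{2n+3}\widehat{S}_i$ exactly as in the $\PSPACE$ converse and bound $\lVert f_n\rVert_1$ directly via $\log(2n)^2=\log(n)^2+2\log n+1$, whereas the paper uses the wider window $[\,2(2(n^2)^2+1),\,2(2((n+1)^2)^2+1)\,]$ so that Lemma~\ref{lem:mpolylogsummation} applies verbatim. Your direct estimate is cleaner and avoids invoking that lemma; the paper's choice is merely tailored to reuse its summation machinery. Either way the factor $n$ is absorbed by the quasipolynomial decay, and the controlling length $2^{\log(2n+3)^c}$ is still of the form $2^{p(\log n)}$, so the $\SUBEXP$-space bookkeeping goes through. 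One tiny wording issue: the index $i$ with $2i+1\le m\le 2i+3$ is not literally unique (consecutive windows overlap at odd $m$), but this is harmless since infinitely many $m$ with $x\in[\widehat{S}_m]$ still force infinitely many $i$ with $f_i(x)=i$; the paper's $\PSPACE$ proof contains the same benign imprecision.
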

\begin{proof}
	Let $\<V_n\>_{n=1}^\infty$ be any $\SUBEXP$-space test such that $x \in \cap_{n=1}^{\infty} V_n$. Now, from Lemma \ref{lem:finitesubexptest}, there exists a collection of sets $\<\widehat{S}_n\>_{n=1}^{\infty}$ such that $\cap_{m=1}^{\infty}\cup_{n=m}^{\infty} [\widehat{S}_n]  \supseteq \cap_{n=1}^{\infty} V_n$ and $\sigma \in \widehat{S}_n $ implies $\lvert \sigma \rvert\leq 2^{\log(n)^c}$ for some $c$. Let, 
	\begin{align*}
		U_n=\{\sigma:[\sigma] \in \widehat{S_i} \text{ for some } i \text{ such that } 2(2(n^2)^2+1) \leq i \leq 2(2((n+1)^2)^2+1) \}
	\end{align*}
Now, let $f_n=n\chi_{U_n}$. Using Lemma \ref{lem:mpolylogsummation}, 
\begin{align*}
	\mu(U_n)\leq \sum\limits_{i=2(2(n^2)^2+1)}^{2(2((n+1)^2)^2+1)} \frac{1}{2^{\log(i)^2}} \leq \frac{1}{2^{\log(n^2)^2}} \leq \frac{1}{2^{\log(n)^2+\log(n)^2}}
\end{align*}
it follows that
\begin{align*}
\lVert f_n \rVert_1 \leq 	\frac{n}{2^{\log(n)^2+\log(n)^2}} \leq \frac{1}{2^{\log(n)^2}}.
\end{align*}
Now, using the properties of $\<\widehat{S}_n\>_{n=1}^{\infty}$, it can be shown that $\<f_n\>_{n=1}^{\infty}$ is a $\SUBEXP$-space $L^1$-approximation of $f=0$.
	We construct a machine $M$ computing $\<f_n\>_{n=1}^{\infty}$. The other conditions are easily verified. Let $N$ be the machine from Lemma \ref{lem:finitesubexptest}. On input $(1^n,\sigma)$, $M$ does the following:
\begin{enumerate}
	\item If $\lvert \sigma \rvert < 2^{\lceil \log(2(2((n+1)^2)^2+1))\rceil^c}$ then, output $?$.
	\item Else, for each $i \in [2(2(n^2)^2+1),2(2((n+1)^2)^2+1)]$ do the following:
	\begin{enumerate}
		\item For each $\alpha \subseteq \sigma$, do the following:
		\begin{enumerate}
		\item If $N(1^i,\alpha)=1$ then, output $n$.
		\end{enumerate}
	\end{enumerate}
	\item Output $0$.
\end{enumerate}
	
	$M$ uses at most polynomial space and computes $\<f_n\>_{n=1}^{\infty}$. Now, define,
	\begin{align*}
	g_{n,i}=\frac{f_i +f_i \circ T + \dots + f_i \circ T^{n-1}}{n}	
	\end{align*}
	For any fixed $n \in \N$, since $T$ is a $\PSPACE$ simple transformation, as in Lemma \ref{lem:subexpsequencelemma} it can be shown that $\<g_{n,i}\>_{i=1}^{\infty}$ is a $\SUBEXP$-space $L^1$-approximation of $A^f_n$. We know that there exist infinitely many $m$ such that $x \in \widehat{S}_m$. For any such $m$, let $i$ be the unique number such that $2(2(i^2)^2+1) \leq m \leq 2(2((n+1)^2)^2+1)$. For this $i$, $f_i(x)=i$. This shows that there exist infinitely many $i$ such that $f_i(x)=i$. Since each $f_i$ is a non-negative function, it follows that for infinitely many $i$ with $g_{n,i} \geq i/n$. Hence, if $\lim\limits_{i \to \infty}g_{n,i}(x)$ exists, then it is equal to $\infty$. It may be the case that $\lim\limits_{i \to \infty}g_{n,i}(x)$ does not exist. In either case, $\lim\limits_{n \to \infty}\lim\limits_{i \to \infty}g_{n,i}(x)$ cannot be equal to $\int f d\mu=0$. Hence, our construction satisfies all the desired conditions.

\end{proof}

\section{Martingale characterization of $\PSPACE$ randomness}

The study of resource bounded randomness was initiated by Lutz(\cite{Lutz1992},\cite{Lutz1998}) using resource bounded martingales. Huang and Stull introduced weak $\PSPACE$ randomness (definition \ref{def:pspacetest}) in \cite{huangstull} as a resource bounded analogue of randomness defined in terms of c.e tests. In this section we give a characterization of $\PSPACE$ randoms in terms of $\PSPACE$ martingales, demarking on the difference between Lutz's notion of $\PSPACE$ randomness and our notion of $\PSPACE$ randomness. Our result and its proof are significantly different from those given in \cite{sureson2017}. We demonstrate this in the setting of $(\Sigma^\infty, \mathcal{B}(\Sigma^\infty),\mu)$ where $\mu$ is the Bernoulli measure $\mu([\sigma])=\frac{1}{2^{|\sigma|}}$. Now we define $\PSPACE$ computable martingales. 
\begin{definition}[$\PSPACE$ computable martingales]
	A function $D : \Sigma^* \to [0,\infty)$ is a $\PSPACE$ computable martingale if for each $\sigma \in \Sigma^*$,
	\begin{align*}
	D(\sigma)=\frac{D(\sigma 0)+D(\sigma 1)}{2}
	\end{align*}
	and there exists a $\PSPACE$ machine $M$ such that for each $\sigma \in \Sigma^*$ and $n \in \N$, $M(\sigma,1^n)\in \Q$ such that,
	\begin{align*}
	\left\lvert D(\sigma) - M(\sigma,1^n) \right\rvert \leq \frac{1}{2^n}
	\end{align*}
\end{definition}

Now, we give a characterization of $\PSPACE$ randomness in terms of $\PSPACE$ computable martingales.

\begin{theorem}
\label{thm:pspacemartingaletheorem}
$x \in \Sigma^\infty$ is $\PSPACE$ null if and only if there exists a $\PSPACE$ computable martingale $D$ and $k \in \N$ such that there exist infinite many $n \in \N$ satisfying
\begin{align}
\label{eq:martingalewincondition}
D(x \upharpoonleft n) \geq 2^{\lfloor n^{\frac{1}{k}}\rfloor / 2}	
\end{align}
\end{theorem}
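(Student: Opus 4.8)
The plan is to prove both directions by transferring between the test-based definition of $\PSPACE$ randomness (Definition \ref{def:pspacetest}) and martingales, with the growth rate $2^{\lfloor n^{1/k}\rfloor/2}$ serving as the bridge. The key observation is that the exponent $1/2 \in (0,1)$ is exactly what is needed: in the forward direction it keeps the initial capital of a mixture martingale finite, while in the reverse direction it keeps the measures of the level sets summable.

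For the forward direction, suppose $x$ is $\PSPACE$ null, witnessed by a $\PSPACE$ test. I would first invoke Lemma \ref{lem:finitepspacetest} to replace it by a sequence $\<\widehat{S}_n\>_{n=1}^{\infty}$ of $\PSPACE$-decidable sets whose cylinders have length at most $n^c$, with $\mu([\widehat{S}_n]) \le 2^{-n}$ and $x \in [\widehat{S}_n]$ for infinitely many $n$; the polynomial length bound is precisely what makes the prescribed rate attainable. For each $n$ define the elementary martingale $d_n(\sigma) = \mu([\widehat{S}_n] \cap [\sigma])/\mu([\sigma])$ and set $D = \sum_{n=1}^\infty 2^{n/2} d_n$. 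Each $d_n$ is an exact martingale with value $\mu([\widehat{S}_n])$ on the empty string $\lambda$, so $D$ is a martingale with finite initial capital $\sum_n 2^{n/2}\mu([\widehat{S}_n]) \le \sum_n 2^{-n/2}$, and $D(\sigma) \le |\sigma|2^{|\sigma|/2} + O(2^{|\sigma|})$ is finite everywhere. Whenever $x \in [\widehat{S}_n]$, some prefix $x \upharpoonleft \ell_n$ with $\ell_n \le n^c$ already lies in $\widehat{S}_n$, so $d_n(x\upharpoonleft \ell_n) = 1$ and hence $D(x\upharpoonleft \ell_n) \ge 2^{n/2}$. Taking $k = c$, from $\ell_n \le n^c$ we get $\lfloor \ell_n^{1/k}\rfloor \le n$, so $D(x\upharpoonleft \ell_n) \ge 2^{\lfloor \ell_n^{1/k}\rfloor/2}$; since $D$ is finite on each string the lengths $\ell_n$ are unbounded, giving infinitely many such prefix lengths. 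Finally $D$ is $\PSPACE$ computable: to approximate $D(\sigma)$ within $2^{-m}$ one truncates the sum at $J = O(|\sigma| + m)$ and computes each $\mu([\widehat{S}_n]\cap[\sigma])$ by scanning the extensions of $\sigma$ up to length $n^c$ while maintaining a counter, using the $\PSPACE$ membership test of Lemma \ref{lem:finitepspacetest} (exponential time, but polynomial space).

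For the reverse direction, assume a $\PSPACE$ martingale $D$ and $k$ with $D(x\upharpoonleft n) \ge t_n := 2^{\lfloor n^{1/k}\rfloor/2}$ infinitely often. For each $n$ the level set $\{y : D(y\upharpoonleft n) \ge t_n\}$ is a union of length-$n$ cylinders, and the martingale identity gives $\sum_{|\sigma|=n} 2^{-n} D(\sigma) = D(\lambda)$, so by Markov's inequality its measure is at most $D(\lambda)/t_n = D(\lambda)\,2^{-\lfloor n^{1/k}\rfloor/2}$, which is summable in $n$. I would set $U_m = \bigcup_{n \ge N_m}\{y : D(y\upharpoonleft n) \ge t_n\}$ with $N_m = O(m^k)$ chosen so that $\mu(U_m) \le \sum_{n\ge N_m} D(\lambda)2^{-\lfloor n^{1/k}\rfloor/2} \le 2^{-m}$, whence $x \in \bigcap_m U_m$. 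To exhibit $\<U_m\>$ as a $\PSPACE$ sequence of open sets I take the $j$-th approximating set to consist of the minimal witnesses of depth between $N_m$ and $h(j) = \lceil (2j)^k\rceil + N_m$, so that the uncovered tail is $\sum_{n > h(j)} D(\lambda) 2^{-\lfloor n^{1/k}\rfloor/2} \le 2^{-j}$ as required by Definition \ref{def:pspaceopensets}, while the witness lengths stay bounded by a polynomial in $m + j$ of degree $k$. Membership is decided in $\PSPACE$ by running the approximation machine for $D$ on $\sigma$ and its prefixes; to avoid the exact comparison $D(\sigma)\ge t_n$ I would work with a slightly shifted threshold (e.g.\ $t_n/2$ against the $2^{-n}$-approximation of $D$), which changes the measure bounds by only a constant factor and still captures $x$ since $D(x\upharpoonleft n)\ge t_n$ infinitely often.

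The main obstacle is the tight calibration between the polynomial lengths forced by the $\PSPACE$ format and the subexponential rate $2^{\lfloor n^{1/k}\rfloor/2}$. In the forward direction this is exactly why Lemma \ref{lem:finitepspacetest} (polynomial-length witnesses) is indispensable and why the exponent must lie strictly below $1$ for the mixture to have finite capital; in the reverse direction it dictates the $j \mapsto (2j)^k$ correspondence between approximation index and cylinder depth that simultaneously secures the measure-tail condition and the controlling polynomial of degree $k$. Verifying $\PSPACE$ computability of the mixture martingale despite each $\widehat{S}_n$ possibly containing exponentially many strings --- handled by the counting-in-place computation of $\mu([\widehat{S}_n]\cap[\sigma])$ --- is the most delicate bookkeeping step.
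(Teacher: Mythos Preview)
Your proposal is correct in both directions, but the forward construction differs from the paper's. After invoking Lemma~\ref{lem:finitepspacetest} (which both arguments need for the polynomial length bound), the paper follows the Schnorr/Nies template: it sets $f(n)=\lfloor n^{1/k}\rfloor/2$, forms the sets $G_i=\{\sigma\in\mathcal{U}:f(|\sigma|)\ge i\}$, builds the ``past/future'' pair $E_i(\sigma)=\sum_{\alpha\in G_i,\ \alpha\sqsubset\sigma}2^{f(|\alpha|)}$ and $F_i(\sigma)=2^{|\sigma|}\sum_{\alpha\in G_i,\ \alpha\sqsupseteq\sigma}2^{f(|\alpha|)-|\alpha|}$, and takes $D=\sum_i(E_i+F_i)$, appealing to \cite{Nies2009} Lemma~7.3.4 for the martingale property. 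Your mixture $D=\sum_n 2^{n/2}d_n$ with $d_n(\sigma)=\mu([\widehat S_n]\cap[\sigma])/\mu([\sigma])$ is a genuinely different and more elementary route: the martingale property and the growth $D(x\!\upharpoonleft\!\ell_n)\ge 2^{n/2}$ are immediate, and the $\PSPACE$ computability reduces to a single counting-in-place argument rather than separate analyses of the $E_i$ and $F_i$ sums. What the paper's version buys is a direct citation to a standard reference; what yours buys is self-containment. In the reverse direction the two arguments are close: the paper indexes its test by the value threshold, setting $U_i=\{\sigma: M(\sigma,1)\ge 2^{\lfloor|\sigma|^{1/k}\rfloor/2}-1\ge 2^{i+1}\}$ and invoking Kolmogorov's inequality once, whereas you index by a depth cutoff $N_m$ and sum the Markov bounds over $n\ge N_m$. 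The paper's packaging is slightly slicker (the length lower bound comes for free from the second inequality), but the controlling polynomial and the $\PSPACE$ decidability work out the same way.
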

$x \in \Sigma^\infty$ is \textit{strong $\PSPACE$ null} (Lutz \cite{Lutz1992}\cite{Lutz1998}) if there exists a $\PSPACE$ computable martingale such that $\limsup_{n \to \infty}D(x \upharpoonleft n)=\infty$. Hence, if $x$ is $\PSPACE$ null then $x$ is \textit{strong $\PSPACE$ null}. Stull \cite{stull2017algorithmic} proved the existence of a \textit{strong $\PSPACE$ null} which is $\PSPACE$ random. The proof below is a careful adaptation of Theorem 7.3.3 from \cite{Nies2009} into the $\PSPACE$ setting using the construction in Lemma \ref{lem:finitepspacetest}.
\begin{proof}[Proof of Theorem \ref{thm:pspacemartingaletheorem}]
We show the forward implication first. Let $\<U_n\>_{n=1}^{\infty}$ be a $\PSPACE$ test witnessing the fact that $x$ is $\PSPACE$ null.  Now, from Lemma \ref{lem:finitepspacetest}, there exists a collection of sets $\<\widehat{S}_n\>_{n=1}^{\infty}$ such that $\cap_{m=1}^{\infty}\cup_{n=m}^{\infty} [\widehat{S}_n]  \supseteq \cap_{n=1}^{\infty} U_n$ and $\sigma \in \widehat{S}_n $ implies $\lvert \sigma \rvert < n^k$ for some $k$. Let $\mathcal{U}=\bigcup_{n=1}^{\infty} [\widehat{S}_n]$. Hence,


\begin{align*}
  \sum\limits_{\substack{\sigma \in \mathcal{U} \\ \lvert\sigma\rvert\geq n^k}}
  \frac{1}{2^{\lvert\sigma \rvert}}
  \ \leq\ 
  \sum_{n' > n}
  \sum\limits_{\sigma \in \widehat{S}_{n'}}
 \frac{1}{2^{\lvert\sigma\rvert}}
 \ \leq\  \frac{1}{2^n}.
\end{align*}
Hence,
\begin{align*}
\sum\limits_{\substack{\sigma \in \mathcal{U} \\ \lvert\sigma\rvert\geq n}} \frac{1}{2^{\lvert\sigma \rvert}} \leq \frac{1}{2^{\lfloor n^{\frac{1}{k}}\rfloor}} = \frac{1}{2^{2\lfloor n^{\frac{1}{k}}\rfloor/2}} 
\end{align*}
Let $f(n)=\lfloor n^{\frac{1}{k}}\rfloor/2$. Hence,
\begin{align*}
\sum\limits_{\substack{\sigma \in \mathcal{U} \\ \lvert\sigma\rvert\geq n}} \frac{1}{2^{\lvert\sigma \rvert}} \leq \frac{1}{2^{2 f(n)}} 
\end{align*}
For every $r
\in \N$, define
\begin{align*}
s_r = \sum\limits_{\substack{\sigma \in \mathcal{U} \\ f(\lvert \sigma \rvert)\geq r}} 2^{f(\lvert \sigma\rvert)-\lvert\sigma\rvert}	
\end{align*}
It can be easily verified that $s_r \leq 2^{-r}$. 

Now, we define a martingale $D$ such that $D$ wins on $x$ in the sense
of \ref{eq:martingalewincondition}. For each $i \in \N$, let $G_i =
\{\sigma \in \mathcal{U}:f(\lvert\sigma \rvert)\geq i\}$. It can be
readily verified that $\mu(G_i)\leq 2^{-i}$ for all $i>0$. For each $i
\in \N$ define
\begin{align*}
E_{i}(\sigma) = \sum\limits_{\substack{\alpha \in G_i \\ \alpha \sqsubset \sigma}} 2^{f(\lvert\alpha\rvert)}	
\end{align*}
\begin{align*}
F_{i}(\sigma) = 2^{\lvert \sigma \rvert}\sum\limits_{\substack{\alpha \in G_i \\ \alpha \sqsupseteq \sigma}} 2^{f(\lvert\alpha\rvert)-\lvert \alpha \rvert}	
\end{align*}
As in the proof of \cite{Nies2009} Lemma 7.3.4, it can be verified that $D_i = E_i + F_i	$ is a martingale. Now define $D=\sum_{i=1}^{\infty}D_i$. It can be easily seen that $D$ is a martingale.

We now show that $D$ is $\PSPACE$ computable. Let $\sigma$ be a string on which $D(\sigma)$ needs to be approximated with error at most $2^{-m}$. Observe that,
\begin{align*}
E_i (\sigma) &= 	\sum\limits_{\substack{\alpha \in G_i \\ \alpha \sqsubset \sigma}}2^{\lvert \alpha \rvert} 2^{f(\lvert\alpha\rvert)-\lvert\alpha \rvert}\\
&\leq 2^{\lvert \sigma \rvert} \sum\limits_{\substack{\alpha \in G_i \\ \alpha \sqsubset \sigma}}2^{f(\lvert\alpha\rvert)-\lvert\alpha \rvert}
\end{align*}
Now,
\begin{align*}
D_i (\sigma) &= E_i (\sigma) + F_i (\sigma) \\
&\leq 2^{\lvert\sigma \rvert} \sum\limits_{\alpha \in G_i}2^{f(\lvert\alpha\rvert)-\lvert\alpha \rvert} \\
&\leq 2^{\lvert\sigma \rvert} \sum\limits_{\substack{\alpha \in \mathcal{U}\\ f(\lvert \alpha \rvert)\geq i}}2^{f(\lvert\alpha\rvert)-\lvert\alpha \rvert}\\
&\leq 2^{\lvert \sigma \rvert}s_i \\
&\leq 2^{\lvert \sigma \rvert-i}
\end{align*}
Hence, if we can approximate $\sum_{i=1}^{\lvert \sigma \rvert + m +1}D_i$ with error at most $2^{-(m+1)}$, we can get a $2^{-m}$ error approximation for $D$.

Let $N$ be a machine computing $\<\widehat{S}_n\>_{n=1}^{\infty}$ as in Lemma \ref{lem:finitepspacetest}. For the given $\sigma$, we can sum $2^{f(\lvert\alpha \rvert)}$ for all $\alpha \sqsubset \sigma$ and $\alpha \in G_i$ in $\PSPACE$. For each $i$, $E_i$ can be computed in the following way:
\begin{enumerate}
	\item Let $\mathrm{Sum}=0$.
	\item For each $\alpha \sqsubset \sigma$ do the following: 
	\begin{enumerate}
	\item Check if there exists $1 \leq m \leq \lvert \alpha\rvert$ such that $N(1^m,\alpha)=1$. If yes, check if $f(\lvert \alpha\rvert)\geq i$. If this is true, then add $f(\lvert \alpha\rvert)$ to $\mathrm{Sum}$.
	\end{enumerate}
	\item Output $\mathrm{Sum}$.
\end{enumerate}
$\alpha \not \in \widehat{S}_n$ for $n \geq \lvert \alpha\rvert+1$ since for any such $n$, $\mu(\widehat{S}_n) < 2^{-\lvert \alpha \rvert}$. Hence, the procedure above exactly computes $E_i$. The procedure can be easily seen to be a polynomial space operation. Hence, $\sum_{i=1}^{\lvert \sigma \rvert + m +1}E_i$ can be computed without error in polynomial space. 

Observe that for all $m \in \N$,
\begin{align*}
2^{\lvert\sigma \rvert}\sum\limits_{\alpha:f(\lvert\alpha \rvert)\geq m+\lvert\sigma\rvert+1} 2^{f(\lvert\alpha \rvert)-\lvert \alpha \rvert} \leq 2^{\lvert\sigma \rvert}\sum\limits_{r= m+\sigma+1}^{\infty} s_r \leq 2^{-m}.
\end{align*}
So, we can ignore all but finitely many terms in the summation defining $F_i$ as in the case of $D_i$. For any $\alpha$, if $f(\lvert\alpha \rvert)=\lfloor \lvert\alpha\rvert^{\frac{1}{k}}\rfloor/2< m+\lvert\sigma\rvert+1$ then its length $\lvert \alpha \rvert$ can be upper bounded by a polynomial in $m+\lvert\sigma\rvert$, say $(m+\lvert\sigma\rvert)^d$. Hence, the terms that are left in the summation defining $F_i$ can be computed as in the computation of $E_i$ by going over all $\alpha \sqsupseteq \sigma$ with $\lvert \alpha \rvert \leq (m+\lvert\sigma\rvert)^d$ in polynomial space. It follows that $D=F+E$ is a $\PSPACE$ computable martingale.

Now, we show that condition \ref{eq:martingalewincondition} is satisfied by $D$ for infinitely many $n \in \N$. We know that for infinitely many $n \in \N$, $x \upharpoonleft n \in \mathcal{U}$. For any such $n$, we have $x \upharpoonleft n \in G_{f(n)}$. Observe that,
\begin{align*}
D(x \upharpoonleft n) \geq D_{f(n)}(x \upharpoonleft n)	 \geq F_{f(n)}(x \upharpoonleft n) \geq 2^{n}.2^{f(n)-n}=2^{f(n)}
\end{align*}
Hence, \ref{eq:martingalewincondition} holds for infinitely many $n \in \N$ as desired.

	Now we prove the converse. Let $U_i=\{\sigma \in \Sigma^* : M(\sigma,1)\geq 2^{\lfloor \lvert \sigma \rvert^{\frac{1}{k}}\rfloor/2}-1 \geq 2^{i+1}\}$. Since $M(\sigma,1)\geq 2^{i+1}$ implies $d(\sigma)>2^i$, using the Kolmogorov's inequality (see \cite{Nies2009} Proposition 7.1.9) we get that $\mu(U_i)\leq 2^{-i}$. For any $m \in \N$, $\lvert \sigma \rvert \geq (2(m+2))^k$ implies $2^{\lfloor \lvert \sigma \rvert^{\frac{1}{k}}\rfloor/2} \geq 2^{m+2}$. Hence, for any $\sigma \in U_i$, $\lvert \sigma \rvert \geq (2(m+2))^k$ implies that $d(\sigma)\geq 2^{m+2}-2 > 2^m$. Now, using the Kolmogorov's inequality again we get that $\mu\left(\left\{\sigma \in U_i : |\sigma|\geq (2(m+2))^k\right\}\right)\leq 2^{-m}$. From the two previous observations, it follows that $\<U_i\>_{i=1}^{\infty}$ is a $\PSPACE$ sequence of open sets. Since, there are infinitely many $n \in \N$ such that $D(x \upharpoonleft n) \geq 2^{\lfloor  n ^{\frac{1}{k}}\rfloor/2}$, it follows that there are infinitely many $n$ such that $M(x \upharpoonleft n,1)\geq 2^{\lfloor  n ^{\frac{1}{k}}\rfloor/2}-1$. Since, $\<U_i\>_{i=1}^{\infty}$ is a $\PSPACE$ test, it follows that $x$ is $\PSPACE$ null.
\end{proof}

\section{A $\PSPACE$ random which is $\EXP$ non-random}
\label{sec:pexpseparation}

It was shown in Stull's thesis \cite{stull2017algorithmic} that weak polynomial time randomness is strictly weaker than polynomial time randomness. Using a similar approach, we give below an explicit construction of a $\PSPACE$ random which is $\EXP$ non-random. Let $x$ be any Martin-L\"of random. We define the infinite sequence $y$ as follows,
\begin{align*}
y[n] = \begin{cases}
	0, &\text{if }n=2^m \text{ for some }m\in \N \\
	x[n], &\text{otherwise.}
\end{cases}
\end{align*}
It is easy to construct a $\PSPACE$ computable martingale $D : \Sigma^* \to [0,\infty)$ such that $\limsup_{n \to \infty}D(y \upharpoonleft n)=\infty$. Define,
\begin{align*}
D(\sigma) = \begin{cases}
	2^{\lfloor \log(\lvert \sigma \rvert) \rfloor}, &\text{if }\sigma[2^i]=0 \text{ for all }1\leq i \leq \lfloor \log(\lvert \sigma \rvert) \rfloor \\
	0, &\text{otherwise.}
\end{cases}
\end{align*}
It is easy to verify that $D$ is a $\PSPACE$ computable martingale and $\limsup_{n \to \infty}D(y \upharpoonleft n)=\lim_{n \to \infty}2^{\lfloor \log(n) \rfloor}=\infty$. But this does not imply that $y$ is $\PSPACE$ null since $D$ does not \textit{win} on $y$ \textit{fast enough} as in condition \ref{eq:martingalewincondition}. In fact, we now show that no $\PSPACE$ computable martingale satisfying \ref{eq:martingalewincondition} on $y$ exists since $y$ is $\PSPACE$ random.
 \begin{lemma}
\label{lem:pspacerandomexpspacenull}
$y$ is $\PSPACE$ random and $\EXP$ null.	
\end{lemma}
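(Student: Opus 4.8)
The plan is to prove the two assertions separately. That $y$ is $\EXP$ null is immediate: define $\<U_n\>_{n=1}^\infty$ by $U_n = \{z \in \Sigma^\infty : z[2^i] = 0 \text{ for all } 1\le i\le n\}$. Each $U_n$ is a clopen set of measure $2^{-n}$, namely a finite union of cylinders of length $2^n$, and whether a string of length $2^n$ lies in the generating set is decidable in time exponential in $n$. Thus $\<U_n\>_{n=1}^\infty$ is an $\EXP$ test (condition 2 of Definition \ref{def:pspaceopensets} permits cylinders of length $2^{p(n+m)}$, which our length $2^n$ satisfies, while the polynomial bound of a $\PSPACE$ test would not), and since $y[2^i]=0$ for every $i$ we have $y \in \bigcap_n U_n$. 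Hence $y$ is $\EXP$ null.

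For $\PSPACE$ randomness I would argue by contradiction, reducing any $\PSPACE$ test covering $y$ to a Martin-L\"of test covering $x$. Let $\phi : \Sigma^\infty \to \Sigma^\infty$ be the continuous map that overwrites the power-of-two coordinates by $0$, i.e. $\phi(z)[j]=0$ if $j=2^i$ for some $i$ and $\phi(z)[j]=z[j]$ otherwise, so that $y=\phi(x)$. Suppose $\<U_n\>_{n=1}^\infty$ is a $\PSPACE$ test with $y\in\bigcap_n U_n$. I would study the pullbacks $\phi^{-1}(U_m)$. Since $\phi$ is continuous and $\phi^{-1}$ of a cylinder is computable — it is empty when the cylinder prescribes a $1$ at a power-of-two position, and is otherwise a finite union of cylinders — the sequence $\phi^{-1}(U_m)$ is a uniformly computably enumerable sequence of open sets, and $x\in\phi^{-1}(U_m)$ for every $m$ because $\phi(x)=y\in U_m$.

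The heart of the argument is a measure estimate showing $\mu(\phi^{-1}(U_m))\le \poly(m)\,2^{-m}$. The key local fact is that for a cylinder $[\sigma]$ of length $\ell$ whose power-of-two positions are all $0$ one has $\mu(\phi^{-1}([\sigma]))=2^{\floor{\log_2 \ell}}\mu([\sigma])\le \ell\,\mu([\sigma])$, since $\phi^{-1}([\sigma])$ leaves the $\floor{\log_2\ell}$ power-of-two coordinates free (and cylinders prescribing a $1$ at such a position contribute nothing). Writing the $\PSPACE$ approximation $C_j=\bigcup_{k\le j}[S^k_m]$ with $\mu(U_m\setminus C_j)\le 2^{-j}$ and all cylinders of length at most $p(m+j)$, I would telescope $\phi^{-1}(U_m)$ over the increments $C_{j+1}\setminus C_j$, applying the blow-up factor $p(m+j+1)$ to each. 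The subtlety, which I expect to be the main obstacle, is that the blow-up factor grows in $j$ while the measure of each increment decays: splitting the sum at $j=m$ and using the total bound $\sum_j\mu(C_{j+1}\setminus C_j)\le\mu(U_m)\le 2^{-m}$ for the small indices together with the geometric bound $\mu(C_{j+1}\setminus C_j)\le 2^{-j}$ for the large indices balances the two effects and yields $\mu(\phi^{-1}(U_m))\le \poly(m)\,2^{-m}$.

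Finally I would reindex: choosing $m(n)$ (a linear function of $n$ suffices, as $\poly(m)2^{-m}\to 0$) so that $\mu(\phi^{-1}(U_{m(n)}))\le 2^{-n}$, the sequence $V_n=\phi^{-1}(U_{m(n)})$ is a uniformly computably enumerable sequence of open sets with $\mu(V_n)\le 2^{-n}$ and $x\in\bigcap_n V_n$ — that is, a Martin-L\"of test succeeding on $x$, contradicting the choice of $x$ as Martin-L\"of random. Therefore no $\PSPACE$ test covers $y$, so $y$ is $\PSPACE$ random, completing the proof. (One could instead phrase the $\PSPACE$ randomness half via the martingale characterization of Theorem \ref{thm:pspacemartingaletheorem}, but the pullback-of-tests reduction above is cleaner, since the sparse forced bits cost only a polynomial blow-up in measure.)
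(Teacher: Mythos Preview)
Your argument is correct. Both halves match the paper in spirit: the $\EXP$ test you write down is exactly the paper's, and for $\PSPACE$ randomness both you and the paper pull the test back through the ``overwrite the power-of-two positions'' map and use that the blow-up in measure is only polynomial against the $2^{-m}$ decay, yielding a Martin-L\"of test that would capture $x$.

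The implementation differs, though. The paper does not telescope over the approximations $C_j$; instead it invokes Lemma~\ref{lem:finitepspacetest} to replace the $\PSPACE$ test by a sequence $\langle\widehat S_n\rangle$ with $\mu([\widehat S_n])\le 2^{-n}$ and all strings in $\widehat S_n$ of length at most $n^c$. Then the freed-up set $G_n$ has measure at most $2^{\lfloor\log(n^c)\rfloor}\cdot 2^{-n}\le 2^{-n/2}$ in one line, with no splitting or tail estimate needed. Your direct telescoping is a legitimate alternative and avoids citing that lemma, at the cost of the extra bookkeeping (splitting at $j=m$, bounding $\sum_{j>m}p(m{+}j{+}1)2^{-j}=O(\poly(m)\,2^{-m})$). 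Either route works; the paper's is shorter because the polynomial length bound has already been packaged into Lemma~\ref{lem:finitepspacetest}.
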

\begin{proof}[Proof]
	We first construct an $\EXP$ test $\<U_n\>_{n=1}^{\infty}$ such that $y \in \cap_{n=1}^{\infty} U_n$. For any $n$, let $U_n$ be the collection of all $2^n$ length strings $\sigma$ such that $\sigma[i]=0$ for all $i \in \{2^1,2^2,\dots 2^n\}$. Clearly, $\mu(U_n)\leq 2^{-n}$ and $y \in \cap_{n=1}^{\infty} U_n$. It is straightforward to verify that $\<U_n\>_{n=1}^{\infty}$ is an $\EXP$ test.
	
	Now, we show that $y$ is $\PSPACE$ random. Assume that there exists $\<\widehat{S}_n\>_{n=1}^{\infty}$ satisfying the properties in Lemma \ref{lem:finitepspacetest} such that $y \in \cap_{m=1}^{\infty}\cup_{n=m}^{\infty} [\widehat{S}_n]$. Hence, there exists $c \in \N$ such that for any $n$, $\sigma \in \widehat{S}_{n}$ implies $\lvert \sigma \rvert \leq n^c$. Let $G_n$ be the set of all $\alpha \in \Sigma^{n^c}$ such that there exists some $\sigma \in \widehat{S}_{n}$ with $\alpha[i]=\sigma[i]$ for all $i \not\in \{2^j:1\leq j \leq \lfloor \log(n^c) \rfloor\}$. If $y \in \widehat{S}_{n}$ then, by construction it follows that $x \in G_n$. Since $\mu(\widehat{S}_{n})\leq 2^{-n}$, we get $\mu(G_n) \leq 2^{\lfloor \log(n^c) \rfloor}2^{-n}$. Since for all large enough $n$, $n-\lfloor \log(n^c) \rfloor \geq n/2$, we have $\mu(G_n) \leq \sqrt{2}^{-n}$ for large enough $n$. Hence, a Martin-L\"of test  capturing $x$ can be constructed from $\<G_{n}\>_{n=1}^{\infty}$. Since this is a contradiction, our assumption that $\<\widehat{S}_{n}\>_{n=1}^{\infty}$ exists must be wrong. Hence, it follows that $y$ is $\PSPACE$ random.
\end{proof}

\bibliographystyle{plainurl}
\bibliography{fair001,main,random}

\end{document}